\newtheorem{lemma}{Lemma}
\newtheorem{proposition}{Proposition}
\newtheorem{definition}{Definition}
\newtheorem{remark}{Remark}
  \pgfplotsset{compat=newest}
\begin{document}
\begin{frontmatter}
\title{Piggyback on Idle Ride-Sourcing Drivers for Integrated On-Demand and Flexible Intracity Parcel Delivery Services}
\author[hkust_address]{Yang Liu}
\ead{yliuiw@connect.ust.hk}
\author[hkust_address]{Sen Li}
\ead{cesli@ust.hk}

\address[hkust_address]{Department of Civil and Environmental Engineering, The Hong Kong University of Science and Technology, \\Clear Water Bay, Kowloon, Hong Kong}
\begin{abstract}

This paper investigates the spatial pricing and fleet management strategies  for an integrated platform that provides both ride-sourcing services and intracity parcel delivery services over a transportation network utilizing the idle time of ride-sourcing drivers. Specifically, the integrated platform simultaneously offers on-demand ride-sourcing services for passengers and multiple modes of parcel delivery services for customers, including: (1) \emph{on-demand delivery}, where drivers immediately pick up and deliver parcels upon receiving a delivery request; and (2) \emph{flexible delivery}, where drivers can pick up (or drop off) parcels only when they are idle and waiting for the next ride-sourcing request. A continuous-time Markov Chain (CTMC) model is proposed to characterize the status change of drivers under joint movement of passengers and parcels over the transportation network with limited vehicle capacity, where the service quality of ride-sourcing services, on-demand delivery services, and flexible delivery services are rigorously quantified. Building on the CTMC model, incentives for ride-sourcing passengers, delivery customers, drivers, and the platform are captured through an economic equilibrium model, and the optimal spatial pricing decisions of the platform are derived by solving a non-convex profit-maximizing problem. We prove the well-posedness of the model and develop a tailored algorithm to compute the optimal decisions of the platform. Furthermore, we validate the proposed model in a comprehensive case study for San Francisco, demonstrating that joint management of ride-sourcing services and intracity package delivery services can lead to a Pareto improvement that benefits all stakeholders in the integrated ride-sourcing and parcel delivery market under realistic parcel and passenger demand patterns.

\end{abstract}
\begin{keyword}
ride-sourcing platform, parcel delivery services, transportation network, economic equilibrium
\end{keyword}
\end{frontmatter}

\section{Introduction}
The rapid growth of e-commerce worldwide has led to a surge in demand for delivery services over recent years. 
While long-haul trucks have been used for long-distance intercity delivery for decades, logistics companies face significant challenges in short-distance intracity parcel delivery, such as last-mile delivery or same-day delivery services within a city. In an urban context, maintaining a fleet of vans or trucks for delivery is costly due to dispersed customer demand, stringent driver requirements, and high gasoline consumption of vehicles. Logistics companies may opt to maintain a smaller fleet size to save costs, but this may come at the expense of delivery speed. Furthermore, the exponential growth of delivery platforms results in significant negative externalities for city residents, prompting complaints or legal actions. For instance, large delivery vehicles contribute to traffic congestion and accidents within cities due to their size and lack of flexibility \cite{cherry2012truck}. Moreover, upon reaching their destination, trucks must park and wait for all packages to be delivered, potentially obstructing roads, entrances, or violating traffic regulations. Additionally, trucks contribute to air pollution in urban areas \cite{kafle2017design}. Consequently, logistics companies are compelled to explore improved business models to fulfill the growing demand for rapid delivery services \cite{jacobs2019lastmile} and mitigate adverse impacts on urban residents.

%A recently proposed logistics strategy to address intracity package delivery challenges is crowdsourced delivery. This emerging model utilizes networks of local, non-professional couriers to collect parcels and deliver them to customers' doors \cite{Dolan2022news}. The widespread distribution of local couriers across a city enables rapid response to nearby orders and relatively short delivery times. This light-asset business model avoids the need to invest in a dedicated fleet of delivery vehicles and manage professional couriers who may demand higher wages \cite{buldeo2017crowd}. However, establishing a foothold in the market with this new model is not straightforward. To launch successfully, a sufficiently large fleet size must be maintained to ensure service quality and attract new customers, but many gig drivers may not regard parcel delivery as their preferred jobs due to the proliferation of other opportunities in the gig economy. This makes it challenging for crowdsourced delivery platforms to recruit drivers, given the competition from established platforms such as ride-sourcing platforms, food-delivery platforms, grocery-delivery platforms, etc.

To overcome the aforementioned challenges, a co-modality model has been proposed for last-mile delivery or same-day delivery, where ride-sourcing services and parcel delivery services are integrated on the same platform to utilize the same group of gig workers. The success of the model is based on the premise that ride-sourcing and parcel delivery services are highly complementary, and the benefits of integrating these two services can be significant. In particular, co-modality enables the logistic company to access a large number of for-hire drivers that are already working on the existing ride-sourcing platforms rather than attempting to recruit a new fleet of gig drivers. This can avoid competition and reduce investment cost. More importantly,  
the success of the ride-sourcing platform crucially rely on quick response to passenger pickup requests, which relies on a large number of available but {\em idle} for-hire drivers. Consequently, ride-sourcing drivers may face extended periods of inactivity between rides. Since parcel delivery customers tend to be more patient than ride-hailing passengers\footnote{Ride-sourcing companies such as Uber and Lyft provide on-demand services to passengers with short waiting time and travel time, while the delivery services provided by most existing logistic companies such as SF  require one or more days to deliver the parcel.}, parcel delivery services typically exhibit greater flexibility in terms of delivery schedules compared to ride-hailing services. For this reason, idle ride-sourcing drivers can perform package delivery while waiting for the next ride-hailing request, making more efficient use of their time and generating additional income. This can benefit multiple market stakeholders, including the ride-sourcing and parcel delivery platforms, as well as for-hire drivers, without negatively affecting ride-hailing passengers.

Despite the potential benefits, several research gaps must be addressed before the full potential of the integrated business model can be realized. Intuitively, the success of this model hinges significantly on capitalizing on both demand and supply aspects, including effectively leveraging parcel demand flexibility from the demand side and efficiently utilizing ride-sourcing drivers' idle time from the supply side. However, existing co-modality studies largely overlook these factors.  From the demand side, to fully exploit the flexibility in parcel demand, it is crucial to consider pricing incentives that motivates parcel demand users to offer flexibility when using the services. Consequently, a hybrid class of heterogeneously elastic demand should be considered, wherein ride-sourcing passengers are sensitive to price and waiting time, parcel delivery customers are sensitive to price and delivery time, and simultaneously, ride-sourcing passengers exhibit a higher degree of impatience compared to parcel delivery customers. However, to the best of our knowledge, none of the existing works have considered differentiated pricing based on the level of flexibility offered by demand in the co-modality model, nor do they take into account demand elasticity (encompassing both ride-sourcing passengers and parcel delivery customers) with respect to service fare or service quality (i.e., waiting time, delivery time, etc.). From the supply side, to efficiently utilize the idle time of drivers, it is essential to explicitly model the idle time by characterizing the drivers' status changes as they transition between distinct tasks (e.g., cruising, ride-sourcing, parcel delivery, etc.) across the transportation network, so as to pinpoint how the ride-sourcing driver's time is allocated for delivery services, where it is devoted on the transportation network, and how it affects the quality of the ride-sourcing services. We find these crucial modeling components absent in the existing literature. 

This paper aims to fill this research gap by investigating the integration of ride-sourcing and parcel delivery service utilizing the idle time of ride-sourcing drivers, encompassing both on-demand and flexible parcel delivery options. In particular, we consider an integrated platform that provides on-demand ride-sourcing services for passengers and multiple modes of delivery services for customers, including: (1) \emph{on-demand delivery}, where drivers immediately pick up and delivery the parcels upon a delivery request is placed; and (2) \emph{flexible delivery}, where drivers can pick up (or drop off) the parcel only when they are idle and  waiting for the next ride-sourcing request, and can carry parcels along with passengers while delivering the passenger from origin and destination. In this context, the on-demand package delivery services offers immediate responses to customer request, and can be used to deliver urgent and time-sensitive parcel at small scale, whereas flexible delivery services allows the drivers to pick up or drop off the package during their idle time, enabling better integrating with ride-sourcing services and significantly lower delivery fees for customers. To assess the economic impacts of this integrated platform, we develop a mathematical model that characterizes the interactions between ride-sourcing and package delivery services. The performance of the integrated platform is compared with that of standalone ride-sourcing services, and we show that the integrated platform can lead to Pareto improvement in certain conditions.  The major contributions of this paper are summarized as follows:

\begin{itemize}
\item We investigate the differentiated spatial pricing strategies and fleet management strategies for an integrated platform that combines ride-sourcing and parcel delivery services, encompassing both on-demand and flexible parcel delivery options that are priced at different rates. A continuous-time Markov Chain (CTMC) is proposed to characterize the status changes of drivers as they transition between distinct tasks (e.g., cruising, ride-sourcing, parcel delivery, etc.)  over the transportation network under limited vehicle capacity, where the service quality of ride-sourcing services, on-demand delivery services, and flexible delivery services are rigorously quantified. Building on the CTMC model, we capture the incentives of ride-sourcing passengers, delivery customers, drivers, and the platform through an economic equilibrium, and derive the optimal spatial pricing and fleet management decisions of the platform using a non-convex profit-maximizing problem. {\em To the best of our knowledge, the aforementioned co-modality model has not been examined, and the existing literature on co-modality lacks a mathematical model that explicitly addresses parcel demand flexibility (under an elastic demand model) or ride-sourcing driver's idle status (through CTMC)}.

\item To address the non-convex nature of the profit-maximization problem, we delve into the constraints' structure and identify the interdependencies among different endogenous variables in our model. Leveraging these structural insights, we proceed by establishing the conditions necessary for the existence of these endogenous variables given a set of decision variables. Subsequently, we develop a customized algorithm tailored to solving the platform's profit-maximization problem. This algorithm involves a reformulation of the original problem, resulting in a significantly reduced number of decision variables and constraints. Notably, our tailored algorithm exhibits accelerated convergence rates and showcases more robust numerical performance in comparison to standard interior-point methods.

\item We validate the proposed model and algorithm through simulation studies based on a combination of real and synthesized data in San Francisco. We demonstrate through numerical studies that the proposed algorithm can improve the computation time and numerical stability (with respect to initial guess) in solving the profit-maximization problem compared to the benchmark method. We also evaluated the integrated model under distinct parcel demand patterns, including  (1) the first case where the spatial distribution of parcel delivery demand is synthesized based on the real data; and (2) the second case where the spatial distribution of ride-sourcing demand follows the opposite pattern as that of parcel delivery demand. We show that in both cases, the integration of ride-sourcing services and delivery services can bring significant benefits to the platform, passengers, and drivers.
    %\item To test the robustness of our framework, we conduct another set of case studies under a distinct demand pattern: the spatial distribution of potential delivery demand is inversely proportional to that of potential ride-sourcing demand.
    %The results illustrate that the integration can also benefit the platform and drivers.
    %But for the ride-sourcing passengers, the integration can only benefit those who order ride-hailing services in remote areas.
    %Another difference is that in this case, the flexible delivery services are available with destination of all zones (including the remote areas), but are not provided to those who want to send packages from city center.
\end{itemize}

\section{Literature Review}

This section reviews the existing literature that is related to our research.
We categorize the related research into three classes: (1) ride-sourcing services; (2) parcel delivery services; and (3) co-modality where passengers and parcels are jointly transported.
The differences between our work and other related literature will be discussed, and the contributions of our work will be summarized.

\subsection{Ride-Sourcing Services}\label{sec:lite_ride}

A significant body of literature exists on the modeling and optimization of ride-sourcing platforms, with pricing as a recurring topic \cite{sun2019optimal,wang2016pricing,zha2016economic,xu2021generalized}. Current strategic pricing problems for ride-sourcing platforms can be divided into static pricing, dynamic/surge pricing, and spatial pricing. The first class of literature focuses on the optimal pricing strategies under the responses of the market equilibrium considering interactions of prices and various endogenous factors. The common methodology for this kind of problem is to formulate the pricing strategies as an profits-maximizing optimization problem subject to the market equilibrium under distinct considerations. For instance, \cite{bai2019coordinating} investigated optimal pricing strategies for on-demand service platforms, taking into account earning-sensitive providers and time and price-sensitive customers using a queuing model. The study demonstrated that platforms can raise prices in response to increased demand, but the optimal price need not be monotonic with respect to increasing waiting costs or provider capacity. A similar setting was explored by \cite{taylor2018demand}, who also considered uncertainties when examining optimal pricing strategies in on-demand services with waiting-time sensitive customers and independent providers. The research indicated that delay sensitivity increases the optimal price when customer valuation uncertainty is moderate, while reducing the wage when the provider has high opportunity cost uncertainty and a moderate expected opportunity cost. \cite{hu2020price} compare the optimal pricing strategies and platform profits when the platform determines the price and wage independently and when the platform determines the price subject to a fixed commission contract. A lower bound is established to help the platform make decisions on whether implement the fixed-commission contract.
Furthermore, spatial pricing \cite{zhu2021mean,zha2018geometric,chen2021spatial} and dynamic pricing \cite{cachon2017role,banerjee2015pricing,zha2017surge} have been extensively explored. In the spatial dimension, \cite{li2021spatial} analyzed optimal pricing strategies for ride-sourcing platforms on transportation networks with congestion charges. A platform's profit-maximization optimization problem is developed and an algorithm was proposed that approximates the optimal pricing strategy with a tight upper bound. A bi-level optimization problem is formulated by \cite{tang2022bi}, where the upper level captures the optimal prices and wages determined by the platform with spatial heterogeneity, and the lower level captures the trip distribution model. 
In the temporal dimension, \cite{chen2020dynamic} assessed the impact of dynamic pricing on the ride-sourcing market using a dynamic vacant car-passenger meeting model characterizing the effects of short-term fluctuations and disturbances in demand and supply on the waiting number of vacant vehicles and passengers. Additionally, \cite{nourinejad2020ride} introduced a dynamic non-equilibrium model to characterize the time-varying macroscopic states of ride-sourcing markets, developing a model-predictive control approach to investigate the optimal dynamic pricing strategies for ride-sourcing platforms. For a comprehensive review, please refer to \cite{wang2019ridesourcing}.

 Aside from ride-sourcing services for single passenger, the pricing and fleet management strategies of ride-pooling services have also attracted increasing attention. For instance, \cite{zhang2021pool} studied the market equilibrium, pricing strategy and the impacts of regulation policies for the ride-pooling services provided by the  e-hail platforms. Both platform profits maximization case and social welfare maximization case are evaluated and the impacts of the minimum wage and congestion tax on the ride-pooling market are investigated. \cite{bahrami2022optimal} investigates the optimal pricing strategies, vehicle occupancy and fleet size of solo and pool services for on-demand ride-sourcing services with both high-value and low-value passengers. It examined whether providing ride-pooling services is profitable or not under distinct elastic parameters of the meeting function, illustrating that offering both solo and pool services together generates more profit only if the meeting function has decreasing returns-to-scale.  
\cite{ke2020pricing} presented a comparison of optimal pricing strategies in ride-pooling markets under two market equilibrium scenarios: ride-pooling market and non-pooling market. It shows that the decrease in trip fare can attract more passengers in ride-pooling market than in non-pooling market by reducecd detour time.
It is further extended by \cite{ke2022coordinating}, which examined the optimal pricing strategies of a ride-sourcing markets with both ride-pooling and non-pooling services considering traffic congestion externality.  A deductive model was proposed to estimate the pool-matching probability and expected waiting time. It illustated that the increase of background traffic congestion will hurt the profits of the platform, but its impacts on the optimal fleet size is non-monotimic.

Our paper differs from the above literature as we consider the optimization for an integrated platform that offers both ride-sourcing and delivery services. This case differs in nature from  pure ride-sourcing platforms because the integrated platform needs to consider how pricing decisions influence the ride-sourcing and parcel delivery demand, and how these two services are intimately correlated with each other over the transportation network. These elements have not been considered in the aforementioned works.

%Besides pricing, an increasing amount of research focuses on the competition and integration of ride-sourcing platforms. In the context of competition, \cite{ni2021modeling} proposed an origin–destination (OD) competitive network equilibrium model capturing the competition of heterogeneous ride-sourcing platforms, demonstrating that the optimal equilibrium price decreases with the number of ride-sourcing platforms. \cite{zhou2022price} examined the equilibrium state of a ride-sourcing market with platform competition, employing a game-theoretical approach to find the Nash equilibrium of the competitive market and establishing an upper bound on the ratio of social welfare at the social optimum to that at the Nash equilibrium as a measure of market efficiency. In the context of integration, \cite{zhou2021competition} introduced third-party platform integration, where passengers connect to multiple ride-sourcing platforms through an integrator. The study found that platform integration can lead to greater social welfare but may not necessarily yield higher profits for platforms. \cite{bao4001305order} extended the third-party integrator concept, where the matching between passengers and platforms depends on both expected and actual waiting times, demonstrating that the third-party integrator can reduce passengers' trip costs and attract more passengers under certain conditions. Note that all aforementioned works focus on the transportation of people. 

\subsection{Parcel Delivery Services}\label{sec:lite_parcel}
In addition to passenger transportation, a growing body of literature investigates parcel delivery services \cite{taniguchi2000evaluation,quak2009delivering,crainic2009models,cattaruzza2017vehicle}. For example, \cite{taniguchi2004intelligent} examined the dynamic vehicle routing and scheduling problem with time windows for city logistics, incorporating a dynamic traffic simulation providing real-time information. The study demonstrated that the proposed framework reduces costs for couriers and alleviates traffic congestion. \cite{liu2019minimizing} explored vehicle routing problems in urban logistics, aiming to minimize total completion time (including both travel and assembly time). A chance-constrained programming approach was developed to account for uncertainty in assembly time, with its distribution characteristics estimated using a statistical learning method. The research illustrated that the proposed methods can schedule vehicle departure times to decrease waiting time on the road.
The uncertainty in travel time is considered in \cite{gross2019cost}, where an interval travel time (ITT) defined by an upper bound and lower bound of travel times is proposed to capture the variation in travel time. 
The research showed that cost-efficient and reliable vehicle routing can be achieved based on the ITT.
\cite{gayialis2022city} developed an information system that can effectively address the  scheduling and vehicle routing problems of city logistics, with an integration of operational research methods and information technologies that connects research with practice.
\cite{janjevic2020characterizing} proposed a conceptual framework that analyzes the distribution strategies for urban last-mile e-commerce deliveries.
The framework can be used to perform a comprehensive comparison between distribution strategies in mature and emerging markets. \cite{afeche2016optimal} studied the pricing strategies and lead-time menu design for a service provider to maximize the revenue from a hybrid class of patient and impatient customers, and evaluated three different features in the optimal price menu design, including pricing out the middle of the delay cost spectrum, pooling, and strategic delay.
A summary of the opportunities and challenges faced by urban logistics can be found in \cite{savelsbergh201650th}, while a review of relevant methods and algorithms for city logistics problems is presented in \cite{konstantakopoulos2020vehicle}.

Recently, a novel logistics business model utilizing crowd-resourced drivers has drawn considerable attention \cite{archetti2016vehicle,raviv2018crowd,yildiz2021express,dayarian2020crowdshipping,ahamed2021deep}. For instance, \cite{ahamed2021deep} developed a deep reinforcement learning-based framework for solving assignment problems in crowdsourced delivery, presenting a numerical analysis that showcased the efficiency of the deep Q network algorithm and the improvement in solution quality. \cite{mousavi2022stochastic} evaluated a new logistic network where parcels are delivered by occasional couriers during their regular journeys. The study examined the optimal locations for mobile depots and customer assignments in a two-tier last-mile delivery model, considering the uncertainty of crowd-shipper availability. However, it did not account for stochastic delivery demand and could not guarantee delivery services for all customers due to the randomness of crowd-shipper supply. \cite{nieto2022value} addressed these limitations by proposing a crowd-sourced city logistics framework considering both stochastic crowd capacity and stochastic demand, introducing a group of dedicated professional couriers to ensure the fulfillment of all orders. The combination of occasional drivers and dedicated drivers was also studied in \cite{arslan2019crowdsourced}, which tackled the dynamic matching problem for crowdsourced delivery, utilizing the capacity of ad hoc drivers for package delivery services during their regular journeys. Numerical studies demonstrated that employing ad hoc drivers significantly reduces operating costs for the platform compared to using dedicated drivers alone.  However, a limitation of occasional couriers for parcel delivery lies in their high uncertainty and management challenges \cite{savelsbergh2022challenges}.

\subsection{Towards Co-Modality}\label{sec:co-modality}
Another class of research investigates the joint delivery of people and goods using the same fleet. The first stream of study focuses on the use of public transit for the transportation of both people and parcels \cite{fatnassi2015planning,pimentel2018integrated,azcuy2021designing,hu2022mass}. For instance, \cite{masson2017optimization} examined the assignment problem in a two-tiered city logistics system, utilizing buses and personal city freighters to transport passengers and goods. An optimization problem was formulated to minimize the number of vehicles required for delivery, and a solution algorithm based on adaptive large neighborhood search was developed. The study demonstrated that the efficiency of transshipment from buses to city freighters is crucial for the performance of the mixed system. \cite{behiri2018urban} investigated scheduling strategies for parcel delivery services reliant on the passenger rail network, developing a decision-support tool to evaluate the impacts of freight transport on various core components of the rail network. 
\cite{kizil2023public} proposed a last-mile city logistics system built on the public transit network.
A two-stage stochastic programming is formulated and a decomposition branching algorithm is developed to find optimal public transit connections for the installation of parcel lockers. 
It is indicated that the system can improve the efficiency of delivery by increasing demand and reducing required vehicles.
Similarly, \cite{machado2023integration} analyzed the optimal selection of bus services to be adapted for freight transportation considering demand uncertainty, which is formulated as an integer linear programming problem.
The research shows that the last mile operator's capacity has the most significant impact on the required number of adapted bus services.
However, there are limitations to using public transit for parcel delivery, such as fixed timetables and predetermined routes, resulting in a lack of flexibility.

A more flexible solution for intracity delivery is to rely on ride-sourcing services. For example, \cite{qi2018shared} proposed an analytical model and economic analysis for a new logistic planning model where last-mile delivery services are fulfilled by ride-sourcing drivers. The authors provided managerial insights for platform operation and evaluated the environmental impacts of the new business model, showing that crowdsourcing shared mobility is less scalable than a truck-only system but can offer potential economic advantages and reduce emissions. \cite{liu2023economic} investigated managerial strategies for a third-party platform providing both ride-sourcing services and on-demand food-delivery services, introducing a class of hybrid drivers for both services. The study showed that integrating the two services can lead to Pareto improvement, benefiting all stakeholders in the market. However, both studies assume that drivers can either transport people or deliver parcels at one time, not considering people-parcel pooling. On the other hand, a growing body of research focuses on assignment and vehicle routing strategies for ride-sourcing platforms, taking ride-parcel pooling problems into account \cite{ghilas2013integrating,perboli2021simulation,fehn2021ride}. For example, \cite{li2014share} and \cite{chen2016crowddeliver} investigated the optimal order assignment and vehicle routing problem for city logistics networks where people and parcels share taxis. The former formulated a mixed-integer linear programming model to characterize optimal assignment and routing, while the latter proposed a two-phase approach combining offline trajectory mining and online package routing. Both studies demonstrated that their frameworks can successfully fulfill delivery orders within short delivery times through taxi-sharing. \cite{li2016adaptive} proposed a heuristic based on adaptive large neighborhood search to efficiently solve the Share-a-Ride Problem, considering the simultaneous transportation of people and parcels in a taxi network, which was further extended by \cite{li2016share} to account for stochastic travel times and delivery locations. In \cite{manchella2021flexpool}, the authors developed a solution algorithm based on double deep Q-learning networks to determine dispatching policies for passengers and goods on a ride-sharing platform. Simulation results demonstrated that the proposed algorithm leads to better solutions with higher operational efficiency and lower environmental footprints compared to other model-free settings. {\cite{bosse2023dynamic} explored the operational policy to joint on-demand services that combines passenger transportation and goods delivery, where a dedicated fleet is introduced that serves passengers with priority.
A Bayesian optimization approach is employed to find a dynamic priority policy that determines the time-dependent size of the priority fleet.
A similar setting is considered in \cite{fehn2023integrating}, where the authors developed an agent-based simulation approach to investigate the integrated transportation of passengers and parcels with the same fleet of on-demand vehicles assuming that passengers has priority over parcels.
It illustrates that the proposed business model has no negative impacts on the ride-sourcing services and can serve nearly all parcels when the parcel-to-passenger demand ratio is not too large.} In addition to conventional ride-sourcing vehicles, joint parcel and passenger delivery problems have been investigated in the context of autonomous vehicles \cite{beirigo2018integrating}, electric vehicles \cite{lu2022combined}, and a combination of ride-hailing vehicles and electric motorcycles \cite{zhan2023ride}.

Our work differs from the aforementioned works in the following ways: (a)  we consider a novel business model where the flexibility in parcel demand is exploited to enable the integration of flexible parcel delivery services into the existing ride-sourcing systems without incurring extra negative impacts on passengers. This can be realized by only utilizing the idle time of ride-sourcing drivers to pick up and drop off flexible parcel orders while they cruise for new passengers, and transport parcels between zones along with passengers at no additional cost, while prioritizing passengers over parcel in vehicle dispatch. Technically, this requires explicitly modeling the idle time of ride-sourcing drivers by characterizing their status changes as they transition between distinct tasks (e.g., cruising, ride-sourcing, parcel delivery, etc.) across the transportation network, and characterizing how the joint movement of passenger and parcel affects the delivery time, which has not been considered in the existing literature; (b) we consider differentiated pricing strategies to unlock the value of flexibility in parcel demand, enabled by an elastic demand model where ride-sourcing passengers are sensitive to ride fare, waiting time, and trip time, while parcel delivery customers are sensitive to service fare, waiting time, and delivery time. The elastic demand model enables us to derive the optimal spatial pricing for the integrated services, while capturing the heterogeneously elastic demand, where ride-sourcing passengers exhibit a higher degree of impatience compared to parcel delivery customers. Although this is important for evaluating the synergy between ride-sourcing services and parcel delivery services, none of the existing works on integrated services has explicitly considered demand elasticity with respect to service fare or service time.

\section{The Mathematical Model}
Consider a city divided into $M$ zones, and denote the transportation network of the city as a graph $\mathcal{G}(\mathcal{N},\mathcal{A})$, where the set of nodes $\mathcal{N}$ corresponds to zones, and the set of links $\mathcal{A}$ represents the connections between distinct zones that are adjacent to each other.
On the transportation network, an integrated platform offers both ride-sourcing services to transport passengers and delivery services to transport parcels,  using the same fleet of for-hire drivers.  Each ride-sourcing/delivery trip is assigned an origin $i\in\mathcal{N}$ and a destination $j\in\mathcal{N}$. Similar to Uber, Lyft and Didi, ride-sourcing services are on-demand, which has to be served immediately by the closest idle driver\footnote{The main reason for matching the closest idle driver to a passenger is to guarantee that the integration of logistics does not negatively affect passengers. There exists more sophisticated matching strategies where the passenger may be matched to a vehicle with a parcel that has similar OD, but it may also increase the cost for passengers as it leads to longer waiting time before the passenger can be picked up. For simplicity, this paper focuses on matching passengers to the closest idle vehicle.}. However, delivery services can be either on-demand or flexible: in the case of \emph{on-demand delivery},  drivers need to immediately pick up and delivery the goods when the order is dispatched; while in the case of \emph{flexible delivery}, drivers can pick up (or drop off) the goods only when they are idle\footnote{Vehicles not in the process of picking up or delivering passengers or packages that request to be served on-demand.} and close to the origin (or destination), and carry them along with the on-demand orders. Intuitively, on-demand delivery services offer the fastest possible delivery time for customers at a higher cost, whereas flexible services can better utilize the idle time of ride-sourcing drivers and simultaneously transport people and parcel,  creating significant synergy between ride-sourcing and package delivery services and thus substantially reducing the overall operational cost. In this paper, we consider a platform that offers a combination of on-demand ride-sourcing service, on-demand delivery service and flexible delivery service.
In this case, to maintain responsiveness of on-demand services, the flexible delivery service has lower priority, and may be interrupted by ride-sourcing or on-demand delivery services even if the vehicle is already on the way to pickup or drop-off a flexible package.  In addition, when both types of delivery services are available, customers will choose which service to patronage based on the trade-off between delivery time and service fare. The aforementioned dispatching mechanism is illustrated in Figure \ref{fig:process}, and the rest of this section will present a mathematical model that captures the incentives of ride-sourcing passengers, delivery customers, for-hire drivers, and their interactions under the profit-maximizing platform. 

\begin{figure}[htb!]
    \centering
    \includegraphics[width = 0.9\textwidth]{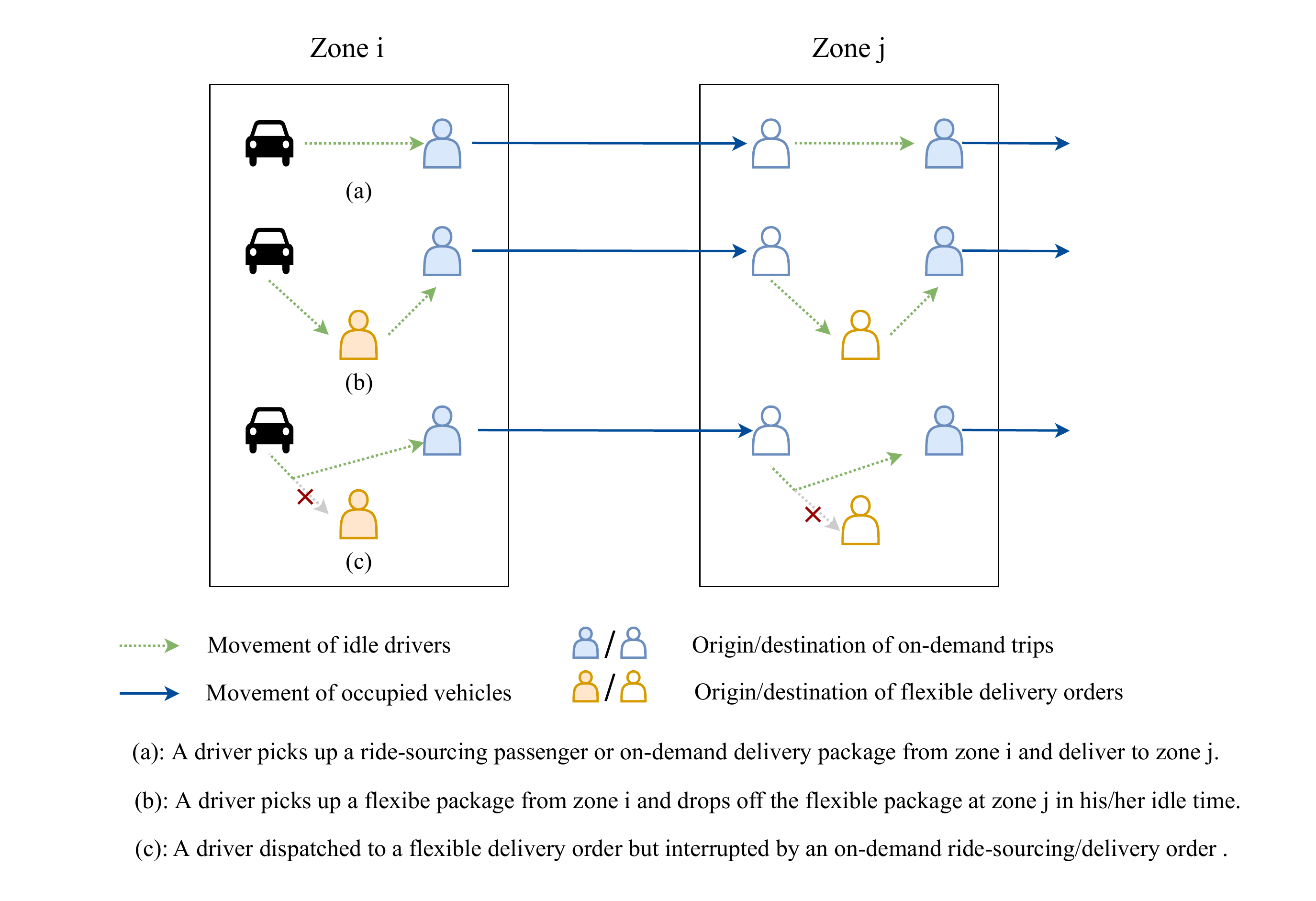}
    \caption{The dispatching of ride orders, flexible delivery orders, and on-demand delivery orders to drivers.}
    \label{fig:process}
\end{figure}

\subsection{Incentives of Ride-Sourcing Passengers}
Passengers choose their travel modes from several available options such as ride-sourcing, public transit and walking, by comparing the respective costs. These options can be regarded as substitute goods for  passengers.
The generalized cost of a passenger traveling from zone $i$ to zone $j$ using ride-sourcing service is defined as a weighted sum of waiting time and trip fare:
\begin{align} \label{eq:demand_p}
    c_{ij}^r = \alpha_r w_i^r+r_i^rt_{ij},
\end{align}
where $w_i^r$ is the average waiting time for passengers in zone $i$ (from sending a request to being picked up), $r_i^r$ is the per-time fare for ride-sourcing trips starting from zone $i$, $t_{ij}$ is the average travel time along the shortest path from zone $i$ to zone $j$, and $\alpha_r$ is the model parameter which represents the passengers' value of time.
%\begin{remark}
%The trip fare is determined by a per-time trip fare $r_i^r$ corresponding to each zone $i$ and the travel time $t_{ij}$, which is consistent with the industry practice. For instance, Uber calculates the trip fare as the sum of a fixed base fare and distance/time fee. We can estimate the distance based on the travel time via multiplying the travel time by the vehicle's average speed. Therefore, the total trip fare can be assumed to be proportional to the travel time. 
%\end{remark}

Let $F_r$ denote the proportion of potential passengers who choose to take ride-sourcing services.
We treat the cost of other alternative travel options as exogenous and assume that $F_r(c)$ only depends on the average generalized cost $c_{ij}^r$.
Let $\lambda_{ij}^{r,0}$ denote the potential arrival rate of passengers traveling from zone $i$ to zone $j$, then the arrival rate of passengers from zone $i$ to zone $j$ choosing ride-sourcing services is determined by:
\begin{align} \label{demand_ride}
    \lambda_{ij}^r = \lambda_{ij}^{r,0}F_r(c_{ij}^r),
\end{align}
where $F_r(c)$ is an decreasing function satisfying $0<F_r(c)<1$ and $\lim_{c\rightarrow\infty}F_r(c) = 0$.
In (\ref{demand_ride}), the arrival rate $\lambda_{ij}^r$ as a function of $c_{ij}^r$ is the demand function for ride-sourcing services, which delineate the relationship between generalized travel cost and travel demand on the platform.

%%%%%%%%%%%%%%%%%%%%%%%%%%%%%%%%%%%%%%%%%%%%%%%%%%%%%%%%%%%%%%%%%%%%%
\subsection{Incentives of Delivery Customers}
Customers choose how to send parcels from origin to destination among several available options, such as on-demand delivery or flexible delivery provided by the integrated platform, or  other outside options\footnote{These options can be treated as substitute goods for customers who want to send packages.}.
The decisions of the customers depend on the respective costs for those options. 
The generalized cost of a customer who chooses to deliver the parcel from zone $i$ to zone $j$ using the flexible delivery services is defined as a weighted sum of waiting time, delivery time, and the delivery fee:
\begin{align}\label{eq:demand_flexible_delivery}
    c_{ij}^{d_f} = \alpha_dw_i^{d_f}+p_d(t_{ij}^{d_f})+r_{ij}^{d_f},
\end{align}
where $w_i^{d_f}$ is the waiting time for flexible delivery customers in zone $i$ (from sending the request to the parcel being picked up),
$t_{ij}^{d_f}$ is the average delivery time for a parcel to be delivered from zone $i$ to zone $j$ by flexible delivery services (from the courier picking up the parcel to dropping off the parcel), $r_{ij}^{d_f}$ is the delivery fare per unit time per parcel from zone $i$ to zone $j$ by flexible delivery services, $\alpha_d$ represents the customer's value-of-time when waiting for pick-up, and $p_d(\cdot)$ is a nonlinear function that represents the customer's disutility associated with the package delivery time\footnote{For on-demand services, the delivery time or trip time $t_{ij}$ is small, e.g., a few minutes, thus we can approximately characterize the disutility of passengers (or customers) associated with  $t_{ij}$ through a linear relation. However, for flexible delivery services, $t_{ij}^{d_f}$ can be much larger (i.e., a couple of hours) and may significantly increase after certain threshold (e.g., one day), thus we choose to use the nonlinear function $p_d(\cdot)$ to characterize the disutility of customers.}. Since delivery customers are less sensitive to the time waiting to be pick-up compared to ride-sourcing passengers, we have $\alpha_d<\alpha_r$.

For on-demand delivery services, the waiting time for customers is the same as the waiting time for passengers (i.e. $w_i^r$ for each zone $i$), and the average delivery time for on-demand delivery services is equivalent to the average travel time along the shortest path from zone $i$ to zone $j$. We assume that the pricing scheme for on-demand delivery services is also the same as ride-sourcing services, then the average generalized costs for on-demand delivery services from zone $i$ to zone $j$ can be defined by
\begin{align} \label{eq:demand_ondemand_delivery}
    c_{ij}^{d_o} = \alpha_dw_i^r+p_d(t_{ij})+r_i^rt_{ij}.
\end{align}

\begin{remark}
We assume that the on-demand ride-sourcing services and on-demand delivery services share the same price. This is similar to Uber and Uber Connect \cite{Uber2023Guide} (an on-demand parcel delivery services provided by Uber), which has similar prices. We believe this is a reasonable assumption because the quality of services provided by the platform and the cost for delivering these services are almost identical  (e.g. same distance, same time and similar fuel consumption) regardless of whether people or parcel is delivered. Such assumption can simplify the notation without loss of essential insights. 
\end{remark}

Let $\lambda_{ij}^{d,0}$ be the  arrival rate of all packages to be delivered from zone $i$ to zone $j$ regardless of the delivery modes, $F_d^k$ be the proportion of packages to be delivered by flexible delivery services ($k=1$) or on-demand delivery services ($k=2$). Customers make their decisions by comparing the costs, thus $F_d^k$ depends on both $c_{ij}^{d_f}$ and $c_{ij}^{d_o}$ when treating the costs for outside options as exogenous.
The arrival rate of flexible delivery customers and on-demand delivery customers from zone $i$ to zone $j$ can be modeled as the following demand functions:
\begin{align}
    \lambda_{ij}^{d_f} &= \lambda_{ij}^{d,0}F_d^1(c_{ij}^{d_f},c_{ij}^{d_o}),\\
    \lambda_{ij}^{d_o} &= \lambda_{ij}^{d,0}F_d^2(c_{ij}^{d_f},c_{ij}^{d_o}),
\end{align}
where $F_d^1(c_{ij}^{d_f},c_{ij}^{d_o})$ is decreasing in $c_{ij}^{d_f}$ and increasing in $c_{ij}^{d_o}$; $F_d^2(c_{ij}^{d_f},c_{ij}^{d_o})$ is decreasing in $c_{ij}^{d_o}$ and increasing in $c_{ij}^{d_f}$. Note that the above demand model includes the well-established logit model as a special case.

%%%%%%%%%%%%%%%%%%%%%%%%%%%%%%%%%%%%%%%%%%%%%%%%%%%%%%%%%%%%%%%%%%%%%%%
\subsection{Incentives of Drivers}

Drivers determine whether to join the ride-sourcing platform or not based on the prospective wages.
Let $q$ denote the average wages for drivers offered by the ride-sourcing platform in the overall transportation network.
Let $F_d$ be the proportion of potential drivers choosing to work for ride-sourcing.
We treat the average wage of other job opportunities for drivers as exogenous and assume that $F_d(q)$ only depends on the average wage of ride-sourcing services.
Let $N_0$ be the number of total drivers in the transportation network, then the total driver supply who choose to work for the ride-sourcing platform is characterized by:
\begin{align}\label{eq:supply}
    N=N_0F_d(q).
\end{align}
where $F_d(q)$ is an increasing and continuous function, and $0\leq F_d(q)\leq 1$.  In (\ref{eq:supply}), the number of drivers as a function of wage $q$ can be viewed as the supply function for the platform, which delineates the relationship between driver supply and driver payment. 

Each ride-sourcing driver has three operating modes:
(a) carrying a passenger or a on-demand parcel,
(b) on the way to pick up the passenger or on-demand parcel, and
(c) cruising with empty seats to look for the next passenger or on-demand parcel\footnote{We do not categorize the drivers carrying flexible parcels or on the way to pick up a flexible parcel into the operating modes, because they  belong to the category of cruising vehicles.}. 
Drivers in the third mode (i.e., cruising for new passengers or on-demand parcel) are treated as idle, and their corresponding waiting time in this status is denoted as $w_i^I$ if they are idle and cruising in zone $i$.
Let $N_i^{c}$ be the average number of drivers carrying a passenger or an on-demand parcel, $N_i^p$ be the average number of drivers on the way to pick up a passenger or an on-demand parcel, and $N_i^I$ be the average number of idle drivers cruising with empty seats. By Little's Law, we have
\begin{align}\label{eq:Nc}
    N_i^c &= \sum_{j=1}^M (\lambda_{ij}^r+\lambda_{ij}^{d_o})t_{ij},\\
    N_i^p &=  w_i^r\sum_{j=1}^M(\lambda_{ij}^r+\lambda_{ij}^{d_o}),\\
    N_i^I &= w_i^I\sum_{j=1}^M (\lambda_{ij}^r+\lambda_{ij}^{d_o}).
\end{align}
where $w_i^I$ is the average waiting time for idle drivers in zone $i$ before receiving the next on-demand ride-sourcing/delivery order. Therefore,  the total number ride-sourcing drivers $N$ should satisfy
\begin{equation}
\label{conservation}
N=\sum_{i=1}^M \sum_{j=1}^M (\lambda_{ij}^r+\lambda_{ij}^{d_o})t_{ij}+\sum_{i=1}^M \sum_{j=1}^M w_i^r (\lambda_{ij}^r+\lambda_{ij}^{d_o})  +  \sum_{i=1}^M \sum_{j=1}^M   w_i^I (\lambda_{ij}^r+\lambda_{ij}^{d_o}).
\end{equation}
Note that vehicles that are on the way to pick up or drop off flexible parcels are already included in the last term of (\ref{conservation}), because they can be interrupted by on-demand services at any moment. 

\subsection{Matching between On-Demand Services and Drivers}
In the demand model for ride-sourcing passengers (\ref{eq:demand_p}) and  on-demand delivery customers (\ref{eq:demand_ondemand_delivery}), the waiting time $w_i^r$ implicitly depends on the number of waiting customers and the number of idle drivers cruising for a new customer on the platform. 
The matching process between on-demand customers and drivers can be characterized by a well-established Cobb-Douglas meeting function \cite{yang2011equilibrium}, which dictates that the matching rate ($m_i^{p-t}$) between on-demand customers and drivers in zone $i$ should satisfy:
\begin{align} \label{eq:matching_ondemand}
    m_i^{p-t} = M_i(N_i^W,N_i^I) = A_i(N_i^W)^p(N_i^I)^q,
\end{align}
where $A_i$ is a scaling parameter that depends on the geometry of the ride-sourcing and the on-demand delivery market, $N_i^W$ is the total number of waiting passengers and on-demand delivery customers in zone $i$, $N_i^I$ is the number of idle drivers in zone $i$, $p$ and $q$ are the elasticity parameters within range $[0,1]$. We note that the matching rate $m_i^{p-t}$ should satisfy that $m_i^{p-t}=\sum_{j=1}^M (\lambda_{ij}^r+\lambda_{ij}^{d_o})$, and by Little's Law, the number of waiting passengers should 
satisfy $N_i^W=w_i^r\sum_{j=1}^M (\lambda_{ij}^r+\lambda_{ij}^{d_o})$.
In the rest of paper, we let $p=1$ and $q = 0.5$, which yields the well-known `square-root law' that  can capture the matching in ride-sourcing market \cite{arnott1996taxi, li2019regulating}.
At the stationary state, the matching rate equals to the arrival rate of ride-sourcing customers and on-demand customers, then we have
\begin{align}
    w_i^r = \frac{L_i}{\sqrt{N_i^I}},
\end{align}
where $L_i$ is a model parameter that depends on the geometry of zone $i$, which can be calibrated for each zone.

%%%%%%%%%%%%%%%%%%%%%%%%%%%%%%%%%%%%%%%%%%%%%%%
\subsection{Matching between Flexible Parcels and Drivers}

In equation (\ref{eq:demand_flexible_delivery}), the customer waiting time for a flexible delivery order to be picked up, denoted as $w_i^{d_f}$, implicitly depends on the matching process between flexible parcels and idle drivers, which further depends on the number of flexible customers, the number of drivers, the successful rate of parcel pick-up and drop-off, and the capacity limit of the vehicles. In this subsection, we model the matching  between flexible parcels and drivers to delineate the dependence of $w_i^{d_f}$ on other endogenous variables.

Since flexible parcel has a lower priority compared to other on-demand service requests,  idle drivers that are dispatched to a flexible delivery order may be interrupted by another on-demand ride-sourcing or package delivery request, and thus fail to pick up or drop off the parcel. In this case, whether the idle driver can successfully pick-up or drop-off a flexible parcel depends on whether the time for drivers to pick up or drop off the parcel is shorter than the waiting time for idle drivers to meet a new ride-sourcing passenger or on-demand delivery customer. Based on this observation, the probability of success can be characterized by the following definition: 

\begin{definition} \label{def:successful_rate}
Assume that the waiting time for drivers to receive an on-demand ride-sourcing/delivery order in zone $i$ ($W_i^I$) is subject to certain distribution with expectation $w_i^{I}$, the waiting time for drivers to receive a flexible delivery order in zone $i$ ($W_i^{dg}$) is subject to certain distribution  with expectation $w_i^{dg}$, the time for a driver in zone $i$ to drop off the flexible parcel ($T_i^g$) is subjective to certain distribution with expectation $t_i^g$, and the time for a driver in zone $i$ to pick up the flexible parcel ($\bar T_i^g$) is subject to certain distribution with expectation $\bar t_i^g$, 
\begin{enumerate}
    \item The successful rate of dropping off a flexible parcel (i.e. the probability for a driver to successfully drop off a flexible parcel) is :
    \begin{align}\label{eq:success_rate_drop}
        p_{i,drop}^{succ} = \mathbb{P}(T_i^g<W_i^I),
    \end{align}
    which captures the probability that the drop-off time is shorter than the driver's cruising time, and can be denoted as $p_{i,drop}^{succ}(t_i^g,w_i^I)$ to emphasize its dependence on $t_i^g$ and $w_i^I$.
    \item The successful rate of picking up a flexible parcel (i.e. the probability for a driver to successfully pick up a flexible parcel) is:
    \begin{align}\label{eq:success_rate_pick}
        p_{i,pick}^{succ} = p_{i,pick}^{succ,w}\cdot p_{i,pick}^{succ,t} =  p_{i,pick}^{succ,w}(\bar t_i^g,w_i^I)\cdot p_{i,pick}^{succ,t}(w_i^{dg},w_i^I)=\mathbb{P}(\bar T_i^g < W_i^I)\cdot\mathbb{P}(W_i^{dg} < W_i^I) ,
    \end{align}
    where $p_{i,pick}^{succ,w}(\bar t_i^g,w_i^I)$ captures the probability that the driver's waiting time to be dispatched to a flexible pick-up order is shorter than the cruising time until matched with a new on-demand order, and $p_{i,pick}^{succ,t}(w_i^{dg},w_i^I)$ captures the probability that the pick-up time is shorter than the driver's cruising time. We have $\lim_{w_i^{dg}\rightarrow \infty}p_{i,pick}^{succ,w} = 0$ and $\lim_{\bar t_i^g\rightarrow \infty}p_{i,pick}^{succ,t} = 0$.
\end{enumerate}
\end{definition}

\begin{remark}
Note that in equation (\ref{eq:success_rate_drop}) and (\ref{eq:success_rate_pick}), $t_i^g$ is exogenous variables that depends on the geometry of the zone\footnote{Note that $T_i^g$ or $\bar{T}_i^g$ are defined as the time needed to pick up or drop off the package if there were no interruptions from on-demand orders, which mainly depend on the distribution of the demand and the zone geometry,  but do not depend on whether the trip is interrupted by on-demand order or not.} and the average speed of the drivers, but $w_i^I$, $w_i^{dg}$ and $\bar t_i^g$ are endogenous variables that depend on the arrival rate of passengers/delivery orders and the number of available idle drivers. These endogenous variables are intimately correlated, and we will characterize their relations hereinafter. 
\end{remark}

Since drivers carry flexible packages in their trunks\footnote{In contrast, on-demand delivery orders can be carried in passenger seats because there is no passenger in the car when the parcel is being delivered. This may raise concerns regarding safety, insurance, and passenger comfort. On the safety side, the platform typically sets some restrictions on the items allowed for delivery services. Regarding insurance, platform can offer insurance coverage for customers who are willing to pay extra insurance rates, which can be treated as exogenous factors. Regarding comfort, most of the on-demand packages are small and are transported while there is no other passengers in the cars, which should not impose significant discomfort to others.}, whether idle drivers can successfully pick up a flexible parcel not only depends on the success rate characterized by Definition \ref{def:successful_rate}, but also depends on the vehicle's carrying capacity, denoted by $C_a$. To evaluate the impact of the capacity limit, we will model the transition of driver's position and number of carried package as a \emph{Continuous-Time Markov Chain} (CTMC). The CTMC has three essential attributes, including the states, the state transition probability matrix, and the amount of time spent in each state. For a transportation network with $M$ zones and the delivery capacity $C_a$, the states of the CTMC is formulated as the set $S_c=\{(z,n)|z\in \{1,\dots,M\}, n\in\{0,\dots,C_a\}\}$, where the first element captures the current zone the driver is in, and the second element captures the number of packages the driver is currently carrying. The transition probability for a driver in state $(z,n)$ to transit to state $(z',n')$ will be denoted as $P^c_{(z,n)(z',n')}$.
The transition from $z$ to $z'$ is determined by the geographical movement of drivers, while the transition from $n$ to $n'$ depends on whether the driver successfully pick up/drop off a flexible package or not.
We first characterize the transition from $z$ to $z'$ by tracking the vehicle movement. For simplicity, we ignore driver repositioning and assume that 
any idle driver cruising in zone $i$ will not leave zone $i$ until he/she is matched to an on-demand order (either a ride-sourcing request or an on-demand delivery request) with a destination other than $i$\footnote{We do not consider repositioning because after the vehicle drops off the passenger, it can actually utilize its idle time in the same zone to pick up or drop off packages, in which case the idle time of the drivers can be effectively utilized even if the driver stays. A more detailed model that characterizes the repositioining decisions of idle drivers in the integrated platform is left for future work. }. 
In this case, for any driver within zone $i$, his/her next destination is contingent on the destination of the dispatched on-demand order, thus the probability for the driver to transit to zone $j$ after dispatched an on-demand order is equivalent to the probability to be matched with an on-demand order with destination $j$.
Let $P_{ij}$ denote the probability that a driver in zone $i$ move to zone $j$ after dispatched an on-demand order, i.e. the driver picks up a passenger/on-demand parcel with origin $i$ and destination $j$, then we have
\begin{align} \label{eq:define_P_ij}
    P_{ij} = \frac{\lambda_{ij}^r+\lambda_{ij}^{d_o}}{\sum_{k=1}^M\left(\lambda_{ik}^r+\lambda_{ik}^{d_o}\right)}.
\end{align}

Next, we characterize the transition from $n$ to $n'$ by tracking the number of packages in the vehicles. Note that the transitions from $n$ to $n'$  can be divided into two categories: (1) transition to a state with the same number of carried flexible packages; and (2) transition to a state with distinct number of carried flexible packages.
Which case it belongs to is determined by whether a driver can successfully pick up or drop off a package or not.
For simplicity, we assume that if a driver arrives at zone $i$ carrying a flexible package with destination $i$, then he/she will prefer to drop off the package first rather than waiting to pick up a new flexible order\footnote{This assumption is practical as in most instances, dropping off can be performed immediately without waiting, while at the same time it can both fulfill the delivery task and release space for future parcels. However, we acknowledge that if a flexible delivery order is just on the way a driver moving to the drop-off location, and the vehicle happens to have free capacity, then it should pick it up first before continuing to the drop-off location.
To capture this possibility of pooling, we can add an additional term to (\ref{eq:p_pick}) defined as the probability of the existence of a pick-up order that is exactly on the shortest route of the flexible order to be dropped off in the same zone, multiplied by $p_{i,pick}^{succ}$. This additional parameter can be calibrated using more detailed parcel demand data, but it would not affect the proposed model or solution framework in this paper.}.
In this case,  the probability for a driver in zone $i$ with $n$ flexible packages to be dispatched to drop off a flexible package is equal to the probability for the driver to carry at least one flexible parcel with destination $i$ (note that otherwise there is no need to drop off), which can be characterized as:
\begin{align}
\label{definition_flex}
    p_{flex,i}^n = 1-\left(1-\frac{\sum_{k=1}^M \lambda_{ki}^{d_f}}{\sum_{j=1}^M\sum_{k=1}^M\lambda_{kj}^{d_f}}\right)^n,
\end{align}
where $\frac{\sum_{k=1}^M \lambda_{ki}^{d_f}}{\sum_{j=1}^M\sum_{k=1}^M\lambda_{kj}^{d_f}}$ denotes the probability that a parcel has a destination in zone $i$. By combining (\ref{definition_flex}) with (\ref{eq:success_rate_drop}) and (\ref{eq:success_rate_pick}), we can derive the probabilities for a idle driver with $n$ flexible packages to successfully pick up the package at zone $i$ (denoted by $p_{i,pick}^n$) or successfully drop off the package at zone $i$ (denoted by $p_{i,drop}^n$), which can be written as:
\begin{align} \label{eq:p_pick}
    p_{i,pick}^n &= \begin{cases}
    p_{i,pick}^{succ} & n=0 \\
    p_{i,pick}^{succ}(1- p_{flex,i}^n)& 0<n<C_a \\
    0 & n=C_a
    \end{cases},\\\label{eq:p_drop}
    p_{i,drop}^n &=\begin{cases}
    0 & n=0 \\
    p_{i,drop}^{succ}p_{flex,i}^n & n>0 \\
    \end{cases}.
\end{align}

Equations (\ref{eq:p_pick}) and (\ref{eq:p_drop}) indicate that when the driver has no flexible parcels on his/her vehicle (i.e., $n=0$), he/she can only be dispatched to pick up new parcels with successful probability $p_{i,pick}^n$. When the driver has flexible vehicles in his/her vehicle, the driver will be dispatched to drop off if there is any parcel on the vehicle with destination $i$. The proportion of drivers that can successfully drop off a parcel is $p_{i,drop}^{succ}p_{flex,i}$, which equals the proportion of drivers that has a parcel to be dropped off times the successful rate to drop off the parcel. When the drivers can do both pick-up or drop-off (i.e., $0<n<C_a$), $p_{flex,i}^n$ of them has a parcel with destination $i$ and is dispatched to drop off, and $1-p_{flex,i}^n$ of them can be dispatched to pick up new parcels. In this case, the proportion of idle drivers that can successfully pick up a new parcel is $p_{i,pick}^{succ}(1- p_{flex,i}^n)$. When the driver has already reached the carrying capacity (i.e., $n=C_a$), he/she can only be dispatched to drop off and $p_{i,pick}^n$ = 0.

So far, we have discussed the transition from $z$ to $z'$, and the transition from $n$ to $n'$, respectively. By integrating these two, we can formulate the transition process from state $(z,n)$ to $(z',n')$. In particular, we have the following two cases:
\begin{itemize}
    \item If $n'\neq n$, it indicates that the driver successfully pick up or drop off a flexible package before being matched with an on-demand order and $z'=z$.
    Therefore, $n'$ must be equal to $n+1$ or $n-1$, which represents that the driver successfully pick up a flexible package or successfully drop off a flexible package, respectively. From (\ref{eq:p_pick}) and (\ref{eq:p_drop}), we have:
    \begin{align} 
        P_{(z,n)(z,n+1)}^c &= p_{z,pick}^n,\quad n = 0,\dots,C_a-1;\\
        P_{(z,n)(z,n-1)}^c &= p_{z,drop}^n,\quad n = 1,\dots,C_a.
    \end{align}

    \item If $n' = n$, then the driver must be dispatched to serve an  on-demand order before he/she picks up or drops off a flexible package.
    The probability for a state $(z,n)$ to transit to state $(z',n)$ is:
    \begin{align}
        P_{(z,n)(z',n)}^c=(1-p_{z,pick}^n-p_{z,drop}^n)P_{zz'},
    \end{align}
    where $1-p_{z,pick}^n-p_{z,drop}^n$ is the probability that a driver was interrupted by an on-demand order when he/she is on the way to pick up or drop off a flexible package, and $P_{zz'}$ is the probability that a driver in zone $z$ is dispatched an on-demand order with destination $z'$, as defined in (\ref{eq:define_P_ij}).
\end{itemize}
Overall, the transition probabilities of the CTMC can be summarized as:
\begin{align}\label{eq:tran_prob_CTMC}
    P_{(z,n)(z',n')}^c=\begin{cases}
    p_{z,pick}^n & z'=z,\ n = 0,\dots,C_a-1,\ n'=n+1 \\
    p_{z,drop}^n & z' = z,\ n = 1,\dots,C_a,\ n' = n-1 \\
    (1-p_{z,pick}^n-p_{z,drop}^n)P_{zz'} & n' = n\\
    0 & \text{otherwise}
    \end{cases}.
\end{align}
Taking a transportation network with 2 zones and $C_a=2$ as an example, then the corresponding CTMC is shown in Figure \ref{fig:CTMC}.

\begin{figure}[!htb]
    \centering
    \includegraphics[width=0.3\textwidth]{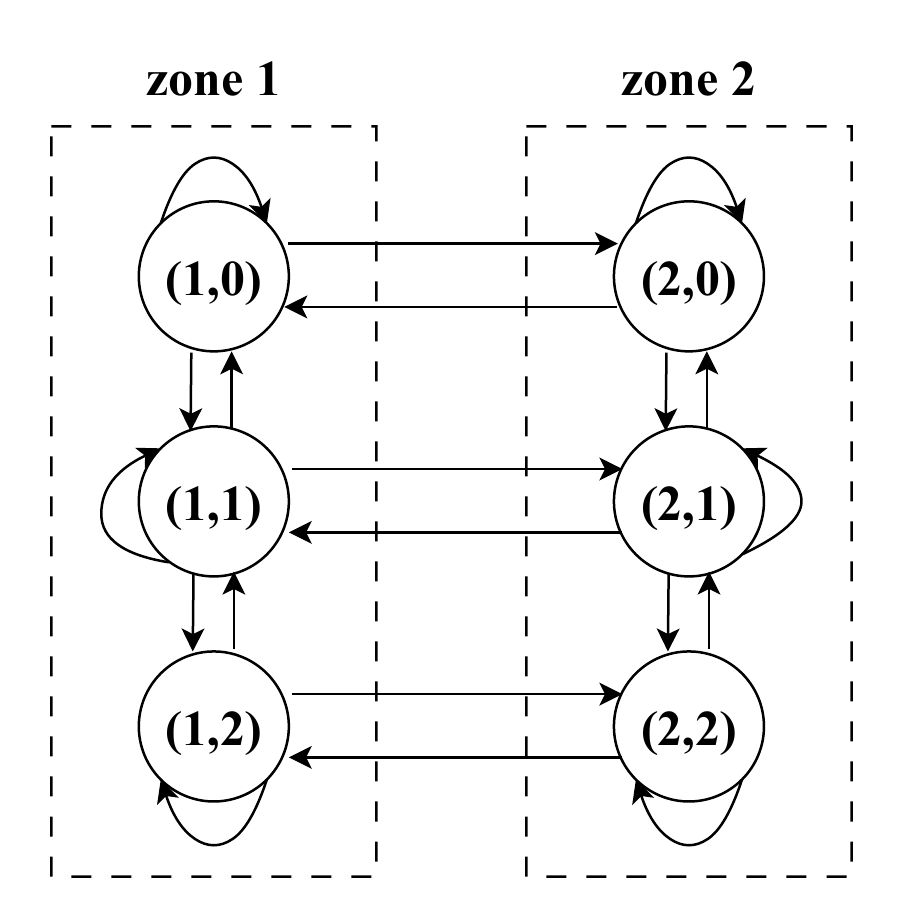}
    \caption{The continuous-time Markov chain for a transportation network with $M=2$ and $C_a=2$.}
    \label{fig:CTMC}
\end{figure}

\begin{remark}
    The proposed model can be  extended to considering ``bundled'' parcel orders where drivers pick up more than one parcels at the origin with different destinations.  The matching function between drivers and bundled parcels remains essentially the same as the matching function for single parcels, with the only difference being that bundled parcels occupy more in-vehicle space. When considering bundled parcels, a driver who successfully picks up a flexible delivery can transition to any state within the same zone that has a larger number of carried parcels (e.g., the driver in state $(1,0)$ in Figure \ref{fig:CTMC} can directly transition to $(1,1)$ and $(1,2)$). For sake of simplification, we only consider solo parcel pick up in this paper.
\end{remark}

Next, we will characterize the average amount of time a driver spends in each state, so that the limiting probability of the CTMC can be derived. 
For a driver in state $(z,n)$ who transits to the same zone with different number of carried flexible packages, the amount of time spent in $(z,n)$ equals to the time period from the driver reaches  $(z,n)$ to he/she successfully pick up or drop off a flexible package in zone $z$ before being matched to an on-demand order.
For a driver who transits to a state with the same number of carried packages, the transition is triggered by being matched to an on-demand order. Therefore,  the time spent in $(z,n)$ equals to the idle time for the driver in zone $z$. 
Note that the CTMC requires the amount of time a driver in a state is independent of which state is next visited, thus we approximately estimate the average time an idle driver spend in state $(z,n)$ by assuming that it is exponentially distributed with rate $\nu_{(z,n)} = 1/\bar t_{(z,n)}$, where $\bar t_{(z,n)}$ denotes the average time spend in state $i$  while drivers are idle and avaiable for the next on-demand order. In this case, for the state $(z,n)$, we have
\begin{align}\label{eq:spend_time_CTMC}
    \bar t_{(z,n)}=\dfrac{1}{\nu_{(z,n)}} = P^c_{(z,n)(z,n-1)}t_z^g+ P^c_{(z,n)(z,n+1)}(w_z^{dg}+\bar t_z^g) 
    +\sum_{z'\in \{1,\dots,M\}} P^c_{(z,n)(z',n)}w_z^I,
\end{align}
where $P_{(z,0)(z,-1)}^c$ and $P_{(z,C_a)(z,C_a+1)}^c$ are defined to be zero, and the three terms in the right-hand side of (\ref{eq:spend_time_CTMC}) represent the average time spend in state $(z,n)$ if the vehicle successfully drops off a flexible parcel, successfully picks up a flexible parcel, or exists zone $z$ without performing any flexible delivery order before entering another state, respectively. 
Given $\bar t_{(z,n)}$, we can derive the limiting probabilities for the CTMC, denoted as $P_{(z,n)}^c=\lim_{t\rightarrow \infty} P_{(z',n')(z,n)}^c(t)$, where $(z',n')$ is any state and $P_{(z',n')(z,n)}^c(t)$ denotes the probability that the CTMC presently in state $(z',n')$ will be in state $(z,n)$ after an additional time $t$.
In this case, the long-run proportion of time the process is in state $(z,n)$ is give by the following proposition:
\begin{proposition}\label{prop:limit_prob}
$\{P^c_{(z,n)}\}$ are the unique nonnegative solution of
\begin{align} \label{eq:limit_prob}
\nu_{(z,n)}P_{(z,n)}^c=\sum_{(z',n')}\nu_{(z',n')}P_{(z',n')}^cP^c_{(z',n')(z,n)}\quad \text{and}\quad \sum_{(z,n)} P_{(z,n)}^c = 1,
\end{align}
and $P_{(z,n)}^c$ also equals the long-run proportion of time the CTMC is in state $(z,n)$.
\end{proposition}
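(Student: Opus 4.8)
The plan is to recognize Proposition \ref{prop:limit_prob} as a direct instance of the ergodic theorem for finite continuous-time Markov chains, so the real work lies in verifying the chain's regularity hypotheses rather than in any novel computation. First I would set up the infinitesimal generator: since a driver holds state $(z,n)$ for an exponentially distributed time with rate $\nu_{(z,n)}$ and then jumps according to the embedded transition matrix $\{P^c_{(z',n')(z,n)}\}$ from (\ref{eq:tran_prob_CTMC}), the transition rate from $(z',n')$ to $(z,n)$ is $\nu_{(z',n')}P^c_{(z',n')(z,n)}$. I would then observe that the first equation in (\ref{eq:limit_prob}) is precisely the global balance condition $\pi Q=0$ written componentwise (probability flux out of $(z,n)$ equals flux in), and the second is the normalization $\sum P^c_{(z,n)}=1$; a short check confirms that the diagonal term $(z',n')=(z,n)$ cancels correctly whether or not self-transitions occur, so no separate treatment of self-loops is needed.

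The key step is to establish that the chain is irreducible on the finite state space $S_c$, i.e.\ that every state $(z',n')$ is reachable from every state $(z,n)$ with positive probability, which I would show by composing two elementary moves. The zone coordinate can be driven from $z$ to any $z'$ because the probability $P_{zz'}$ in (\ref{eq:define_P_ij}) is positive whenever the on-demand demand $\lambda_{zz'}^r+\lambda_{zz'}^{d_o}$ is positive, so under the standing assumption that the demand pattern keeps the zone network strongly connected, zone transitions can realize any path. The parcel coordinate can be incremented via a successful pick-up with probability $p^n_{z,pick}>0$ for $n<C_a$ and decremented via a successful drop-off with probability $p^n_{z,drop}>0$ for $n>0$, as given by (\ref{eq:p_pick})--(\ref{eq:p_drop}), provided the success rates in Definition \ref{def:successful_rate} and the destination probabilities are strictly positive; these hold whenever the idle waiting times $w_i^I$ are finite and the flexible demand is nondegenerate. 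Interleaving the two kinds of moves then shows that any target $(z',n')$ with $n'\in\{0,\dots,C_a\}$ is attainable, so all states communicate.

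Once irreducibility is in hand, finiteness of $S_c$ immediately yields positive recurrence and non-explosiveness, so the standard theorem for finite irreducible CTMCs applies: the limiting probabilities $P^c_{(z,n)}=\lim_{t\to\infty}P^c_{(z',n')(z,n)}(t)$ exist independently of the initial state, form the unique normalized nonnegative solution of the balance equations (\ref{eq:limit_prob}), and by the ergodic (time-average) theorem coincide with the long-run fraction of time the chain spends in $(z,n)$. I would close by noting, as a sanity check and an alternative route, that $\nu_{(z,n)}P^c_{(z,n)}$ is proportional to the stationary distribution of the embedded jump chain, which makes the relation $P^c_{(z,n)}\propto \pi_{(z,n)}/\nu_{(z,n)}$ transparent. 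The main obstacle is the irreducibility verification: it is the only place where assumptions on the data (strong connectivity of the demand-induced zone graph and strict positivity of the pick-up and drop-off success rates) must be invoked explicitly, and care is needed at the boundary counts $n=0$ and $n=C_a$, where one of the two parcel moves is unavailable and one must route around the blocked direction.
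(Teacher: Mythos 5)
Your proposal is correct and takes essentially the same route as the paper: the paper gives no original argument for this proposition, but simply cites the standard limiting-probability theorem for continuous-time Markov chains in Ross (1996, p.~251), which is precisely the theorem you invoke after setting up the generator and balance equations. Your added verification of the hypotheses --- finiteness of $S_c$, irreducibility via the positive demand-induced zone transitions $P_{zz'}$ and the positive pick-up/drop-off probabilities at the boundary counts $n=0$ and $n=C_a$, and the check that self-transitions do not disturb the balance equations --- is exactly the hypothesis-checking the paper leaves implicit, and it is sound.
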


The proof of Proposition \ref{prop:limit_prob} can be found in \cite[p.~251]{ross1996stochastic}. It provides the limiting probability of the CTMC, which is essential for characterizing the average time for drivers to pick up a flexible order in zone $i$, denoted as $\bar t_i^g$, the average driver waiting time for receiving a flexible order in zone $i$, denoted as $w_i^{dg}$, and the customer waiting time for a flexible delivery order to be picked up in zone $i$, denoted as $w_i^{d_f}$.

To characterize  $\bar t_i^g$, we first note that $\bar t_i^g$ depends on the average number of drivers that are available to be dispatched a pick-up order, denoted by $\bar N_i^I$. Among all the idle drivers in zone $i$, there are two kinds of drivers that can not be matched with a flexible pick-up order: (i) the drivers on the way to drop off a flexible package; and (ii) the drivers that are full of flexible packages and can not pick up more due to the capacity limitation.
Thus the number of the available idle drivers for flexible package pick-up can be formulated as:
\begin{align} \label{eq:def_bar_N_I}
    \bar N_i^I = N_i^I-t_i^g\sum_j \lambda_{ji}^{d_f}-N_{(i,C_a)}^I(1-p_{flex,i}^{C_a}),
\end{align}
where the $t_i^g\sum_j \lambda_{ji}^{d_f}$ captures  the average number of drivers that on the way to drop off a flexible package; $N_{(i,C_a)}^I(1-p_{flex,i}^{C_a})$ captures the average number of idle drivers that are full of flexible packages and can not be dispatched for pick-up;
$N_{(i,C_a)}^I$ denotes the average number of idle drivers for state $(i,C_a)$. Note that for any state $(z,n)$, the average number of idle drivers in $(z,n)$, denoted as $N_{(z,n)}^I$, can be obtained from the limiting probability of the CTMC:
\begin{align}
    N_{(z,n)}^I=N_z^I\frac{P_{(z,n)}^c}{\sum_{n'} P_{(z,n')}^c},
\end{align}
where $\frac{P_{(z,n)}^c}{\sum_{n'} P_{(z,n')}^c}$ is the conditional probability for an idle driver to carry $n$ flexible packages when he/she is in zone $z$. Similar to (\ref{eq:matching_ondemand}), $\bar t_i^g$ can be characterized by Cobb-Douglas-type matching function. Let $p = 1$ and $q = 0.5$, at the stationary state the average time for drivers to pick up a flexible order in zone $i$ can be derived as::
\begin{align} \label{eq:def_bar_t_i_g}
    \bar t_i^g = \frac{L_i}{\sqrt{\bar N_i^I}}
\end{align}

%Since the successful rates for distinct drivers in the same zone with the same number of carried packages are independent and identical, the number of drivers that successfully pick up a flexible parcel in zone $z$ with carried package $n$ subject to a Binomial distribution $B(N_{(z,n)}^I,p_{z,pick}^n)$. Therefore, the average number of idle drivers who can successfully pick up a flexible package, denoted by $N_i^{Ig}$, is given by
%\begin{align}\label{def:NiIgs}
%    N_i^{Ig} = \sum_nN_{(i,n)}^Ip_{i,pick}^n=\sum_nN_i^I\frac{P_{(i,n)}^c}{\sum_{n'} P_{(i,n')}^c}p_{i,pick}^n,
%\end{align}
%where $p_{i,pick}^n$ is the probability for a driver in zone $i$ with $n$ carried flexible packages to successfully pick up a new flexible package, defined in (\ref{eq:p_pick}).
%Then based on Little's law, we have
%\begin{align}
%    N_i^{Ig} = w_i^{dg}\sum_{j=1}^M\lambda_{ij}^{g1}.
%\end{align}

To characterize $w_i^{dg}$, we note that $w_i^{dg}$ and $\bar t_i^g$ can be related by the Little's Law through the arrival rate of parcels to be picked up, which can be denoted as $\bar \lambda_{ij}^{d_f}$. However, note that $\bar \lambda_{ij}^{d_f}$ not only consists of  newly arrived flexible parcel, but also consists of flexible orders that were failed to be picked up and in all previous rounds. 
Since the probability for a parcel pickup in zone $i$ to be failed is $1-p_{i,pick}^{succ}$,  the overall arrival rate of parcels to be picked up in zone $i$ and delivered to zone $j$  can be written as
\begin{align}
    \bar \lambda_{ij}^{d_f} =\lim_{n\rightarrow\infty} \lambda_{ij}^{d_f}\left(1+(1-p_{i,pick}^{succ})+(1-p_{i,pick}^{succ})^2+\cdots+(1-p_{i,pick}^{succ})^n\right) = \frac{\lambda_{ij}^{d_f}}{p_{i,pick}^{succ}},
\end{align}
where the summation term corresponds to the accumulation of failed orders over time. Therefore, based on Little's Law, we have
\begin{align} \label{eq:def_bar_N_I_2}
    \bar N_i^I=\frac{w_i^{dg}\sum_{j=1}^M \lambda_{ij}^{d_f}}{p_{i,pick}^{succ}}
\end{align}

To characterize $w_i^{d_f}$, 
we note that since the successful rates for distinct drivers in the same zone with the same number of carried packages are independent and identical, the number of drivers that are carrying $n$ flexible parcels and  can still successfully pick up a flexible parcel in zone $z$ is subject to a Binomial distribution $B(N_{(z,n)}^I,p_{z,pick}^n)$. Therefore, the average number of idle drivers who can successfully pick up a flexible package, denoted by $N_i^{Ig}$, is given by
\begin{align}\label{def:NiIgs}
    N_i^{Ig} = \sum_nN_{(i,n)}^Ip_{i,pick}^n=\sum_nN_i^I\frac{P_{(i,n)}^c}{\sum_{n'} P_{(i,n')}^c}p_{i,pick}^n,
\end{align}
where $p_{i,pick}^n$ is the probability for a driver in zone $i$ with $n$ carried flexible packages to successfully pick up a new flexible package, defined in (\ref{eq:p_pick}).
Similar to (\ref{eq:matching_ondemand}),  the matching rate of flexible delivery services can be characterized by the Cobb-Douglas-type matching function.
Let $p = 1$ and $q = 0.5$, at the stationary state the average waiting time for flexible delivery customers in zone $i$ can be derived as:
\begin{align}
    w_i^{d_f} = \frac{L_i}{\sqrt{N_i^{Ig}}},
\end{align}

\subsection{Delivery Time for Flexible Delivery Customers}
In equation (\ref{eq:demand_flexible_delivery}), the delivery time for flexible delivery services, denoted as $t_{ij}^{d_f}$, is an endogenous variable that depends on the arrival rates of on-demand services, the travel time between different zones, the cruising time of idle drivers, and the probability of success for dropping off flexible parcels. In this subsection, we characterize the dependence of $t_{ij}^{d_f}$ on these factors. 

Note that a flexible order may be successfully dropped off to the destination zone after several trials: if the driver fails to drop off the parcel and takes another on-demand order, he/she will try to drop it off when he/she visits zone $i$ again\footnote{We comment that the possibility that zone $i$ is not visited again is negligibly small. Let $p_j^i$ be the probability that a driver in zone $j$ moves to zone $i$, then the probability for a driver in zone $k$ to never visit zone $i$ is
        $\bar p_i = \lim_{n\rightarrow \infty} \sum_{j_1\neq i}p_k^{j_1}\sum_{j_2\neq i}p_{j_1}^{j_2}\sum_{j_3\neq i}p_{j_2}^{j_3}\cdots\sum_{j_n\neq i}p_{j_{n-1}}^{j_n} = 0$.}. In this case, the average time it takes to successfully deliver a flexible order $t_{ij}^{d_f}$ crucially depends on how much time it takes for a drivers to travel from zone $i$ to zone $j$, and if the first trial of delivery is unsuccessful, how much time it takes for the driver to return to $j$ and try again. To capture this, we let $T_{ij}$ denote the time for drivers to travel from zone $i$ and reach zone $j$ ($j\neq i$) for the first time.
Let $T_{ii}$ denote the time for drivers to travel from zone $i$ and go back to zone $i$ for the first time.
Let $\mathbb{E}[\cdot]$ denote the expectation of a random variable, e.g., both $T_{ij}$ and $T_{ii}$ are random variables with expectations $\mathbb{E}[T_{ij}]$ and $\mathbb{E}[T_{jj}]$, respectively.
The following lemma gives the average delivery time for a flexible parcel to be delivered from zone $i$ to zone $j$:

\begin{lemma} \label{lemma:delivery_time}
The average delivery time for a flexible parcel with origin $i$ and destination $j$ is:

\begin{align} \label{eq:t_ij^g}
    t_{ij}^{d_f} = \mathbb{E}[T_{ij}] + \frac{1-p_{j,drop}^{succ}}{p_{j,drop}^{succ}}\mathbb{E}[T_{jj}].
\end{align}
\end{lemma}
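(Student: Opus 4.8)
The plan is to decompose the journey of a flexible parcel from pickup at zone $i$ to successful drop-off at zone $j$ into two conceptually distinct phases: an initial first-passage phase and a subsequent retry loop. In the first phase, the driver carrying the parcel completes on-demand orders until the vehicle reaches the destination zone $j$ for the first time; by definition this takes time $T_{ij}$, contributing $\mathbb{E}[T_{ij}]$ in expectation. Upon each visit to $j$ the driver attempts to drop off the parcel, and by Definition \ref{def:successful_rate} this attempt succeeds with probability $p_{j,drop}^{succ}$ (the event that the drop-off time is shorter than the driver's idle cruising time before the next on-demand match). If the attempt fails, the driver is interrupted by an on-demand order, leaves zone $j$, and does not return until a first-return time $T_{jj}$ has elapsed, at which point the attempt is repeated.

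First I would model the number of failed attempts preceding the first success. Since each visit to $j$ succeeds independently with probability $p_{j,drop}^{succ}$, the number of failures $K$ follows a geometric law, $\mathbb{P}(K=k)=(1-p_{j,drop}^{succ})^{k}p_{j,drop}^{succ}$ for $k=0,1,2,\dots$, with mean $\mathbb{E}[K]=(1-p_{j,drop}^{succ})/p_{j,drop}^{succ}$. The total delivery time can then be written as $T_{ij}+\sum_{m=1}^{K}T_{jj}^{(m)}$, where $T_{jj}^{(m)}$ is the return time incurred by the $m$-th failed attempt. Treating the successive return times as independent copies of $T_{jj}$, independent of the success/failure indicators, I would apply Wald's identity (equivalently, condition on $K=k$ and use the law of total expectation) to obtain $\mathbb{E}\big[\sum_{m=1}^{K}T_{jj}^{(m)}\big]=\mathbb{E}[K]\,\mathbb{E}[T_{jj}]$. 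Combining the two phases yields exactly $t_{ij}^{d_f}=\mathbb{E}[T_{ij}]+\frac{1-p_{j,drop}^{succ}}{p_{j,drop}^{succ}}\mathbb{E}[T_{jj}]$, as claimed.

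The main obstacle is the rigorous justification of the probabilistic structure that makes the geometric/Wald argument valid. Specifically, I must argue that the retry loop is well-defined: the driver almost surely returns to $j$ after any failure, so that each $T_{jj}$ and hence $K$ is almost surely finite. This follows from the recurrence argument already noted in the footnote, which shows that the probability of never revisiting a zone is zero under the transition probabilities $P_{ij}$. I also need a memorylessness property at $j$, namely that each fresh visit resets to the same drop-off success probability $p_{j,drop}^{succ}$ and the same return-time distribution, so that the attempts form i.i.d. trials; this relies on the stationary CTMC description and the independence of successive interruptions. A secondary modeling point worth stating explicitly is that the drop-off service time itself is neglected (absorbed as negligible relative to travel and return times), which is why no $t_j^g$ term appears in the final expression. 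Once these independence and recurrence properties are granted, the remaining computation is the elementary geometric-sum expectation above.
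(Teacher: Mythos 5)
Your proposal is correct and takes essentially the same route as the paper: both decompose the delivery time into the first-passage time $\mathbb{E}[T_{ij}]$ plus a geometrically distributed number of failed drop-off attempts, each costing an independent return time $T_{jj}$, using the independence of successive attempts with success probability $p_{j,drop}^{succ}$. The only difference is computational—the paper evaluates the resulting arithmetic–geometric series $\sum_n n(1-p_{j,drop}^{succ})^n p_{j,drop}^{succ}$ explicitly via its partial-sum formula, while you obtain the same mean $(1-p_{j,drop}^{succ})/p_{j,drop}^{succ}$ by Wald's identity; your added remarks on recurrence and almost-sure finiteness of the retry count make explicit what the paper leaves implicit.
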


\begin{proof}
The probability to successfully drop off the package at zone $i$ is independent each time the driver arrives at zone $i$.
For instance, the driver drops off the parcel successfully when he/she reaches zone $j$ for the first time with probability $p_{j,drop}$, and successfully drops off the parcel when he/she arrives at zone $j$ for the second time with the probability $(1-p_{j,drop})p_{j,drop}$.
Therefore, we have
\begin{align}
\label{expression_d_f}
    t_{ij}^{d_f} =& \mathbb{E}\left[\lim_{n\rightarrow\infty}\quad p_{j,drop}^{succ}T_{ij}+(1-p_{j,drop}^{succ})p_{j,drop}^{succ}(T_{ij}+T_{jj})+\cdots+(1-p_{j,drop}^{succ})^np_{j,drop}^{succ}(T_{ij}+nT_{jj})\right] \\
    &= \mathbb{E}[T_{ij}]+\mathbb{E}[T_{jj}]p_{j,drop}^{succ}\cdot\lim_{n\rightarrow \infty}\left((1-p_{j,drop}^{succ})+2(1-p_{j,drop}^{succ})^2+\cdots+n(1-p_{j,drop}^{succ})^n\right)  \\
    &= \mathbb{E}[T_{ij}]+\mathbb{E}[T_{jj}]p_{j,drop}^{succ}\cdot\lim_{n\rightarrow \infty} \frac{(1-p_{j,drop}^{succ})-(n+1)(1-p_{j,drop}^{succ})^{n+1}+n(1-p_{j,drop}^{succ})^{n+2}}{{p_{j,drop}^{succ}}^2} \\
    &= \mathbb{E}[T_{ij}] + \frac{1-p_{j,drop}^{succ}}{p_{j,drop}^{succ}}\mathbb{E}[T_{jj}],
\end{align}
which completes the proof.
\end{proof}
Lemma \ref{lemma:delivery_time} indicates that the average delivery time for flexible parcels depends on the expected time the driver spends to first arrive at the destination, and the expected time the driver spends to go back to the destination again. Next we show how to derive the expected value of $T_{ij}$ ($i\neq j$) and $T_{jj}$ for any zone $i,j$.

The movements of drivers only depend on the current zone he/she stays in, but are not related to his/her previous trajectories. Therefore, the trajectories of drivers can be modeled as a Markov chain.
The state of the Markov chain is the position of the drivers. A transition from state $i$ to state $j$ indicates that the driver in zone $i$ is matched to an on-demand order and moves to the destination $j$, $i,j = 1,\dots,M$.
The probability for a driver in state $i$ to transit to state $j$ is equivalent to the probability to be matched with an on-demand order with destination $j$, i.e., $P_{ij}$ defined in (\ref{eq:define_P_ij}).
Let $\mathbf{P}$ be the matrix of one-step transition probabilities for drivers, then we have
\begin{align}
    \mathbf{P} = 
    \begin{bmatrix}
    P_{11} & P_{12} & \cdots & P_{1M} \\
    P_{21} & P_{22} & \cdots & P_{2M} \\
    \vdots & \vdots & \ddots & \vdots \\
    P_{M1} & P_{M2} & \cdots & P_{MM} 
    \end{bmatrix}\label{transition_matrix}
\end{align}
where $\sum_{j=1}^M P_{ij} = 1$.
Assume that a driver transits to state $i$ when he/she finishes an on-demand order with destination of zone $i$.
When a driver arrives at a zone and drops off the passenger or on-demand parcel, he/she first cruises for the next passenger or on-demand parcel for an average time of $w_i^Ir$, and then travel to the destination for an average time of $t_{ij}$. Therefore  the average time for a driver to transit from state $i$ to state $j$, denoted by $s_{ij}$, is
\begin{align}
\label{definiction_s}
    s_{ij} = w_i^I+t_{ij}.
\end{align}
Given $s_{ij}$,  the probability distributions of $T_{ij}$ when $i\neq j$ can be written as:
    \begin{align} \label{eq:def_T_ij}
    T_{ij} =
    \begin{cases}
    s_{i1}+ T_{1j} & \text{with probability }P_{i1}\\
    \cdots & \\
    s_{ij} & \text{with probability }P_{ij}\\
    \cdots & \\
    s_{iM}+ T_{Mj} &  \text{with probability }P_{iM}
    \end{cases},
    \end{align}
The derivation of (\ref{eq:def_T_ij}) is straightforward: if a driver is initially in zone $i$, then the probability of traveling to zone $k$ in the next trip is $P_{ik}$. If the driver indeed travels to zone $k$, it will take an average time of $s_{ij}$ before he/she arrives at zone $k$, plus an average time of $T_{kj}$ before he/she arrives at zone $j$ for the first time from zone $k$. The only exception is the case of $k=j$, where the vehicle directly travel to zone $j$ in the next step. 

%where $\tilde T_{kj}$ $(k = 1,\dots,M)$ are random variables that captures the remaining delivery time for the driver to first arrive zone $j$ after he/she arrives at zone $k$ from zone $i$. 

%\begin{remark}
%We would like to emphasize that $T_{ij}$ is distinct from $\tilde T_{ij}$, although both of them are random variables that denote the time spent by a driver in zone $i$ to reach zone $j$ for the first time. In particular, for a driver in zone $i$ at a certain instance, $T_{ij}$ captures the time he/she spends to arrive at zone $j$ for the first time from this instance, whereas $\tilde T_{ij}$ captures the time spent to travel to zone $j$ for the first time after the driver finishes an on-demand order whose origin and destination are both zone $i$. Since the start of $T_{ij}$ and $\tilde T_{ij}$ are distinct instances, they are independent random variables. 
%Note that $\tilde T_{ij}$ is subject to the same distribution as $T_{ij}$, thus we have $\mathbb{E}[T_{ij}] = \mathbb{E}[\tilde T_{ij}]$.
%\end{remark}

Next, we will calculate the expectation of $T_{ij}$ when $i\neq j$. Let $\mathbb{E}\left[\mathbf{T}_j\right] = (\mathbb{E}[T_{1j}],\dots,\mathbb{E}[T_{Mj}])^T$ without $\mathbb{E}[T_{jj}]$ be the vector of average travel time for a driver from each zone (other than zone $j$) to first arriving zone $j$.
Let $ \mathbf{\tilde P}$ be a truncation of $\mathbf{P}$ that removes the $j^{th}$ row and the $j^{th}$ column.
Let $ \mathbf{\hat P}$ be a truncation of $\mathbf{P}$ that removes the $j^{th}$ row.
Let $\mathbf{S}$ denote the $M\times M$ transition time matrix where $\mathbf{S}(i,j) = s_{ij}$.
Based on (\ref{eq:def_T_ij}), the expected value of $T_{ij}$ can be calculated as
\begin{align}  
    \mathbb{E}\left[\mathbf{T}_j\right] = \text{diag}\left(\mathbf{\hat P} \mathbf{S}^T\right)+\mathbf{\tilde P} \mathbb{E}\left[\mathbf{T}_j\right],
\end{align}
where $\text{diag}\left(\mathbf{\hat P} S^T\right)$ is a column vector of the diagonal element of $\mathbf{\hat P} \mathbf{S}^T$.
Therefore, average travel time for a driver to first arriving zone $j$ is given by:
\begin{align} \label{eq:stationary_T_ij}
    \mathbb{E}\left[\mathbf{T}_j\right] = (\mathbf{I}-\mathbf{\tilde P})^{-1}\text{diag}\left(\mathbf{\hat P} \mathbf{S}^T\right).
\end{align}

Note that the above expression does not consider $\mathbb{E}[T_{ij}]$ when $i=j$. To separately capture  $\mathbb{E}[T_{jj}]$, let $\mathbf{P}_j$ denote the $j^{th}$ row of the matrix of transition probabilities $\mathbf{P}$, let $\mathbf{\tilde P}_j$ denote the truncation of $\mathbf{P}_j$ without the $j^{th}$ element.
Let $\mathbf{S}_j$ be the $j^{th}$ row of transition time matrix $\mathbf{S}$.
Based on (\ref{eq:def_T_ij}), we have
\begin{align} \label{eq:stationary_T_jj}
    \mathbb{E}[T_{jj}] = \mathbf{P}_j\mathbf{S}_j^T+\mathbf{\tilde P}_j \mathbb{E}\left[\mathbf{T}_j\right].
\end{align}

Overall, the equations (\ref{expression_d_f}), (\ref{eq:stationary_T_ij}) and (\ref{eq:stationary_T_jj}) have delineated the relations between $t_{ij}^{d_f}$ and other endogenous decision variables.

%%%%%%%%%%%%%%%%%%%%%%%%%%%%%%%%%%%%%%%%%%%%%%%%%%%%%%%%%%%%%%
\subsection{Profit Maximization of the Platform}
The integrated platform determines the ride fare $r_i^r$ for each zone $i$, the delivery fare $r_{ij}^{d_f}$ for flexible delivery services from zone $i$ to zone $j$, and the driver wage $q$, in order to maximize the profits subject to the market equilibrium constraints.
This can be formulated as the following optimization problem:
\begin{equation}
\label{optimalpricing_trip}
 \hspace{-5cm} \mathop {\max }\limits_{{\bf r}^r, {\bf r}^{d_f}, q} \quad \sum_{i=1}^M \sum_{j=1}^M r_i^rt_{ij}(\lambda_{ij}^r+\lambda_{ij}^{d_o}) + r_{ij}^{d_f}\lambda_{ij}^{d_f}-N_0F_d(q)q
\end{equation}
%\quad \quad \quad \quad \quad subject to: 
\begin{subnumcases}{\label{constraint_optimapricing}}
{
\lambda_{ij}^r  = {\lambda^{r,0}_{ij}}{F_r}\left(\alpha_r {w^r_i}(N^I_i)  +  r_i^rt_{ij}\right)} \label{demand_constraint_p}\\
\lambda_{ij}^{d_f} = \lambda_{ij}^{d,0}F_d^1(\alpha_dw_i^{d_f}(N_i^{Ig}) +p_d (t_{ij}^{d_f})+r_{ij}^{d_f},\alpha_dw_i^{r}(N_i^I) +p_d (t_{ij})+r_i^rt_{ij}) \label{demand_constraint_g1} \\
\lambda_{ij}^{d_o} = \lambda_{ij}^{d,0}F_d^2(\alpha_dw_i^{d_f}(N_i^{Ig}) +p_d (t_{ij}^{d_f})+r_{ij}^{d_f},\alpha_dw_i^{r}(N_i^I) +p_d (t_{ij})+r_i^rt_{ij}) \label{demand_constraint_g2} \\
w_i^r(N_i^I)\leq w_{max} \label{upper_waiting}\\
N_i^I=w_i^I\sum_{j=1}^M (\lambda_{ij}^r+\lambda_{ij}^{d_o}) \label{demand_constraint2}\\
  N_i^{Ig} = \sum_nN_{(i,n)}^Ip_{i,pick}^n=\sum_nN_i^I\frac{P_{(i,n)}^c}{\sum_{n'} P_{(i,n')}^c}p_{i,pick}^n \label{delivery_supply_constraint}  \\
  t_{ij}^{d_f} = \mathbb{E}[T_{ij}] + \frac{1-p_{j,drop}^{succ}}{p_{j,drop}^{succ}}\mathbb{E}[T_{jj}]  \label{delivery_time_equation}\\
  {N_0}{F_d}\left(q\right)  =\sum_{i=1}^M \sum_{j=1}^M (\lambda_{ij}^r+\lambda_{ij}^{d_o})t_{ij}+\sum_{i=1}^M \sum_{j=1}^M w_i^r (\lambda_{ij}^r+\lambda_{ij}^{d_o})  +  \sum_{i=1}^M \sum_{j=1}^M   w_i^I (\lambda_{ij}^r+\lambda_{ij}^{d_o}) \label{supply_conservation_const}  \\
  (\ref{eq:success_rate_drop})-(\ref{eq:stationary_T_jj})\label{constraint_CTMC}
%\color{blue} N_i^{Ig} = w_i^{dg}\cdot\sum_{j=1}^M\lambda_{ij}^g \\
%\color{blue}     w_i^g = \left(\sum_{j=1}^M\lambda_{ij}^g\right)^{\frac{1-\alpha_1-\alpha_2}{\alpha_2}}A^{-\frac{1}{\alpha_2}}(w_i^{dg})^{-\frac{\alpha_1}{\alpha_2}}
%\color{blue} p_g(t) = 50\cdot(\tanh(t/20-40)+1)
\end{subnumcases}
where $\mathbf{r}^r=(r_1^r,r_2^r,\dots,r_M^r)$, and $\mathbf{r}^{d_f} = \begin{bmatrix}
r_{11}^{d_f} & \cdots & r_{1M}^{d_f} \\
\vdots & \ddots & \vdots \\
r_{M1}^{d_f} & \cdots & r_{MM}^{d_f}
\end{bmatrix}$.
The decision variables are the ride fare per unit of time charged to passengers ${\bf r}^r$, the average delivery fare of flexible delivery parcels under different O-D pairs ${\bf r}^{d_f}$, and the average wages paid to drivers per unit of time $q$\footnote{It is equivalent for the platform to determine the wage on a per-unit-time basis and per-order basis in the static model.}. The objective function (\ref{optimalpricing_trip}) captures the platform profits which equals the total revenues from ride-sourcing services ($\sum_{i=1}^M\sum_{j=1}^Mr_i^rt_{ij}\lambda_{ij}^r$) and both types of delivery services ($\sum_{i=1}^M\sum_{j=1}^Mr_i^rt_{ij}\lambda_{ij}^{d_o}+r_{ij}^{d_f}\lambda_{ij}^{d_f}$) minus the total wages paid to the drivers $N_0F_d(q)q$.
 Constraints (\ref{demand_constraint_p})-(\ref{demand_constraint_g2}) specify the demand of ride-sourcing passengers, flexible delivery customers, and on-demand delivery customers, respectively.
Constraint (\ref{upper_waiting}) imposes an upper bound on the waiting time for on-demand customers (apply for both passengers or on-demand delivery customers)\footnote{For some remote zones with small demand, the platform may find it more profitable not to dispatch drivers to those zones. We impose the upper-bound for waiting time to guarantee that each zone can maintain good service quality}.
Constraints (\ref{demand_constraint2}) and (\ref{delivery_supply_constraint}) capture the number of idle drivers for on-demand services (including ride-sourcing services and on-demand delivery services) and flexible delivery services, respectively.
Constraint (\ref{supply_conservation_const}) ensures that the driver supply (\ref{eq:supply}) is consistent with the sum of drivers in each operating modes (\ref{conservation}).
Constraints (\ref{constraint_CTMC}) capture the stationary distribution of the drivers in both location and number of carried parcels.
Throughout the rest of this paper, we  will use the logit model  to characterize the choices of passenger demand, delivery customer demand, and driver supply. In particular, we have:
\begin{align}
    F_r(c_{ij}^r) &= \frac{e^{-\epsilon c_{ij}^r}}{e^{-\epsilon c_{ij}^r}+e^{-\epsilon c_{ij}^{r,0}}}, \\
    F_d^1(c_{ij}^{d_f},c_{ij}^{d_o}) &= \frac{e^{-\eta c_{ij}^{d_f}}}{e^{-\eta c_{ij}^{d_f}}+e^{-\eta c_{ij}^{d_o}}+e^{-\eta c_{ij}^{d,0}}}, \\
    F_d^2(c_{ij}^{d_f},c_{ij}^{d_o}) &= \frac{e^{-\eta c_{ij}^{d_o}}}{e^{-\eta c_{ij}^{d_f}}+e^{-\eta c_{ij}^{d_o}}+e^{-\eta c_{ij}^{d,0}}}, \\
    F_d(q) &= \frac{e^{\sigma q}}{e^{\sigma q}+e^{\sigma q_0}},
\end{align}
where $\epsilon$, $\eta$ and $\sigma$ are the sensitivity parameters of the logit model, $c_{ij}^{r,0}$ is the average cost of the external options for passengers traveling from zone $i$ to zone $j$, $c_{ij}^{d,0}$ is the average costs of the outside options for delivery customers who want to send a package from zone $i$ to zone $j$, and $q_0$ is the average earnings of outside options for drivers.

\section{Solution Method} \label{sec:algo}
The platform's profit maximization problem (\ref{optimalpricing_trip}) is a non-convex program with a large number of endogenous decision variables, including $\lambda_{ij}^r,\lambda_{ij}^{d_f},\lambda_{ij}^{d_o},w_i^r,w_i^{d_f},t_{ij}^{d_f},w_i^I,w_i^{dg},P_{(i,n)}^c,p_{i,pick}^n, \bar N_i^I, N_i^{Ig}, q$, which are intimately correlated through the constraints (\ref{eq:success_rate_drop})-(\ref{eq:stationary_T_jj}). 
In this section, we will investigate the structure of the constraints by identifying the interdependence between the endogenous variables, and then exploit these structures to develop a tailored algorithm for (\ref{optimalpricing_trip}), which can converge to a locally optimal solution at accelerated speed compared to standard interior-point methods. 
%In our algorithm, we first transform our problem into an optimization problem with only a small size of constraints, then approximate it by an unconstrained problem for a good initial guess.

A crucial challenge for solving problem (\ref{optimalpricing_trip}) is to deal with the constraints (\ref{constraint_optimapricing}), which includes a large number of nonlinear equations and endogenous variables.
These variables are intimately related with each other.
To explore the structure of the constraints and examine the interdependence between these variables, we select the ride fare $\mathbf{r}^r=(r_1^r,r_2^r,\dots,r_M^r)$,  the generalized cost for flexible delivery $\mathbf{c}^{d_f} = \begin{bmatrix}
c_{11}^{d_f} & \cdots & c_{1M}^{d_f} \\
\vdots & \ddots & \vdots \\
c_{M1}^{d_f} & \cdots & c_{MM}^{d_f}
\end{bmatrix}$ and the number of idle drivers $\mathbf{N}^I=(N_1^r,N_2^r,\dots,N_M^r)$ as elementary variables, and show hereinafter how the other variables (including decision variables and endogenous variables) in problem (\ref{optimalpricing_trip}) can be represented by these selected elementary variables.

First of all, given $r_{i}^r$ and $N_i^I$, we can obtain the arrival rate of passengers through the demand function:
\begin{align}\label{lambda_r_represent}
\lambda_{ij}^r=\lambda_{ij}^{r,0}F_r\left(\alpha_rw_i^r(N_i^I)+r_i^rt_{ij}\right).
\end{align}
Furthermore, given $r_{i}^r$, $N_i^I$, and the generalized cost for flexible delivery services $c_{ij}^{d_f}$, we can obtain the arrival rate of delivery customers through the following relations:
\begin{align}
    \lambda_{ij}^{d_f} &= \lambda_{ij}^{d,0}F_d^1(c_{ij}^{d_f},\alpha_dw_i^{d_f}(N_i^I)+p_d(t_{ij})+r_i^rt_{ij}),\\
    \lambda_{ij}^{d_o} &= \lambda_{ij}^{d,0}F_d^2(c_{ij}^{d_f},\alpha_dw_i^{d_f}(N_i^I)+p_d(t_{ij})+r_i^rt_{ij}).
\end{align}
This indicates that the arrival rates of passengers, on-demand delivery customers, and flexible delivery customers are all known once $r_{i}^r$, $N_i^I$ and $c_{ij}^{d_f}$ are given.

Next, we show how to derive the flexible delivery time $t_{ij}^{d_f}$ based on the selected variables. Recall that the delivery time $t_{ij}^g$ can be obtained through:
\begin{align}\label{eq:t_df_represent}
    t_{ij}^{d_f}=\mathbb{E}[T_{ij}]+\frac{1-p_{j,drop}^{succ}}{p_{j,drop}^{succ}}\mathbb{E}[T_{jj}],
\end{align}
where $\mathbb{E}\left[\mathbf{T}_j\right] = (\mathbf{I}-\mathbf{\tilde P})^{-1}\text{diag}\left(\mathbf{\hat P} \mathbf{S}^T\right)$ and $\mathbb{E}[T_{jj}] = \mathbf{P}_j\mathbf{S}_j^T+\mathbf{\tilde P}_j \mathbb{E}\left[\mathbf{T}_j\right]$. 
Note that the with arrival rate $\lambda_{ij}^r$ and $\lambda_{ij}^{d_o}$, we can obtain the matrix of one-step transition probabilities for drivers $\mathbf{P}$ defined in (\ref{transition_matrix}), where
\begin{align}
    P_{ij} = \frac{\lambda_{ij}^r+\lambda_{ij}^{d_o}}{\sum_{k=1}^M\left(\lambda_{ik}^r+\lambda_{ik}^{d_o}\right)}
\end{align}
From constraint (\ref{demand_constraint2}), the driver's cruising time $w_i^I$ in zone $i$ can be derived as:
\begin{align*}
    w_i^I = \frac{N_i^I}{\sum_{j=1}^M(\lambda_{ij}^r+\lambda_{ij}^{d_o})},
\end{align*}
which gives the transition time matrix $\mathbf{S}$ according to (\ref{definiction_s}).
Note that $p_{i,drop}^{succ}$ is a function of $w_i^I$ and the exogenous parameter $t_i^g$.
This indicates that the flexible delivery time $t_{ij}^{d_f}$ can be derived once $r_{i}^r$, $N_i^I$ and $c_{ij}^{d_f}$ are given.

Next, we show how to derive $N_i^{Ig}$ based on $r_{i}^r$, $N_i^I$ and $c_{ij}^{d_f}$. Recall that constraint (\ref{delivery_supply_constraint}) dictates the following relation:
\begin{align*}
    N_i^{Ig} = \sum_n N_{(i,n)}^Ip_{i,pick}^n = \sum_nN_i^I\frac{P_{(i,n)}^c}{\sum_{n'} P_{(i,n')}^c}p_{i,pick}^n.
\end{align*}
In this case, to obtain $N_i^{Ig}$, we need to express  $p_{i,pick}^n$ and $P_{(i,n)}^c$  as function of  elementary variables (i.e., $r_{i}^r$, $N_i^I$ and $c_{ij}^{d_f}$), respectively.  To this end,  we first note from (\ref{eq:success_rate_pick}) that $p_{i,pick}^n$  depends on $\bar{t}_i^g$ and $w_i^{dg}$, where the former is characterized in  (\ref{eq:def_bar_t_i_g}) and expressed as a function of $\bar{N}_i^I$, and the latter can be derived by combining (\ref{eq:success_rate_pick}) and (\ref{eq:def_bar_N_I_2}):

\begin{align} \label{eq:w_dg}
    w_i^{dg} = \frac{p_{i,pick}^{succ}(w_i^I,w_i^{dg},\bar t_i^g)\bar N_i^I}{\sum_{j=1}^M\lambda_{ij}^{d_f}},
\end{align}
by solving which we could get $w_i^{dg}$ as a function of $\bar N_i^I$ given $\bar t_i^g = \dfrac{L_i}{\bar N_i^I}$.
Since both $\bar{t}_i^g$ and $w_i^{dg}$  can be represented as a function of $\bar N_i^I$,  $p_{i,pick}^{succ}$ is also a function of $\bar N_i^I$.

Furthermore, we note
from (\ref{eq:limit_prob}) that $P_{(i,n)}^c$ can be obtained by solving:
\begin{align}
\label{equation_pzn}
\begin{cases}
    \nu_{(z,n)}P_{(z,n)}^c=\sum_{(z',n')}\nu_{(z',n')}P_{(z',n')}^cP^c_{(z',n')(z,n)}\\
     \sum_{(z,n)} P_{(z,n)}^c = 1,
\end{cases}
\end{align}
Clearly, the solution to (\ref{equation_pzn}) depends on $\nu_{(z,n)}$, which is defined by (\ref{eq:spend_time_CTMC}). In the ride-hand side of (\ref{eq:spend_time_CTMC}), all variables can be represented as functions of the elementary variables (i.e., $r_{i}^r$, $N_i^I$ and $c_{ij}^{d_f}$) except for $\bar{t}_z^g$ and $w_i^{dg}$, both of which are shown to be a function of $\bar N_i^I$. Therefore, we conclude that $P_{(i,n)}^c$ can be also expressed in terms of $\bar{N}_i^I$ under the condition that $r_{i}^r$, $N_i^I$ and $c_{ij}^{d_f}$ are given. 

Since both $p_{i,pick}^n$ and $P_{(i,n)}^c$ depends on $\bar{N}_i^I$, as far as we derive $\bar{N}_i^I$ as a function of elementary variables, then $N_i^{Ig}$ can be also expressed as a function of elementary variables. To this end, we note that by plugging (\ref{eq:def_bar_t_i_g}) into (\ref{eq:def_bar_N_I}), we can obtain the following equations:
\begin{align}\label{eq:get_bar_N_I}
    \bar N_i^I = N_i^I-t_i^g\sum_j \lambda_{ji}^{d_f}-N_i^I\frac{P_{(i,C_a)}^c}{\sum_{n'} P_{(i,n')}^c}(1-p_{flex,i}^{C_a}).
\end{align}
In the right-hand side of (\ref{eq:get_bar_N_I}), $\lambda_{ij}^{d_f}$ is a function of elementary variables, $p_{flex,i}^{C_a}$ is a function of $\lambda_{ij}^{df}$ (thus also a function of elementary variables), whereas $P_{(z,n)}^c$ is a function of $\bar{N}_i^I$. In this case, we can treat  (\ref{eq:get_bar_N_I}) as a set of equations for $\bar{N}_i^I$, and there are $M$ equations for $M$ unknowns, which if can be solved (we will discuss the existence of $\bar{N}_i^I$ later), will provide the value of  $\bar{N}_i^I$ given the elementary variables. 

Next, we can derive the service fare and driver wage as a function of elementary variables. In particular, by $c_{ij}^{d_f}=\alpha_dw_i^{d_f}(N_i^{Ig}) +p_d (t_{ij}^{d_f})+r_{ij}^{d_f}$, the price for flexible delivery services from zone $i$ to zone $j$ can be derived as:
\begin{align*}
    r_{ij}^{d_f} = c_{ij}^{d_f}-\alpha_dw_i^{d_f}(N_i^{Ig})-p_d(t_{ij}^{df}),
\end{align*}
which is a function of elementary variables as $N_i^{Ig}$ and $t_{ij}^{df}$ are functions of elementary variables. 
From (\ref{supply_conservation_const}), we have:
\begin{align}\label{q_represent}
    q = q_0+\frac{1}{\sigma}\log\left(\frac{\sum_{i=1}^M \sum_{j=1}^M (\lambda_{ij}^r+\lambda_{ij}^{d_o})t_{ij}+\sum_{i=1}^M \sum_{j=1}^M w_i^r (\lambda_{ij}^r+\lambda_{ij}^{d_o})  +  \sum_{i=1}^M \sum_{j=1}^M   w_i^I (\lambda_{ij}^r+\lambda_{ij}^{d_o})}{N_0-(\sum_{i=1}^M \sum_{j=1}^M (\lambda_{ij}^r+\lambda_{ij}^{d_o})t_{ij}+\sum_{i=1}^M \sum_{j=1}^M w_i^r (\lambda_{ij}^r+\lambda_{ij}^{d_o})  +  \sum_{i=1}^M \sum_{j=1}^M   w_i^I (\lambda_{ij}^r+\lambda_{ij}^{d_o}))}\right).
\end{align}
Therefore, all  of the endogenous variables and decision variables can be represented by the selected elementary variables. Note that based on the aforementioned derivations, the existence of the market equilibrium depends on two factors:
(1) the wellposedness of the flexible delivery time $t_{ij}^{d_f}$ in (\ref{eq:t_df_represent}),
and (2) the existence of the solution to equation (\ref{eq:w_dg}) and (\ref{eq:get_bar_N_I}) for reasonable $w_i^{dg}$ and $\bar N_i^I$.
The following proposition proves the wellposedness of the flexible delivery time $t_{ij}^{d_f}$ by showing the existence of $t_{ij}^{d_f}$ for any given demand: 
\begin{proposition}\label{prop:exist_t}
A non-negative average delivery time (\ref{eq:t_df_represent}) always exists.
\end{proposition}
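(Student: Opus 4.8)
The plan is to treat the two ingredients of (\ref{eq:t_df_represent}) separately: the expected first-passage times $\mathbb{E}[\mathbf{T}_j]$ and $\mathbb{E}[T_{jj}]$ defined through (\ref{eq:stationary_T_ij})--(\ref{eq:stationary_T_jj}), and the scalar coefficient $(1-p_{j,drop}^{succ})/p_{j,drop}^{succ}$. For the first part, the central object is the matrix $\mathbf{I}-\tilde{\mathbf{P}}$, and the whole argument hinges on showing that it is invertible with a \emph{non-negative} inverse; once this is in hand, non-negativity of $t_{ij}^{d_f}$ follows because every remaining quantity on the right-hand side is manifestly non-negative.

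First I would observe that $\mathbf{P}$ in (\ref{transition_matrix}) is a genuine stochastic matrix (its rows sum to one), and that, because $F_r>0$ and the potential demand $\lambda_{ij}^{r,0}$ is positive, the transition probabilities $P_{ij}$ in (\ref{eq:define_P_ij}) are strictly positive for every ordered pair $(i,j)$. Consequently the truncated matrix $\tilde{\mathbf{P}}$, obtained by deleting the $j$-th row and column, is substochastic with every row sum equal to $1-P_{ij}<1$. This immediately gives $\|\tilde{\mathbf{P}}\|_\infty<1$, hence the spectral radius $\rho(\tilde{\mathbf{P}})<1$. Equivalently, one may invoke the footnote argument that zone $j$ is reached from any other zone with probability one, so that when $j$ is treated as absorbing the remaining states are all transient and the fundamental matrix of the absorbing chain is exactly $(\mathbf{I}-\tilde{\mathbf{P}})^{-1}$. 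Either route shows that $\mathbf{I}-\tilde{\mathbf{P}}$ is invertible and that $(\mathbf{I}-\tilde{\mathbf{P}})^{-1}=\sum_{k\ge 0}\tilde{\mathbf{P}}^{k}\ge 0$ entrywise.

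Next I would establish non-negativity of each factor. The entries of the transition-time matrix $\mathbf{S}$ are $s_{ik}=w_i^I+t_{ik}\ge 0$ by (\ref{definiction_s}), and $\hat{\mathbf{P}}\ge 0$, so $\mathrm{diag}(\hat{\mathbf{P}}\mathbf{S}^T)\ge 0$; multiplying by the non-negative inverse gives $\mathbb{E}[\mathbf{T}_j]\ge 0$ in (\ref{eq:stationary_T_ij}), and then $\mathbb{E}[T_{jj}]=\mathbf{P}_j\mathbf{S}_j^T+\tilde{\mathbf{P}}_j\mathbb{E}[\mathbf{T}_j]\ge 0$ from (\ref{eq:stationary_T_jj}). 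For the coefficient it suffices that $p_{j,drop}^{succ}\in(0,1]$: by Definition \ref{def:successful_rate}, $p_{j,drop}^{succ}=\mathbb{P}(T_j^g<W_j^I)$ is strictly positive whenever the idle cruising time has positive mean $w_j^I>0$ (explicitly $p_{j,drop}^{succ}=w_j^I/(t_j^g+w_j^I)>0$ under the exponential assumption), so $(1-p_{j,drop}^{succ})/p_{j,drop}^{succ}$ is finite and non-negative. Combining these, $t_{ij}^{d_f}$ is a sum of non-negative terms, and therefore exists and is non-negative.

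The main obstacle is the invertibility step, i.e.\ proving $\rho(\tilde{\mathbf{P}})<1$. The row-sum bound reduces this to the strict positivity of the transition probabilities $P_{ij}$, which in turn rests on the modelling assumption that on-demand demand is strictly positive between zones; if one wishes to allow zero potential demand on some OD pairs, the cleaner argument is the transience/reachability one, which only requires $j$ to be accessible from every other zone rather than full positivity. I would therefore isolate reachability (irreducibility) of the driver chain as the single structural hypothesis and remark that it is guaranteed here by $F_r>0$ together with positive potential demand.
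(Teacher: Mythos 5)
Your proposal is correct and follows essentially the same route as the paper's own proof: you bound $\Vert \mathbf{\tilde P}\Vert_\infty<1$ via the strictly positive entries of $\mathbf{P}$, conclude $\rho(\mathbf{\tilde P})<1$, and obtain entrywise non-negativity of $(\mathbf{I}-\mathbf{\tilde P})^{-1}$ (your Neumann-series justification is the standard proof of the M-matrix property the paper cites), from which non-negativity of $t_{ij}^{d_f}$ follows. Your explicit check that $p_{j,drop}^{succ}\in(0,1]$ so that the factor $(1-p_{j,drop}^{succ})/p_{j,drop}^{succ}$ is finite and non-negative is a small completeness improvement the paper leaves implicit, but it does not change the argument.
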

\begin{proof}
From (\ref{transition_matrix}), we know that every element of $\mathbf{P}$ belongs to $(0,1)$ and the sum of any row or any column all equals to 1.
Note that $ \mathbf{\tilde P}$ in (\ref{eq:stationary_T_ij}) is a truncation of $\mathbf{P}$ which removes the $i^{th}$ column and $j^{th}$ row. Then we have that the matrix norm induced by p-norm for $\mathbf{\tilde P}$ when $p = \infty$ is
\begin{align*}
    \Vert \mathbf{\tilde P}\Vert_\infty = \max_{1\le i\le M-1} \sum_{j=1}^{M-1} P_{ij} < 1.
\end{align*}
The spectral radius $\rho(\mathbf{\tilde P})$ satisfies that
\begin{align*}
    \rho(\mathbf{\tilde P}) \le \Vert \mathbf{\tilde P}\Vert_\infty < 1.
\end{align*}
Thus the matrix $\mathbf{I}-\mathbf{\tilde P}$ is a M-matrix \cite{plemmons1977m}, which satisfies that\footnote{An M-matrix is a matrix $A\in \mathbb{R}^{n\times n}$ that satisfies $A = sI-B$ where $B\ge 0$ and $s>\rho(B)$.}
\begin{align*}
    (\mathbf{I}-\mathbf{\tilde P})^{-1}\ge 0.
\end{align*}
It implies that any element in $(\mathbf{I}-\mathbf{\tilde P})^{-1}$ is non-negative. Therefore, from (\ref{eq:t_ij^g}), (\ref{eq:stationary_T_ij}) and (\ref{eq:stationary_T_jj}), we have that for any $\lambda_{ij}^r$ and $\lambda_{ij}^{d_o}$, we can get non-negative delivery times $t_{ij}^{d_f}$ for any $i$ and $j$, which completes the proof.
\end{proof}

Since the existence criteria may depend on the probability of successful pickup and dropoff of the flexible services, here we proved the existence of $w_i^{dg}$ and $\bar N_i^I$ under a specific distribution in the following proposition:
\begin{proposition}\label{prop:exist_N} Assume that $\log{W_i^I}\sim \mathcal{N}(\mu_{w_i^I},\sigma_{w_i^I})$,  $\log{T_i^g}\sim \mathcal{N}(\mu_{t_i^g},\sigma_{t_i^g})$, $\log{W_i^{dg}}\sim \mathcal{N}(\mu_{w_i^{dg}},\sigma_{w_i^{dg}})$, $\log{\bar T_i^g}\sim\mathcal{N}(\mu_{\bar t_i^g}, \sigma_{\bar t_i^g})$. Assume the correlation parameter for $\log{W_i^I}$ and $\log{W_i^{dg}}$ is $\rho_w$, and the correlation parameter for $\log{W_i^I}$ and $\log{\bar T_i^g}$ is $\rho_t$. 
\begin{enumerate}
    \item Constraint (\ref{eq:w_dg}) is a well-posed problem, where $w_i^{dg}$ can be treated as a function of $\bar N_i^I$ in the domain of all feasible $\bar N_i^I$. 
    \item There 
      exists a feasible solution  $\bar N_i^I$  to (\ref{eq:get_bar_N_I}) treating $w_i^{dg}$ as a function of $\bar N_i^I$ derived from (i) if the following condition holds:
     \begin{align}\label{condition_exist_N}
         N_i^I\left(1-\frac{P_{(i,C_a)}^{c^*}}{\sum_{n'}P_{(i,n')}^{c^*}}(1-p_{flex,i}^{C_a})\right)-t_i^g\sum_j\lambda_{ji}^{d_f}\ge 0,
     \end{align}
     where $P_{(i,n)}^{c^*}$ is determined by:
     \begin{align}    
     &\frac{1}{{\bar t_{(i,n)}}^*}P_{(i,n)}^{c^*}=\sum_{(i',n')}\frac{1}{{\bar t_{(i',n')}}^*}P_{(i',n')}^{c^*}P_{(i',n')(i,n)}^{c^*}\quad \text{and}\quad \sum_{(i,n)} P_{(i,n)}^{c^*} = 1, \\
     &{\bar t_{(i,n)}}^* = \begin{cases}
     w_i^I & n = 0 \\
     \Phi\left(\frac{\log w_i^I-\frac{{\sigma_{t_i^g}}^2}{2}-\log t_i^g+\frac{{\sigma_{w_i^I}}^2}{2}}{\sqrt{\sigma_{w_i^I}^2+\sigma_{t_i^g}^2}}\right)p_{flex,i}^nt_i^g+ &  \\
     \left(1-\Phi\left(\frac{\log w_i^I-\frac{{\sigma_{t_i^g}}^2}{2}-\log t_i^g+\frac{{\sigma_{w_i^I}}^2}{2}}{\sqrt{\sigma_{w_i^I}^2+\sigma_{t_i^g}^2}}\right)p_{flex,i}^n\right)w_i^I & n>0
     \end{cases}, \\
     & P_{(i,n)(i',n')}^{c^*} = 
     \begin{cases}
         \Phi\left(\frac{\log w_i^I-\frac{{\sigma_{t_i^g}}^2}{2}-\log t_i^g+\frac{{\sigma_{w_i^I}}^2}{2}}{\sqrt{\sigma_{w_i^I}^2+\sigma_{t_i^g}^2}}\right)p_{flex,i}^n & i' = i, n >0, n' = n-1\\
         P_{ii'} & n=0, n'=n \\
         \left(1-\Phi\left(\frac{\log w_i^I-\frac{{\sigma_{t_i^g}}^2}{2}-\log t_i^g+\frac{{\sigma_{w_i^I}}^2}{2}}{\sqrt{\sigma_{w_i^I}^2+\sigma_{t_i^g}^2}}\right)p_{flex,i}^n\right)P_{ii'} & n>0, n'=n  \\
         0 & \text{otherwise}
     \end{cases},
     \end{align}
     where $\Phi(x) = \frac{1}{\sqrt{2\pi}}\int_{-\infty}^xe^{-\frac{t^2}{2}}dt$ denotes the cumulative distribution function (CDF) of the standard normal distribution.
\end{enumerate}

\end{proposition}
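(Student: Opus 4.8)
The plan is to treat both parts as fixed-point problems and to exploit the closed form that the lognormal assumption lends to the success probabilities. For part (i), I would first write out $p_{i,pick}^{succ}=p_{i,pick}^{succ,w}\cdot p_{i,pick}^{succ,t}$ explicitly. Since the difference of two jointly normal variables is normal, with $\mu_{w_i^I}=\log w_i^I-\sigma_{w_i^I}^2/2$ and $\mu_{w_i^{dg}}=\log w_i^{dg}-\sigma_{w_i^{dg}}^2/2$ one gets
\[
p_{i,pick}^{succ,t}(w_i^{dg},w_i^I)=\mathbb{P}(W_i^{dg}<W_i^I)=\Phi\!\left(\frac{\log w_i^I-\sigma_{w_i^I}^2/2-\log w_i^{dg}+\sigma_{w_i^{dg}}^2/2}{\sqrt{\sigma_{w_i^I}^2+\sigma_{w_i^{dg}}^2-2\rho_w\sigma_{w_i^I}\sigma_{w_i^{dg}}}}\right),
\]
and an analogous formula for $p_{i,pick}^{succ,w}$ in terms of $\bar t_i^g$ and $\rho_t$. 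Because a fixed $\bar N_i^I$ determines $\bar t_i^g=L_i/\sqrt{\bar N_i^I}$ and $w_i^I$ is fixed by the elementary variables, the factor $p_{i,pick}^{succ,w}$ becomes a positive constant and equation (\ref{eq:w_dg}) collapses to a scalar fixed-point equation $w_i^{dg}=g(w_i^{dg})$.

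For part (i) I would then close the argument analytically: $g$ is continuous and strictly positive, and since the argument of $\Phi$ is affine and strictly decreasing in $\log w_i^{dg}$ (the denominator being a positive constant whenever $|\rho_w|<1$), $g$ is strictly decreasing with finite $g(0^{+})>0$ and $g(+\infty)=0$. Hence $h(w)=g(w)-w$ is strictly decreasing with $h(0^{+})>0$ and $h(+\infty)=-\infty$, so the intermediate value theorem yields a \emph{unique} positive root. Strict monotonicity is what upgrades mere solvability to well-posedness, i.e. it lets $w_i^{dg}$ be read off as a single-valued, continuous function of $\bar N_i^I$ (continuity following from continuous dependence of $g$ on $\bar N_i^I$).

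For part (ii) I would recast (\ref{eq:get_bar_N_I}) as a fixed-point system $\bar{\mathbf N}^I=\Psi(\bar{\mathbf N}^I)$ on the box $D=\prod_i[0,N_i^I]$, with
\[
\Psi_i(\bar{\mathbf N}^I)=N_i^I-t_i^g\sum_j\lambda_{ji}^{d_f}-N_i^I\,\frac{P_{(i,C_a)}^c}{\sum_{n'}P_{(i,n')}^c}\,(1-p_{flex,i}^{C_a}).
\]
By part (i), $w_i^{dg}$ and $\bar t_i^g$ are continuous in $\bar{\mathbf N}^I$, so the holding rates $\nu_{(z,n)}$ and jump probabilities of the CTMC vary continuously; by Proposition \ref{prop:limit_prob} the stationary vector $\{P_{(z,n)}^c\}$ is the unique solution of a nonsingular linear system, hence continuous in $\bar{\mathbf N}^I$, and so is $\Psi$. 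The bound $\Psi_i\le N_i^I$ is immediate since the subtracted terms are nonnegative. The role of condition (\ref{condition_exist_N}) is to control the lower end: I would identify the starred quantities $P_{(i,n)}^{c^*}$ as the stationary law of the degenerate \emph{pickup-free} chain reached as $\bar N_i^I\to 0$ (where $\bar t_i^g\to\infty$ forces $p_{i,pick}^{succ}\to 0$); in that chain every state with $n>0$ is transient, so $P_{(i,C_a)}^{c^*}=0$ and (\ref{condition_exist_N}) collapses to $N_i^I-t_i^g\sum_j\lambda_{ji}^{d_f}\ge 0$, i.e. $\Psi_i\ge 0$ at the anchor. Setting $F_i=\bar N_i^I-\Psi_i$, I would verify the Poincar\'e--Miranda sign conditions ($F_i\ge 0$ on $\{\bar N_i^I=N_i^I\}$, which is automatic, and $F_i\le 0$ on $\{\bar N_i^I=0\}$) and conclude existence of a zero, i.e. a feasible $\bar{\mathbf N}^I\in D$.

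The main obstacle is precisely this lower boundary condition together with continuity of the stationary law at the degenerate boundary. Monotonicity helps: $\Psi_i$ is \emph{antitone} (more idle supply shortens $\bar t_i^g$, raising $p_{i,pick}^{succ}$, pushing more drivers to capacity $C_a$ and enlarging the subtracted term), so the worst case for $\Psi_i\ge 0$ sits near the supply-heavy corner rather than the anchor. The genuine difficulty is that on the face $\{\bar N_i^I=0\}$ the conditional at-capacity fraction $P_{(i,C_a)}^c/\sum_{n'}P_{(i,n')}^c$ need not vanish, because drivers can still load parcels in \emph{other} zones and enter zone $i$ full; bounding this fraction under the cross-zone coupling, and controlling the near-singular behavior of the generator as $\bar t_i^g\to\infty$, is where the substantive work lies, and is exactly what condition (\ref{condition_exist_N}), read through the starred chain, is designed to secure.
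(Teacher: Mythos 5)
Your part (i) is essentially the paper's own proof: the paper likewise uses the normal-difference formula to write
\begin{align*}
p_{i,pick}^{succ,t}(w_i^{dg},w_i^I)=\Phi\left(\frac{\log w_i^I-\sigma_{w_i^I}^2/2-\log w_i^{dg}+\sigma_{w_i^{dg}}^2/2}{\sqrt{\sigma_{w_i^{dg}}^2+\sigma_{w_i^I}^2-2\rho\,\sigma_{w_i^{dg}}\sigma_{w_i^I}}}\right),
\end{align*}
observes that the right-hand side of (\ref{eq:w_dg}) is therefore decreasing in $w_i^{dg}$ with a positive limit at $0^+$ and a finite (zero) limit at $+\infty$, and concludes by a fixed-point argument; your strict-monotonicity remark that upgrades this to a single-valued continuous function of $\bar N_i^I$ is the same content.

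For part (ii) your skeleton also matches the paper: write (\ref{eq:get_bar_N_I}) as a continuous self-map on $\prod_i[0,N_i^I]$, note the upper bound is automatic, and tie condition (\ref{condition_exist_N}) to the degenerate limit $\bar{\mathbf N}^I\to\mathbf 0$, where $\bar t_i^g\to\infty$ forces $p_{i,pick}^{succ}\to 0$ and the starred chain emerges. Two points of comparison. First, your observation that in the pickup-free starred chain every state with $n>0$ is transient, hence $P_{(i,C_a)}^{c^*}=0$ and (\ref{condition_exist_N}) collapses to $N_i^I-t_i^g\sum_j\lambda_{ji}^{d_f}\ge 0$, is correct (a finite chain's invariant measures vanish on transient states, and the drop-off probabilities are strictly positive under the logit demand), and it is a genuine simplification the paper never draws. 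Second, the ``genuine difficulty'' you flag — that Poincar\'e--Miranda requires the sign condition on the entire face $\{\bar N_i^I=0\}$, where other zones still generate pickups and drivers can enter zone $i$ loaded, not merely at the corner $\bar{\mathbf N}^I=\mathbf 0$ — is real, but you should know it is equally unresolved in the paper: the published proof checks only the two corner limits $\bar{\mathbf N}^I\to\mathbf N^I$ and $\bar{\mathbf N}^I\to\mathbf 0$ and asserts sufficiency ``based on fixed point theorem.'' Indeed, if your antitonicity heuristic holds, the binding point of the face $\{\bar N_i^I=0\}$ is $(0,\mathbf N_{-i}^I)$ rather than the evaluated corner, so the corner condition alone does not deliver the face condition for either you or the authors. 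The only piece of the paper's argument you defer is the explicit closed-form verification that the chain data extend continuously to the degenerate corner — in particular the computation that $p_{i,pick}^{succ,t}\,\bar t_i^g\to 0$, so the holding times converge to the stated $\bar t_{(i,n)}^{\,*}$ — which the paper does carry out and which your plan would need to reproduce.
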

\begin{proof}
    We first prove (i). Based on the fixed point theorem, a sufficient condition under which equation $w_i^{dg} = \frac{p_{i,pick}^{succ}(w_i^I,w_i^{dg},\bar t_i^g)\bar N_i^I}{\sum_{j=1}^M\lambda_{ij}^{d_f}}$ exists a unique solution is that:
    \begin{enumerate}[(a)]
        \item Given $\bar N_i^I$, $\frac{p_{i,pick}^{succ}(w_i^I,w_i^{dg},\bar t_i^g)\bar N_i^I}{\sum_{j=1}^M\lambda_{ij}^{d_f}}$ is monotonically decreasing.
        \item The following conditions are satisfied:\begin{align} \label{exist_condition_w_dg}
        \lim_{w_i^{dg}\rightarrow 0} \frac{p_{i,pick}^{succ}(w_i^I,w_i^{dg},\bar t_i^g)\bar N_i^I}{\sum_{j=1}^M\lambda_{ij}^{d_f}} > 0\quad \text{and}\quad \lim_{w_i^{dg}\rightarrow \infty} \frac{p_{i,pick}^{succ}(w_i^I,w_i^{dg},\bar t_i^g)\bar N_i^I}{\sum_{j=1}^M\lambda_{ij}^{d_f}} < \infty  
        \end{align}
    \end{enumerate}

    Note that when the critical variables and $\bar N_i^I$ are give, we have $w_i^I$, $\bar t_i^g$ and $\lambda_{ij}^{d_f}$ are known. Then by the theory of multivariate log-normal distribution
    \begin{align}
        p_{i,pick}^{succ}(w_i^I,w_i^{dg},\bar t_i^g) = p_{i,pick}^{succ,w}(\bar t_i^g,w_i^I)\cdot p_{i,pick}^{succ,t}(w_i^{dg},w_i^I)= p_{i,pick}^{succ,w}(\bar t_i^g,w_i^I)\cdot  \Phi \left(\frac{\log w_i^I-\frac{{\sigma_{w_i^I}}^2}{2}-\log w_i^{dg}+\frac{{\sigma_{w_i^{dg}}}^2}{2}}{\sqrt{\sigma_{w_i^{dg}}^2+\sigma_{w_i^I}^2-2\rho_t\sigma_{w_i^{dg}}\sigma_{w_i^I}}}\right)
    \end{align}
    which is decreasing in $w_i^{dg}$. This completes the proof of (a).
    Since $\lim_{w_i^{dg}\rightarrow 0}p_{i,pick}^{succ} > 0$ and $\lim_{w_i^{dg}\rightarrow\infty}p_{i,pick}^{succ}= 0$, we have the conditions (\ref{exist_condition_w_dg}) always hold which completes the proof of (b).

    Next, we show the proof of (ii). The right-hand side of (\ref{eq:get_bar_N_I}) can be written as:
    \begin{align}
        RHS = N_i^I\left(1-\frac{P_{(i,C_a)}^c}{\sum_{n'}P_{(i,n')}^c}(1-p_{flex,i}^{C_a})\right)-t_i^g\sum_j\lambda_{ji}^{d_f}.
    \end{align}
    Note that $\bar N_i^I\in[0,N_i^I]$.
    To prove that $\bar N_i^I = RHS$ has a solution, based on fixed point theorem, a sufficient condition is that 
    \begin{align}
        \lim_{\mathbf{\bar N}^I\rightarrow \mathbf{N}^I}RHS \le N_i^I \quad \text{and} \quad \lim_{\mathbf{\bar N}_i^I\rightarrow 0}RHS \ge 0,       
    \end{align}
    since $RHS$ is a continuous function of $\bar N_i^I$ for all $i$.

    It is intuitive that $RHS \le N_i^I$ always holds, thus here we prove that the condition (\ref{condition_exist_N}) is equivalent to $\lim_{\mathbf{\bar N}^I\rightarrow 0}RHS \ge 0$. From (\ref{eq:w_dg}), we have 
    \begin{align}
        \lim_{\bar N_i^I\rightarrow 0} w_i^{dg} = \lim_{\bar N_i^I\rightarrow 0} \frac{p_{i,pick}^{succ}(w_i^I,w_i^{dg},\bar t_i^g)\bar N_i^I}{\sum_{j=1}^M\lambda_{ij}^{d_f}} = 0.
    \end{align}
    Then, from (\ref{eq:success_rate_pick}) and $\lim_{\bar N_i^I\rightarrow 0}\bar t_i^g = \infty$, we have $\lim_{\mathbf{\bar N}^I\rightarrow 0} p_{i,pick}^{succ} = 0$,
    which yields $\lim_{\mathbf{\bar N}^I\rightarrow 0} P_{(z,n)(z,n+1)}^c = 0$.

    Since the expected value of $T_i^g$ is $t_i^g$ and the expected value of $W_i^I$ is $w_i^I$, based on the theory of log-normal distribution we have
    \begin{align}
        \mu_{t_i^g} &= \log{t_i^g}-\frac{{\sigma_{t_i^g}}^2}{2}, \\
        \mu_{w_i^I} &= \log{w_i^I}-\frac{{\sigma_{w_i^I}}^2}{2},
    \end{align}
    By the theory of normal distribution, we have
    \begin{align}
        p_{i,drop}^{succ} = \mathbb{P}(T_i^g-W_i^I<0) = \mathbb{P}(\log{T_i^g}-\log{W_i^I}<0)= \Phi\left(\frac{\log w_i^I-\frac{{\sigma_{t_i^g}}^2}{2}-\log t_i^g+\frac{{\sigma_{w_i^I}}^2}{2}}{\sqrt{\sigma_{w_i^I}^2+\sigma_{t_i^g}^2}}\right)
    \end{align}
    
    From (\ref{eq:tran_prob_CTMC}), we have
    \begin{align}
        \lim_{\mathbf{\bar N}^I\rightarrow 0} P_{(i,n)(i',n')}^c =  P_{(i,n)(i',n')}^{c^*} = 
     \begin{cases}
         \Phi\left(\frac{\log w_i^I-\frac{{\sigma_{t_i^g}}^2}{2}-\log t_i^g+\frac{{\sigma_{w_i^I}}^2}{2}}{\sqrt{\sigma_{w_i^I}^2+\sigma_{t_i^g}^2}}\right)p_{flex,i}^n & i' = i, n >0, n' = n-1\\
         P_{ii'} & n=0, n'=n \\
         \left(1-\Phi\left(\frac{\log w_i^I-\frac{{\sigma_{t_i^g}}^2}{2}-\log t_i^g+\frac{{\sigma_{w_i^I}}^2}{2}}{\sqrt{\sigma_{w_i^I}^2+\sigma_{t_i^g}^2}}\right)p_{flex,i}^n\right)P_{ii'} & n>0, n'=n  \\
         0 & \text{otherwise}
     \end{cases}.
    \end{align}
    
    From (\ref{eq:spend_time_CTMC}) we can derive that
    \begin{enumerate}
        \item When $n = 0$, 
        \begin{align*}
            \lim_{\mathbf{\bar N}^I\rightarrow 0} \bar t_{(i,n)} &= \lim_{\mathbf{\bar N}^I\rightarrow 0} P_{(i,n)(i,n+1)}^c(w_i^{dg}+\bar t_i^g)+\sum_{j\in\{1,\dots,M\}}P_{(i,n)(j,n)}^cw_i^I\\
            &= \lim_{\mathbf{\bar N}^I\rightarrow 0}p_{i,pick}^{succ,w}p_{i,pick}^{succ,t}\cdot(w_i^{dg}+\bar t_i^g)+\sum_{j\in\{1,\dots,M\}}P_{ij}w_i^I\\
            &=\lim_{\mathbf{\bar N}^I\rightarrow 0}p_{i,pick}^{succ,w}p_{i,pick}^{succ,t}\cdot\bar t_i^g+w_i^I\\
        \end{align*}
        Note that $\lim_{\bar N_i^I\rightarrow 0}\bar t_i^g\rightarrow \infty$, and we have $\lim_{\bar t_i^g\rightarrow \infty}p_{i,pick}^{succ,t}=0$. Based on the theory of log-normal distribution and multivariate normal distribution, we have
        \begin{align}
            p_{i,pick}^{succ,t} = \mathbb{P}(\bar T_i^g < W_i^I) = \Phi \left(\frac{\log w_i^I-\frac{{\sigma_{w_i^I}}^2}{2}-\log \bar t_i^g+\frac{{\sigma_{\bar t_i^g}}^2}{2}}{\sqrt{\sigma_{\bar t_i^g}^2+\sigma_{w_i^I}^2-2\rho_t\sigma_{\bar t_i^g}\sigma_{w_i^I}}}\right) = \frac{1}{\sqrt{2\pi}}\int_{-\infty}^{\frac{\log w_i^I-\frac{{\sigma_{w_i^I}}^2}{2}-\log \bar t_i^g+\frac{{\sigma_{\bar t_i^g}}^2}{2}}{\sqrt{\sigma_{\bar t_i^g}^2+\sigma_{w_i^I}^2-2\rho_t\sigma_{\bar t_i^g}\sigma_{w_i^I}}}} e^{-\frac{t^2}{2}}dt,
        \end{align}
        
        Thus we have
        \begin{align}
            \lim_{\bar N_i^I\rightarrow 0}p_{i,pick}^{succ,t}\bar t_i^g &= \lim_{\bar t_i^g\rightarrow \infty} \frac{\bar t_i^g}{\sqrt{2\pi}}\int_{\frac{\log \bar t_i^g}{\sqrt{\sigma_{\bar t_i^g}^2+\sigma_{w_i^I}^2-2\rho_t\sigma_{\bar t_i^g}\sigma_{w_i^I}}}}^{\infty} e^{-\frac{t^2}{2}}dt \\
            & < \lim_{\bar t_i^g\rightarrow \infty} \frac{\bar t_i^g}{\sqrt{2\pi}}\int_{\frac{\log \bar t_i^g}{\sqrt{\sigma_{\bar t_i^g}^2+\sigma_{w_i^I}^2-2\rho_t\sigma_{\bar t_i^g}\sigma_{w_i^I}}}}^{\infty} te^{-\frac{t^2}{2}}dt \\
            &= \frac{1}{\sqrt{2\pi}}\cdot \lim_{\bar t_i^g\rightarrow \infty} \frac{\bar t_i^g}{\exp\left(\frac{(\log \bar t_i^g)^2}{2(\sigma_{\bar t_i^g}^2+\sigma_{w_i^I}^2-2\rho_t\sigma_{\bar t_i^g}\sigma_{w_i^I})}\right)}\\
            &= \frac{1}{\sqrt{2\pi}}\cdot \lim_{\bar t_i^g\rightarrow \infty} \frac{\bar t_i^g}{\bar t_i^{g^{\frac{\log \bar t_i^g}{2(\sigma_{\bar t_i^g}^2+\sigma_{w_i^I}^2-2\rho_t\sigma_{\bar t_i^g}\sigma_{w_i^I})}}}}\\
            & \rightarrow 0,
        \end{align}
        which yields $\lim_{\bar N_i^I\rightarrow 0}p_{i,pick}^{succ,t}\bar t_i^g = 0$.
        Therefore, we have
        \begin{align}
            \lim_{\mathbf{\bar N}^I\rightarrow 0} \bar t_{(i,n)} = \lim_{\mathbf{\bar N}^I\rightarrow 0}p_{i,pick}^{succ,w}p_{i,pick}^{succ,t}\cdot\bar t_i^g+w_i^I  = w_i^I.
        \end{align}

        \item When $0 < n< C_a$,
        \begin{align*}
            \lim_{\mathbf{\bar N}^I\rightarrow 0} \bar t_{(i,n)} &=\lim_{\mathbf{\bar N}^I\rightarrow 0} P_{(i,n)(i,n-1)}^ct_i^g+P_{(i,n)(i,n+1)}^c(w_i^{dg}+\bar t_i^g)+\sum_{j\in\{1,\dots,M\}}P_{(i,n)(j,n)}^cw_i^I\\
            &= \Phi\left(\frac{\log w_i^I-\frac{{\sigma_{t_i^g}}^2}{2}-\log t_i^g+\frac{{\sigma_{w_i^I}}^2}{2}}{\sqrt{\sigma_{w_i^I}^2+\sigma_{t_i^g}^2}}\right)p_{flex,i}^nt_i^g+ \\
            &\quad \left(1-\Phi\left(\frac{\log w_i^I-\frac{{\sigma_{t_i^g}}^2}{2}-\log t_i^g+\frac{{\sigma_{w_i^I}}^2}{2}}{\sqrt{\sigma_{w_i^I}^2+\sigma_{t_i^g}^2}}\right)p_{flex,i}^n\right)w_i^I.
        \end{align*}

        \item When $n = C_a$,
        \begin{align*}
            \lim_{\mathbf{\bar N}^I\rightarrow 0} \bar t_{(i,n)} &=\lim_{\mathbf{\bar N}^I\rightarrow 0} P_{(i,n)(i,n-1)}^ct_i^g+\sum_{j\in\{1,\dots,M\}}P_{(i,n)(j,n)}^cw_i^I\\
            &=\Phi\left(\frac{\log w_i^I-\frac{{\sigma_{t_i^g}}^2}{2}-\log t_i^g+\frac{{\sigma_{w_i^I}}^2}{2}}{\sqrt{\sigma_{w_i^I}^2+\sigma_{t_i^g}^2}}\right)p_{flex,i}^nt_i^g+\\
            &\quad \left(1-\Phi\left(\frac{\log w_i^I-\frac{{\sigma_{t_i^g}}^2}{2}-\log t_i^g+\frac{{\sigma_{w_i^I}}^2}{2}}{\sqrt{\sigma_{w_i^I}^2+\sigma_{t_i^g}^2}}\right)p_{flex,i}^n\right)w_i^I.
        \end{align*}
    \end{enumerate}
    This completes the proof.

\end{proof}

\begin{remark}
    Proposition \ref{prop:exist_t} indicates that for any given values of the crucial variables $\mathbf{r}^r$, $\mathbf{c}^{d_f}$ and $\mathbf{N}^I$, we can always find a unique non-negative average delivery time $t_{ij}^g$ for any $i$ and $j$. Proposition \ref{prop:exist_N} indicates that if $\mathbf{r}^r$,  $\mathbf{c}^{d_f}$ and $\mathbf{N}^I$ are in a reasonable range, there always exists a solution to equation (\ref{eq:get_bar_N_I}) which can yield a feasible $\bar N_i^I$, and a solution to equation (\ref{eq:w_dg}) which can yield a feasible $w_i^{dg}$.
    We have validated the condition in Proposition \ref{prop:exist_N} in the case study, which illustrates that it is fairly weak and holds for a large range of values of the model parameters.
    These two propositions are essential for developing a tailored algorithm that can compute the optimal solution efficiently.  
\end{remark}

%%%%%%%%%%%%%%%%%%%%%%%%%%%%%%%%%%%%%%%%%%%%%%%%%%%%%%%%

Based on the above discussion, all the decision variables and endogenous variables, including $\lambda_{ij}^r$, $\lambda_{ij}^{d_f}$, $\lambda_{ij}^{d_o}$, $w_i^r$, $w_i^{d_f}$, $t_{ij}^{d_f}, w_i^I, w_i^{dg}, P_{(i,n)}^c, p_{i,pick}^n, \bar N_i^I, N_i^{Ig}$, and $q$, can be expressed as a function of elementary variables $\mathbf{r}^r$, $\mathbf{c}^{d_f}$ and $\mathbf{N}^I$.
This enables us to transform the original profit-maximization problem  (\ref{optimalpricing_trip})-(\ref{constraint_optimapricing}) into an equivalent form with a significantly smaller number of variables and constraints. In particular, we show that after changing the decision variables to $\mathbf{r}^r$, $\mathbf{c}^{d_f}$ and $\mathbf{N}^I$, the following optimization problem can be established: 
\begin{proposition} \label{prop:transform}
The optimization problem (\ref{optimalpricing_trip}) is equivalent to the following optimization problem:
\begin{align}\label{optimization_change_variable}
    \max_{\mathbf{r}^{r},\mathbf{c}^{d_f},\mathbf{N}^I\ge N^I_{min}, \mathbf{\bar N}^I,\mathbf{w}^{dg}} \quad \sum_{i=1}^M \sum_{j=1}^M r_i^rt_{ij}(\lambda_{ij}^r+\lambda_{ij}^{d_o}) + r_{ij}^{d_f}\lambda_{ij}^{d_f}-N_0F_d(q)q,
\end{align}
\begin{subnumcases}{\label{constraint_change_variable}}
{
\bar N_i^I = N_i^I-t_i^g\sum_j \lambda_{ji}^{d_f}-N_{(i,C_a)}^I(1-p_{flex,i}^{C_a})},\label{constraint_N_change_variable}\\
w_i^{dg} = \frac{p_{i,pick}^{succ}(w_i^I,w_i^{dg},\bar t_i^g)\bar N_i^I}{\sum_{j=1}^M\lambda_{ij}^{d_f}},\label{constraint_w_change_variable}
\end{subnumcases}
where $N_{min}^I = (L/w_{max})^2$, $\mathbf{c}^{d_f}(i,j)=c_{ij}^{d_f}=\alpha_dw_i^{d_f}(N_i^{Ig}) +p_d (t_{ij}^{d_f})+r_{ij}^{d_f}$, $\lambda_{ij}^r,\lambda_{ij}^{d_f},\lambda_{ij}^{d_o},r_{ij}^{d_f}, q, N_{(i,C_a)}^I$, $w_i^I$, $\bar t_i^g$ and $p_{flex,i}^{C_a}$ can be represented as explicit functions of the decision variables $\mathbf{r}^{r},\mathbf{c}^{d_f},\mathbf{N}^I$, $\mathbf{\bar N}^I$ and $\mathbf{w}^{dg}$.
\end{proposition}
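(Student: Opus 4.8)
The plan is to prove the equivalence by exhibiting an objective-preserving correspondence between the feasible set of the original problem (\ref{optimalpricing_trip})--(\ref{constraint_optimapricing}) and the feasible set of the reformulated problem (\ref{optimization_change_variable})--(\ref{constraint_change_variable}). The central idea, already prepared by the derivation preceding the proposition, is that the elementary variables $\mathbf{r}^r$, $\mathbf{c}^{d_f}$ and $\mathbf{N}^I$ determine every remaining quantity \emph{except} the two implicitly-defined variables $\bar N_i^I$ and $w_i^{dg}$, which satisfy the fixed-point equations (\ref{eq:get_bar_N_I}) and (\ref{eq:w_dg}). Rather than solving these equations in closed form, I would \emph{lift} them to additional decision variables $\mathbf{\bar N}^I$ and $\mathbf{w}^{dg}$ and impose the fixed-point equations as the explicit equality constraints (\ref{constraint_N_change_variable}) and (\ref{constraint_w_change_variable}). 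Establishing equivalence then reduces to checking that the two retained constraints, together with the explicit substitutions, reproduce exactly the full constraint system (\ref{constraint_optimapricing}).

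First I would make the representability statement precise: given any tuple $(\mathbf{r}^r,\mathbf{c}^{d_f},\mathbf{N}^I,\mathbf{\bar N}^I,\mathbf{w}^{dg})$, the chain of definitions in the preceding discussion yields $\lambda_{ij}^r$ via (\ref{lambda_r_represent}), then $\lambda_{ij}^{d_f},\lambda_{ij}^{d_o}$, then $w_i^I$ from (\ref{demand_constraint2}), the transition matrix $\mathbf{P}$ and hence $t_{ij}^{d_f}$ from (\ref{eq:t_df_represent}); with $\mathbf{\bar N}^I$ fixed one obtains $\bar t_i^g$ from (\ref{eq:def_bar_t_i_g}), and with $\mathbf{w}^{dg}$ fixed one obtains $p_{i,pick}^{succ}$, $p_{i,pick}^n$, the CTMC rates $\nu_{(z,n)}$, the limiting probabilities $P_{(i,n)}^c$, and finally $N_{(i,C_a)}^I$ and $N_i^{Ig}$. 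Thus each symbol appearing in (\ref{constraint_N_change_variable}) and in the objective is an explicit function of the five decision blocks, which legitimizes the reformulated program as written.

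For the forward direction I would take any original-feasible $(\mathbf{r}^r,\mathbf{r}^{d_f},q)$ together with the endogenous variables it induces through (\ref{constraint_optimapricing}), and define the candidate point by keeping $\mathbf{r}^r$, setting $\mathbf{N}^I$ to the induced idle-driver counts, setting $c_{ij}^{d_f}=\alpha_dw_i^{d_f}(N_i^{Ig})+p_d(t_{ij}^{d_f})+r_{ij}^{d_f}$, and setting $\mathbf{\bar N}^I,\mathbf{w}^{dg}$ to their induced values. Constraint (\ref{constraint_N_change_variable}) is then simply (\ref{eq:def_bar_N_I}) and (\ref{constraint_w_change_variable}) is (\ref{eq:w_dg}), both of which hold at any original-feasible point, while the bound $\mathbf{N}^I\ge N_{min}^I$ is equivalent to the waiting-time cap (\ref{upper_waiting}) because $w_i^r=L_i/\sqrt{N_i^I}\le w_{max}\Leftrightarrow N_i^I\ge (L_i/w_{max})^2$. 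For the reverse direction I would start from a reformulated-feasible point, recover $r_{ij}^{d_f}=c_{ij}^{d_f}-\alpha_dw_i^{d_f}(N_i^{Ig})-p_d(t_{ij}^{d_f})$ and $q$ from (\ref{q_represent}), and verify each member of (\ref{constraint_optimapricing}) in turn: the demand equations and the Little's-Law relations hold by the explicit definitions, the CTMC block (\ref{constraint_CTMC}) holds because $P_{(i,n)}^c$ is taken as the stationary solution of (\ref{eq:limit_prob}), and the supply conservation (\ref{supply_conservation_const}) holds by the way $q$ was defined. Because the objective (\ref{optimalpricing_trip}) is the same functional expression in both formulations and every variable maps consistently under the correspondence, the objective values coincide, giving equivalence of the two problems.

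The step I expect to be the main obstacle is verifying that the \emph{two} retained equality constraints (\ref{constraint_N_change_variable})--(\ref{constraint_w_change_variable}) are genuinely sufficient to recover the \emph{entire} implicit block (\ref{delivery_supply_constraint}) and (\ref{constraint_CTMC}). The difficulty is the circular dependence: $N_i^{Ig}$ feeds into $w_i^{d_f}$ and hence into $c_{ij}^{d_f}$, while $\bar N_i^I$ and $w_i^{dg}$ feed into the CTMC that produces $N_i^{Ig}$; I must argue that once $\mathbf{\bar N}^I$ and $\mathbf{w}^{dg}$ are pinned by (\ref{constraint_N_change_variable})--(\ref{constraint_w_change_variable}), the downstream quantities are not only computable but \emph{uniquely} consistent with the original definitions, so no spurious solutions are introduced by the lifting. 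The well-posedness results of Proposition \ref{prop:exist_t} and Proposition \ref{prop:exist_N} are exactly what I would invoke here to guarantee that this consistent resolution exists and is unique on the relevant domain, and I would also confirm that the invertible relation between $r_{ij}^{d_f}$ and $c_{ij}^{d_f}$ (with $w_i^{d_f}$ and $t_{ij}^{d_f}$ held fixed by the other variables) makes the change of variable a genuine bijection rather than merely a substitution.
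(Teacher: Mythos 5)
Your proposal is correct and follows essentially the same route as the paper's proof: represent all endogenous quantities as explicit functions of the elementary variables $(\mathbf{r}^r,\mathbf{c}^{d_f},\mathbf{N}^I)$, lift the two fixed-point equations (\ref{eq:get_bar_N_I}) and (\ref{eq:w_dg}) into the constraints (\ref{constraint_N_change_variable})--(\ref{constraint_w_change_variable}) by treating $\mathbf{\bar N}^I$ and $\mathbf{w}^{dg}$ as additional decision variables, and absorb the waiting-time cap (\ref{upper_waiting}) into the box constraint $\mathbf{N}^I\ge N^I_{min}$. The paper's own proof asserts this correspondence in a few lines, while your two-directional feasibility argument (and the appeal to Propositions \ref{prop:exist_t} and \ref{prop:exist_N} to ensure the lifted variables resolve the circular dependencies consistently) simply makes explicit what the paper leaves implicit.
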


\begin{proof}
From equations (\ref{lambda_r_represent}) to (\ref{q_represent}), it is easily to show that $\lambda_{ij}^r,\lambda_{ij}^{d_f},\lambda_{ij}^{d_o},r_{ij}^{d_f}$ and $q$ for any zone $i$ and zone $j$ can be represented by the decision variables while all of the equation constraints in (\ref{constraint_optimapricing}) are respected.
Note that from Proposition \ref{prop:exist_N}, we need to solve nonlinear equations (\ref{eq:get_bar_N_I}) to obtain $\mathbf{\bar N}^I$ and solve nonlinear equations (\ref{eq:w_dg}) to obtain $\mathbf{w^{dg}}$.
Equivalently, we incorporate equations (\ref{eq:get_bar_N_I}) and (\ref{eq:w_dg}) for all $i$ as constraints and treat $\bar N_i^I$ and $w_i^{dg}$ as decision variables.
The only remaining constraint is (\ref{upper_waiting}), which is incorporated into the box-constraint $\mathbf{N}^I\ge N_{min}^I$.
Therefore,  the objective function (\ref{optimization_change_variable}) and the constraints (\ref{constraint_change_variable}) are explicit functions of the new decision variables that can satisfy all of the constraints (\ref{constraint_optimapricing}), which completes the proof.
\end{proof}

To quickly obtain a good initial guess for the optimization problem   (\ref{optimization_change_variable}), 
we note that the third term of the ride-hand side of (\ref{constraint_N_change_variable}) depends on $\bar{N}_i^I$, making  (\ref{constraint_N_change_variable}) a fixed point equations for $\bar{N}_i^I$. However, in practice, the number of vehicles that reach the capacity for flexible parcels in zone $i$ multiplied by the probability that it does not have a packaged delivered to zone $i$ can be very small, thus by removing the third term $N_{(i,C_a)}^I(1-p_{flex,i}^{C_a})$, we can reach a good approximation of $\bar{N}_i^I$ without having to solve the fixed-point equation. In particular, instead of imposing constraint (\ref{constraint_change_variable}), we approximate $\bar N_i^I$ by:
\begin{align}\label{relax_bar_N_I}
    \bar N_i^I = N_i^I-t_i^g\sum_j\lambda_{ji}^{d_f},
\end{align}
In this case, the right-hand side of (\ref{relax_bar_N_I}) is independent from $\bar N_i^I$, and $\bar N_i^I$ is not the decision variables in the approximated problem, but an intermediate variable to get the objective function.

To deal with constraints (\ref{constraint_w_change_variable}), we note that when $\bar N_i^I$ and $\lambda_{ij}^{d_f}$ are fixed, $w_i^{dg} - \frac{p_{i,pick}^{succ}(w_i^I,w_i^{dg},\bar t_i^g)\bar N_i^I}{\sum_{j=1}^M\lambda_{ij}^{d_f}}$ is monotonously increasing in $w_i^{dg}$. Thus we can incorporate them in the objective function and use bisection method to get the solutions.

The optimization problem then becomes:
\begin{align}\label{optimization_approximate}
    \max_{\mathbf{r}^{r},\mathbf{c}^{d_f},\mathbf{N}^I\ge N^I_{min}} \quad \sum_{i=1}^M \sum_{j=1}^M r_i^rt_{ij}(\lambda_{ij}^r+\lambda_{ij}^{d_o}) + r_{ij}^{d_f}\lambda_{ij}^{d_f}-N_0F_d(q)q.
\end{align}
Note that problem (\ref{optimization_approximate}) is an unconstrained problem, which  is easier to solve. In the meanwhile, the neglected term is expected to be small and the optimal solution to (\ref{optimization_approximate}) is naturally close to that of (\ref{optimization_change_variable}). Therefore, after quickly obtaining a solution to problem (\ref{optimization_approximate}), we can resolve the reformulated problem  (\ref{optimization_change_variable}) using the solution to (\ref{optimization_approximate}) as the initial guess. In particular, given the solution to (\ref{optimization_approximate}), we will first  derive the corresponding actual $\bar N_i^I$ and $w_i^{dg}$ by (\ref{relax_bar_N_I}) and bisection method, and then treat the optimal solutions to (\ref{optimization_approximate}) and the derived $\bar N_i^I$ and $w_i^{dg}$ as the initial guess to solve the original problem (\ref{optimization_change_variable}), using standard interior-point algorithm. This enables use to solve the original  (\ref{optimization_change_variable}) with a very good initial guess. 
The overall procedure is summarized in Algorithm \ref{algo:1}.

\begin{algorithm}[ht!] 
\caption{Solution algorithm to the optimization problem} \label{algo:1}
\hspace*{0.02in} {\bf Input:} 
model parameters \\
\hspace*{0.02in} {\bf Output:} 
optimal decisions of the platform $\mathbf{r}^p$, $\mathbf{r}^g$ and $q$, 
\begin{algorithmic}[1]
\STATE Initialization: $\mathbf{y}_0=(\mathbf{r}^r_0, \mathbf{c}^{d_f}_0, \mathbf{N}^I_0)$;
\STATE Solve problem \begin{align}\nonumber
    \max_{\mathbf{r}^{r},\mathbf{c}^{d_f},\mathbf{N}^I\ge N^I_{min}} \quad \sum_{i=1}^M \sum_{j=1}^M r_i^rt_{ij}(\lambda_{ij}^r+\lambda_{ij}^{d_o}) + r_{ij}^{d_f}\lambda_{ij}^{d_f}-N_0F_d(q)q.
\end{align} with $\bar N_i^I = N_i^I-t_i^g\sum_j\lambda_{ji}^{d_f}$ from initial guess $\mathbf{y}_0$, and $w_i^{dg}$ from solving (\ref{constraint_w_change_variable}) using bisection method. Get optimal solution $\mathbf{y}_1=(\mathbf{r}^r_1, \mathbf{c}^{d_f}_1, \mathbf{N}^I_1)$;
\STATE Given $\mathbf{y}_1$, get $\mathbf{\bar N}^I_1$ from $\bar N_i^I = N_i^I-t_i^g\sum_j\lambda_{ji}^{d_f}$, and solve equations\begin{align*}
    w_i^{dg} = \frac{p_{i,pick}^{succ}(w_i^I,w_i^{dg},\bar t_i^g)\bar N_i^I}{\sum_{j=1}^M\lambda_{ij}^{d_f}}
\end{align*} to get $\mathbf{w}_1^{dg}$.
\STATE Using $\mathbf{y}_1$, $\mathbf{\bar N}^I_1$ and $\mathbf{w}_1^{dg}$ as initial guess, solve problem\begin{align*}
    \max_{\mathbf{r}^{r},\mathbf{c}^{d_f},\mathbf{N}^I\ge N^I_{min}, \mathbf{\bar N}^I,\mathbf{w}^{dg}} \quad \sum_{i=1}^M \sum_{j=1}^M r_i^rt_{ij}(\lambda_{ij}^r+\lambda_{ij}^{d_o}) + r_{ij}^{d_f}\lambda_{ij}^{d_f}-N_0F_d(q)q,
\end{align*}
\begin{subnumcases}{}\nonumber
{
\bar N_i^I = N_i^I-t_i^g\sum_j \lambda_{ji}^{d_f}-N_{(i,C_a)}^I(1-p_{flex,i}^{C_a})},\\ \nonumber
w_i^{dg} = \frac{p_{i,pick}^{succ}(w_i^I,w_i^{dg},\bar t_i^g)\bar N_i^I}{\sum_{j=1}^M\lambda_{ij}^{d_f}},
\end{subnumcases}to get optimal solution $\mathbf{y}_2=(\mathbf{r}^r_2, \mathbf{c}^{d_f}_2, \mathbf{N}^I_2, \mathbf{\bar N}^I_2, \mathbf{w}_2^{dg}),$;
\STATE Recover $\mathbf{r}^r$, $\mathbf{r}^{d_f}$ and $q$  from $\mathbf{y}_2$ based on equations (\ref{lambda_r_represent}) to (\ref{q_represent});
\RETURN $\mathbf{r}^r$, $\mathbf{r}^{d_f}$ and $q$.
\end{algorithmic}
\end{algorithm}

\begin{remark}
%The result of Proposition \ref{prop:transform} is important for the development of efficient algorithms to solve the profit-maximization problem, because compared to the original formulation (\ref{optimalpricing_trip}), it significantly reduces the number of decision variables and the number of constraints at the same time.\footnote{Note that the original problem (\ref{optimalpricing_trip}) has a large number of endogenous variables, such as $\lambda_{ij}^r$, $\lambda_{ij}^{d_f}$, $\lambda_{ij}^{d_o}$, $w_i^r$, $w_i^{d_f}$, $t_{ij}^{d_f}, w_i^I, w_i^{dg}, P_{(i,n)}^c, p_{i,pick}^n, \bar N_i^I, N_i^{Ig}$, and $q$, whereas the reformulated problem does not.} 
We would like to emphasize that our algorithm significantly outperforms directly optimizing the original problem (\ref{optimalpricing_trip}). The original problem has a large number of endogenous variables and nonlinear constraints. All the endogenous variables need to be treated as decision variables in the solvers. If the initial guess of the original problem  (\ref{optimalpricing_trip}) is poor, it may takes a relatively long time to converge, or even converge to an infeasible point.  Our algorithm improves this situation by the change of variables and warm start from an unconstrained problem. The change of variables significantly reduce the number of constraints and the number of decision variables, which accelerates the convergence of the algorithm. The warm start provide a good initial guess for the optimization problem after the change of variable, which can further speed up the computation. A comparison of the elapsed time of the proposed problem and directly solving the original problem is presented in Section \ref{sec:simulation}. 
\end{remark}

%%%%%%%%%%%%%%%%%%%%%%%%%%%%%%%%%%%%%%%%%%%%%%%%%%%%%%%%
\section{Case Study}\label{sec:simulation}
This section presents a case study for San Francisco to evaluate the performance of the proposed model and algorithm.
In particular, a service region with 11 zones in San Francisco is considered, which is divided based on the zip code. The map of zip-code areas for San Francisco with the corresponding zone indexes is shown in Figure \ref{fig:SF_zip}.\footnote{Some areas with distinct zip-codes have similar level of demand and are geographically close, thus they are incorporated into a single zone for sake of simplicity.} In this case study, the benefits of integrating ride-sourcing and parcel delivery services will be quantified by comparing the integrated case with a few benchmark cases.  In particular, we compare the profits of the platform, the number of drivers, and the arrival rate of passengers\footnote{The number of drivers and arrival rate of passengers are monotonically related to driver surplus and passenger surplus, respectively. Therefore, we use them as an indicator of average surplus for each type of stakeholder. } under our proposed integrated business model and the benchmark case, to demonstrate the benefits of the proposed integrated business model. Intuitively, the extend to which stakeholders can benefit from the proposed business model depends on several crucial model parameters, such as the matching efficiency, and  the spatial distribution of demand of both ride-sourcing and parcel delivery orders. To test the robustness of the results obtained from the proposed model, we perform a few sensitivity studies, where the model parameters of the matching function and the spatial distribution of parcel delivery demand are perturbed. The details of the simulation studies will be presented hereafter.

%. The first is the realistic case.
%{\color{blue} While realistic ride-sourcing demand can be synthesized based on real data for San Francisco \cite{li2021spatial}  actual origin-destination (OD) data for parcel delivery orders is generally unavailable. Consequently, we create the delivery demand pattern using population data \cite{zipcode2023database} and a map of registered businesses in San Francisco \cite{zipcode2023business}.To test the robustness of our proposed framework, we also consider another extreme demand pattern  where the potential demand is inversely related to ride-sourcing demand (i.e., the OD pair with the highest ride-sourcing demand has the lowest potential delivery demand).

%For this reason, we will evaluate the following two extreme cases: (a) the OD distributions  of the parcel delivery orders are proportional to that of the ride-sourcing orders, so that they are in the same direction; (b) the OD distributions of parcel delivery orders are inversely proportional to that of the ride-sourcing orders, so that they are in the opposite direction. Through this set of experiments, we will demonstrate that the proposed business model can benefit the relevant stakeholders in diverse scenarios.  
\begin{figure}[!htb]
    \centering
    \includegraphics[width=\textwidth]{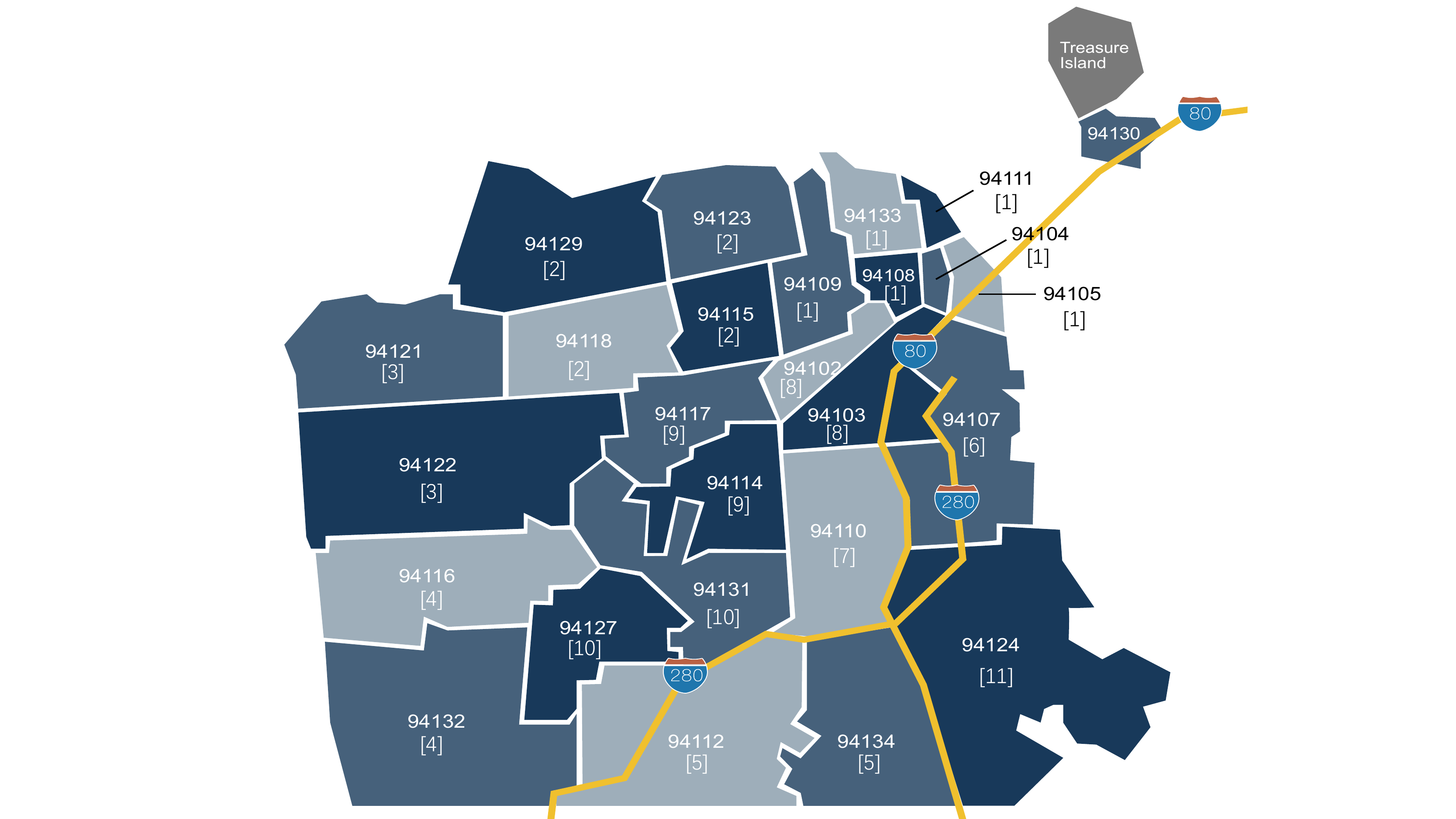}
    \caption{Zip-code areas for San Francisco labeled with corresponding zone indexes. (Figure courtesy: {\url{https://www.usmapguide.com/california/san-francisco-zip-code-map/}})}
    \label{fig:SF_zip}
\end{figure}

\subsection{Calibration of Parameters}
For the ride-sourcing market, we use the data that comprises the potential arrival rates of passengers for each origin and destination at the zip-code level, which is synthesized based on the real data of total pick-up and drop-off orders in each zone \cite{SFCTA2016data}, combined with a trip-distribution model calibrated from survey data. We calibrate the parameters so that both the parameters and the optimal solution to problem (\ref{optimalpricing_trip}) are close to the real-world data of San Francisco. The model parameters are:
\begin{align*}
    \Theta=\{\lambda_{ij}^{r,0},\lambda_{ij}^{d,0},N_0,L_i,\epsilon,\eta,\sigma,c_{ij}^{r,0},c_{ij}^{d,0},\alpha_r, \alpha_d, q_0,w_{max}\}
\end{align*}
We set $\lambda_{ij}^{r,0}$ to satisfy $0.15\lambda_{ij}^{r,0}=\lambda_{ij}^r$ where $\lambda_{ij}^r$ is the synthesized data that captures the arrival rate for ride-sourcing trips from zone $i$ to zone $j$, thus 15\% of the potential passengers choose ride-sourcing as their travel mode \cite{castiglione2016tncs}. The travel time for each origin-destination pair $t_{ij}$ is estimated in Google map, and the travel cost for the outside travel options $c_{ij}^{d,0}$ are assumed to be proportional to the travel time $t_{ij}$.
We manually tuned the sensitivity parameters for ride-sourcing passengers and drivers so that the optimal passengers demand, drivers supply, trip fares and driver wages are close to the real data without delivery services.
For instance, the average optimal ride-sourcing trip fare under the selected parameters is 
16.34 \$/trip, which is close to the fare estimates \cite{Uber2023fare} for a 2.6 mile trip \cite{castiglione2016tncs}.
The average optimal  wage for ride-sourcing drivers is 25.77 \$/hr, which is consistent with the hourly wages offered by Uber in San Francisco \cite{Uber2023wage}.
The average arrival rate of ride-sourcing passengers is 167.34 /min, and the total number of drivers on the platform is 3589, which are close to the demand and supply reported in \cite{castiglione2016tncs}.

For the parcel delivery market, there is lack of data on the OD demand of parcel orders, thus we synthesise the data based on the related real data of San Francisco. In particular, we consider two cases where the potential demand of parcel delivery customers is calibrated from the real data of San Francisco (referred to as the realistic case),  or inversely proportional to the potential demand of ride-sourcing passengers (referred to as the opposite direction case). For the realistic case, we perform order/trip generation (i.e., estimate the number of delivery order/trip originating in or destined for each zone) and order/trip distribution (i.e., determine where the trips produced in each zone will go)  of delivery services based on the realistic data of San Francisco. For simplicity, we consider delivery services that are sent from a commercial site (e.g,. grocery store, restaurants, etc) to the customer's home. 
In this case,  the trip production in zone $i$ (i.e., the number of delivery orders that have home-end in zone $i$) is determined by the zone's population, and the trip attraction in zone $i$ (i.e., the number of delivery orders that have non-home end in zone $i$) depends on the number of registered businesses in the zone. The number of registered business is obtained from The Treasurer \& Tax Collector’s Office \cite{zipcode2023business} and the population size is obtained from U.S. Census Bureau \cite{zipcode2023database}.  Let $P_i$ denote the total number of delivery orders produced in zone $i$, let $A_i$ denote the total number of delivery orders attracted to zone $i$, let $Reg_i$ denote the number of registered businesses at zone $i$ and $Pop_i$ be the population of zone $i$, then we assume
    \begin{align}
        P_i &= k_p\cdot Pop_i,\\
        A_i &= k_A\cdot Reg_i,
    \end{align}
where $P_i$ and $A_i$ are assumed to linearly depend on  $Pop_i$ and $Reg_i$ with positive coefficients $k_P$ and $k_A$.
    We employ the well-established ``gravity model'' to characterize the order distribution:
    \begin{align}
        \lambda_{ij}^{g,0} = \frac{P_jA_iF(t_{ij})}{\sum_{k=1}^MA_kF(t_{kj})},
    \end{align}
    where $F(t_{ij})$ is the friction factor for interchange $ij$, which is a decreasing function in travel time $t_{ij}$. In the case study, we define the friction factor by
    \begin{align}
        F(t_{ij}) = \frac{1}{t_{ij}}.
    \end{align}

Note that $k_p$ and $k_A$ are determined by the ratio of total potential delivery demand and total potential ride-sourcing demand.
Let $c_{ij}^{d,0}$ also be proportional to the travel time $t_{ij}$.
The average delivery fare for flexible delivery (i.e., 9.04 \$/order) is smaller than the average delivery fare for on-demand delivery services (i.e., 14.22 \$/order), which conforms to our expectation.
As reported in \cite[~p.58]{Uber2021report},
the revenue from the freight sector in 2021 is \$ 2132 millions, which is 31\% of the revenue from the ride-sourcing sector (\$ 6953 millions).
In the case with realistic data, when the ratio of the total potential delivery demand and the total potential ride-sourcing demand is 0.4, the optimal revenue of the delivery services under our selected parameters (863.4 \$/min) is 30.6\% of the revenue from ride-sourcing services (2825.7 \$/min), which is close to the ratio reported in \cite[~p.58]{Uber2021report}.
For the opposite-direction case, we let the spatial distribution of the potential delivery demand be inversely proportional to the spatial distribution of the ride-sourcing demand. For instance, if $\lambda_{ij}^{r,0}$ is the largest over all OD pairs of ride-sourcing demand, then $\lambda_{ij}^{d,0}$ is the smallest along all OD pairs of parcel delivery demand\footnote{The reason why we consider this case is that this would greatly change the spatial distribution of the overall demand, and we can evaluate whether this business model can work or not when the delivery demand is not aligned with the ride-sourcing demand.}.

The values of the parameters used in our numerical studies are summarized below:
\begin{align}
    N_0 = 10000,\ L_i = 43,\ \epsilon = 0.12,\ \eta = 0.16,\ \sigma = 0.18,\ \alpha_r = 3.2,\ \alpha_d = 0.7,\ q_0 = 29\ \$/hr,\ w_{max} = 6\ min
\end{align}

The delivery customer's disutility of the package delivery time is charaterized by:
\begin{align}\label{p_d_expression}
    p_d(t) = 25\left(\tanh\left(\frac{t}{200}-5\right)+1\right).
\end{align}
We comment that the properties of the sigmoid function (\ref{p_d_expression}) can well capture the customer's disutility to the delivery time.
When the delivery time is smaller than a threshold (e.g., one day), the growth rate of $p_d(t)$ is relatively small, which is consistent to the reality: If the parcel to be delivered is not urgent,  then customers are more sensitive to delivery fare than delivery time when it is within the tolerance (e.g., the same day).
However, when the delivery time is larger than some threshold, the growth rate of $p_d(t)$ has a dramatic increase, which indicates that if the delivery time exceeds the customer's expectation, their sensitivity to delivery time will sharply increase.

\subsection{Simulation Results under Realistic Case Parcel Demand}
We first use the calibrated model parameters to evaluate the proposed model and algorithm under the assumption that the spatial distribution of potential delivery demand depends on the commercial popularity and population of each zone.  In the rest of this subsection, we will report the computational performance of the proposed algorithm, and evaluate how the proposed integrated model affects multiple stakeholders in the ride-sourcing and parcel delivery market.

To evaluate the performance of the proposed solution method, we randomly select 10 sets of initial guesses within a specific range and set the delivery demand to ride-sourcing demand ratio at 0.4 (i.e., $\sum_{i,j}\lambda_{ij}^{d,0} = 0.4\sum_{i,j}\lambda_{ij}^{r,0}$). We then execute Algorithm 1, document its computation time and optimal profits, and compare these with the results from directly solving the original problem (problem (44) in the revised draft). The initial guesses are randomly chosen within the following ranges: \begin{align} \nonumber
        &r_i^{r}\in [1,2]\ \$/min,\quad r_{ij}^{d_f}\in [5,15]\ \$,\quad q\in [20,30]\ \$/hr, \quad\lambda_{ij}^r \in [0.15\lambda_{ij}^{r,0},0.25\lambda_{ij}^{r,0}],\\ \nonumber
        &\lambda_{ij}^{d_f}\in [0.1\lambda_{ij}^{d,0},0.2\lambda_{ij}^{d,0}],\quad\lambda_{ij}^{d_o}\in [0.1\lambda_{ij}^{d,0},0.2\lambda_{ij}^{d,0}],\quad N_i^I\in [150,250],\quad \bar N_i^I\in [50,150],\\ \label{initial_guess_range}
        &w_i^{dg}\in[5,15]\ min, c_{ij}^{d_f}\in [10,20] \$/min.
    \end{align}
The results, summarized in Table \ref{tab:compare_realistic}, show that regardless of initial guesses, Algorithm 1 can converge to similar profits (around 1900 \$/min) within 5 minutes. However, the computation and convergence performance of the original problem significantly varies with the initial guesses. In most instances, if the initial guess is inadequate, the convergence time can extend to several hours—a stark contrast to our proposed algorithm's computation time (a few minutes). The optimal profits from directly solving (44) are also generally lower than those from our algorithm. Only in rare instances (e.g., instance 7 in Table \ref{tab:compare_realistic}) does the objective of directly solving (44) outperform that of our algorithm, but at a slower rate. Therefore, we can conclude that our proposed algorithm can improve computation time, improve optimization objective values and improve algorithmic stability with respect to initial guess compared with directly solving the original problem using interior-point method.

\begin{table}[htb!]
    \centering
    \caption{Computation comparison  under realistic case.}
    \begin{tabular}{lccc}
    \hline
    \hline
       No.  &  Algorithm & Profits (\$/min) & Time (sec) \\
     \hline
     \multirow{2}{4em}{1} & Algorithm 1 & 1885.4 & 208.7 \\
     & Solve (44) & 1549.8 & 13325.5 \\
     \hline
     \multirow{2}{4em}{2} & Algorithm 1 & 1908.8 & 123.8 \\
     & Solve (44) & 1537.6 & 13369.4
     \\
     \hline
     \multirow{2}{4em}{3} & Algorithm 1 & 1923.7 & 160.3 \\
     & Solve (44) & 1622.1 & 12420.8 \\
     \hline
     \multirow{2}{4em}{4} & Algorithm 1 & 1844.6& 135.8 \\
     & Solve (44) & 1581.8 & 10293.1 \\
     \hline
     \multirow{2}{4em}{5} & Algorithm 1 & 1887.9 & 161.4 \\
     & Solve (44) & 1627.1 & 8730.0 \\
     \hline
     \multirow{2}{4em}{6} & Algorithm 1 & 1912.6 & 149.0\\
     & Solve (44) & 1629.2 & 21592.8 \\
     \hline
     \multirow{2}{4em}{7} & Algorithm 1 & 1832.5 & 179.8 \\
     & Solve (44) & 1515.2 & 14898.1 \\
     \hline
     \multirow{2}{4em}{8} & Algorithm 1 & 1902.2 & 89.0 \\
     & Solve (44) & 1591.5 & 12924.4 \\
     \hline
     \multirow{2}{4em}{9} & Algorithm 1 & 1847.7 & 133.5 \\
     & Solve (44) & 1925.4 & 412.7 \\
     \hline
     \multirow{2}{4em}{10} & Algorithm 1 & 1923.1 & 186.9 \\
     & Solve (44) & 1650.7 & 14458.0\\
     \hline
    \end{tabular}
    \label{tab:compare_realistic}
    \end{table}

To evaluate the impacts of the new business model on the stakeholders in the ride-sourcing market, including passengers, drivers and the ride-sourcing platform,
we fixed the ride-sourcing demand (using the real data), and gradually increase the level of parcel delivery demand. We normalize the level of delivery demand with respect to the ride-sourcing demand, and define it to be the ratio between these two, which is perturbed from 0 to 0.8. In particular, a higher ratio (or higher level of parcel delivery) indicates a large volume of parcel delivery orders, a ratio of 0 indicates that there is ride-sourcing demand only, and a ratio of 1 indicates that the ride-sourcing demand and parcel delivery demand are the same. The corresponding platform profits, total number of drivers on the platform, and total arrival rate of ride-sourcing passengers are presented in Figure \ref{fig:profits_same_onlyboth}-\ref{fig:passenger_same_onlyboth}.
It is clear that as the ratio increases, the platform profit, number of drivers, and passenger arrival rate both increases, indicating that integration with delivery services can lead to a Pareto improvement for the ride-sourcing markets by increased platform profits, increased average wages for drivers, and reduced costs for ride-sourcing passengers.
Intuitively, this is because the integration of the two services provide more orders to the platform, better utilize driver's idle time, and thus motivate the platform to lower the ride fare and attract more passengers.

\begin{figure}[t]
\begin{minipage}[b]{0.3\textwidth}
\centering
% This file was created by matlab2tikz.
%
%The latest updates can be retrieved from
%  http://www.mathworks.com/matlabcentral/fileexchange/22022-matlab2tikz-matlab2tikz
%where you can also make suggestions and rate matlab2tikz.
%
\definecolor{mycolor1}{rgb}{0.00000,0.44700,0.74100}%
\definecolor{mycolor2}{rgb}{0.85000,0.32500,0.09800}%
\begin{tikzpicture}

\begin{axis}[%
width=1.3in,
height=1.6in,
at={(0.729in,0.583in)},
scale only axis,
xmin=0,
xmax=0.8,
xlabel style={font=\color{white!15!black}},
xlabel={Level of delivery demand},
ymin=1000,
ymax=2600,
ylabel style={font=\color{white!15!black}},
ylabel={Platform profits},
axis background/.style={fill=white},
legend style={at={(0.03,0.97)}, anchor=north west, legend cell align=left, align=left, draw=white!15!black}
]
\addplot [color=mycolor1, line width=1.0pt]
  table[row sep=crcr]{%
0	1193.23021737378\\
0.1	1338.47399195162\\
0.2	1508.54474641935\\
0.3	1679.20752969288\\
0.4	1850.62289152593\\
0.5	2019.89753247542\\
0.6	2184.65607099236\\
0.7	2347.46010021115\\
0.8	2508.73059993365\\
};

\end{axis}
\end{tikzpicture}%
\vspace*{-0.3in}
\caption{The profits of the platform (\$/min) under different level of delivery demand (normalized with respect to ride-sourcing demand).}
\label{fig:profits_same_onlyboth}
\end{minipage}
\begin{minipage}[b]{0.03\textwidth}
\hfill
\end{minipage}
\begin{minipage}[b]{0.3\textwidth}
\centering
% This file was created by matlab2tikz.
%
%The latest updates can be retrieved from
%  http://www.mathworks.com/matlabcentral/fileexchange/22022-matlab2tikz-matlab2tikz
%where you can also make suggestions and rate matlab2tikz.
%
\definecolor{mycolor1}{rgb}{0.00000,0.44700,0.74100}%
\definecolor{mycolor2}{rgb}{0.85000,0.32500,0.09800}%
\begin{tikzpicture}

\begin{axis}[%
width=1.3in,
height=1.6in,
at={(0.729in,0.583in)},
scale only axis,
xmin=0,
xmax=0.8,
xlabel style={font=\color{white!15!black}},
xlabel={Level of delivery demand},
ymin=3500,
ymax=4800,
ylabel style={font=\color{white!15!black}},
ylabel={Number of drivers},
axis background/.style={fill=white},
legend style={at={(0.03,0.97)}, anchor=north west, legend cell align=left, align=left, draw=white!15!black}
]
\addplot [color=mycolor1, line width=1.0pt]
  table[row sep=crcr]{%
0	3586.93052445673\\
0.1	3703.08378664974\\
0.2	3825.29699850254\\
0.3	3955.66050699846\\
0.4	4091.85399355337\\
0.5	4229.03834568042\\
0.6	4360.64125601154\\
0.7	4492.35712014471\\
0.8	4613.58051828568\\
};

\end{axis}
\end{tikzpicture}%
\vspace*{-0.3in}
\caption{The total number of drivers under different level of delivery demand (normalized with respect to ride-sourcing demand).}
\label{fig:drivers_same_onlyboth}
\end{minipage}
\begin{minipage}[b]{0.03\textwidth}
\hfill
\end{minipage}
\begin{minipage}[b]{0.3\textwidth}
\centering
% This file was created by matlab2tikz.
%
%The latest updates can be retrieved from
%  http://www.mathworks.com/matlabcentral/fileexchange/22022-matlab2tikz-matlab2tikz
%where you can also make suggestions and rate matlab2tikz.
%
\definecolor{mycolor1}{rgb}{0.00000,0.44700,0.74100}%
\definecolor{mycolor2}{rgb}{0.85000,0.32500,0.09800}%
\begin{tikzpicture}

\begin{axis}[%
width=1.3in,
height=1.6in,
at={(0.729in,0.583in)},
scale only axis,
xmin=0,
xmax=0.8,
xlabel style={font=\color{white!15!black}},
xlabel={Level of delivery demand},
ymin=166,
ymax=174,
ylabel style={font=\color{white!15!black}},
ylabel={Passenger arrival rate},
axis background/.style={fill=white},
legend style={at={(0.03,0.97)}, anchor=north west, legend cell align=left, align=left, draw=white!15!black}
]
\addplot [color=mycolor1, line width=1.0pt]
  table[row sep=crcr]{%
0	167.311358207993\\
0.1	167.324026046956\\
0.2	167.820078181196\\
0.3	168.606695903133\\
0.4	169.78974642499\\
0.5	171.016057484186\\
0.6	172.001214132678\\
0.7	172.914697005952\\
0.8	173.495198531011\\
};

\end{axis}
\end{tikzpicture}%
\vspace*{-0.3in}
\caption{The arrival rate of total passengers (/min) under different level of delivery demand (normalized with respect to ride-sourcing demand).}
\label{fig:passenger_same_onlyboth}
\end{minipage}
\end{figure}

To further understand the market outcomes over the transportation network, we fix the level of delivery demand and demonstrate the results at the zonal level.  In particular, we choose the ratio of delivery demand to ride-sourcing demand to be 0.4 (e.g. $\sum_{i,j}\lambda_{ij}^{g0} = 0.4\sum_{i,j}\lambda_{ij}^{p0}$) as an example\footnote{We choose the ratio 0.4 since it conforms to the relative market size of delivery services and ride-sourcing services. Note that we have also compared the results under different demand ratios (i.e., from 0.1 to 0.8), and we found that the conclusions we drew from the ratio of ``0.4'' are also applicable to other ratios.}.
The spatial distributions of potential ride-sourcing demand and potential delivery demand are shown in Figure \ref{fig:potential_delivery_demand}.
The value corresponding to each zone is its total delivery demand as destination (e.g. the value for zone $i$ is $\sum_{j=1}^M\lambda_{ji}^{g0}$).
Figure \ref{fig:potential_delivery_demand} illustrates that the city center lies in the northeast of the city with the most ride-sourcing and parcel delivery demand, and
the remote areas, such as zone 4, 5, 10 and 11, lie in the south side of the city.

\begin{figure}[tb!]
     \centering
     \begin{subfigure}[The potential demand of ride-sourcing passengers {(/min)}.]{
         \centering
         \includegraphics[width=0.47\textwidth]{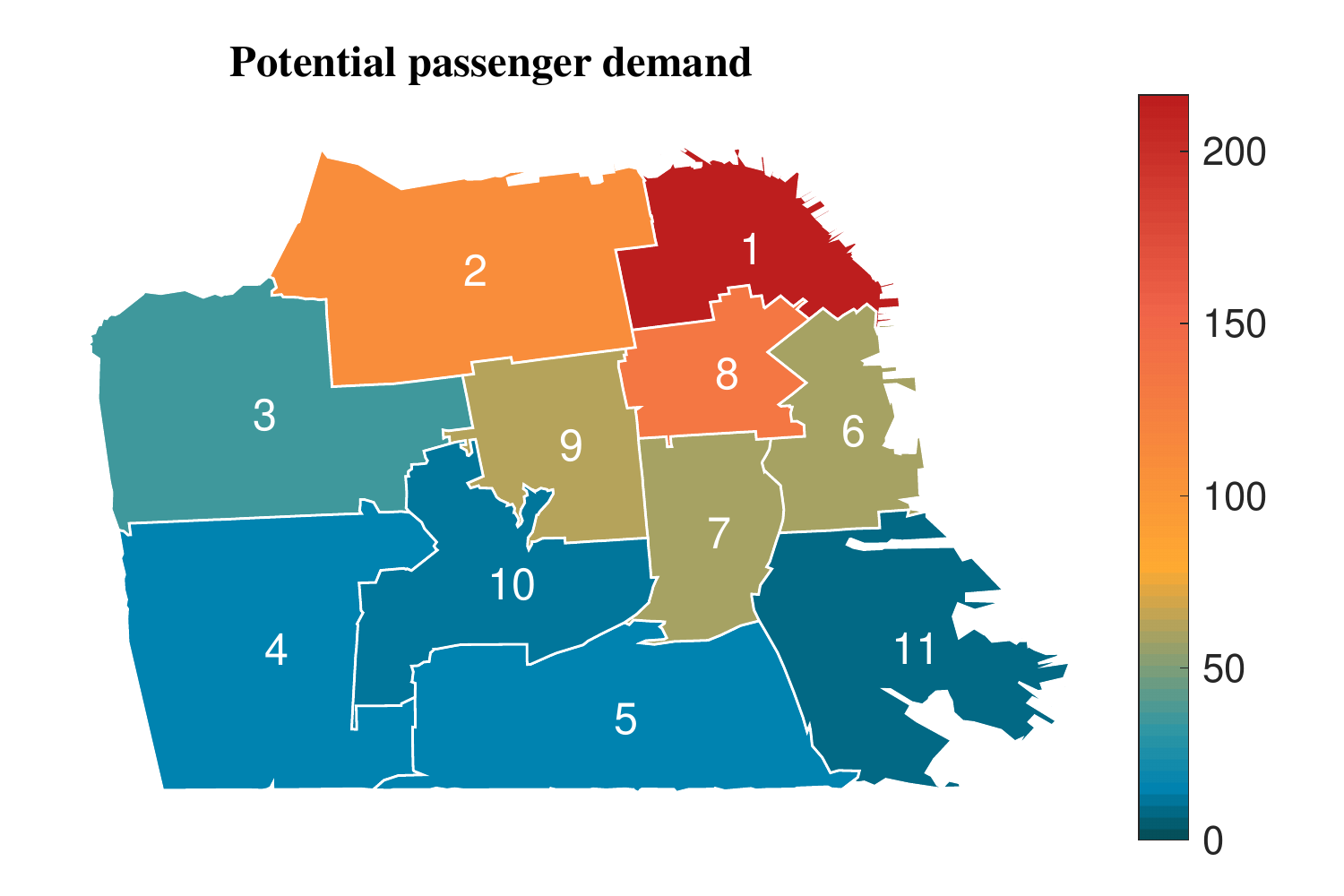}
         \label{fig:potential_passenger_demand_same}}
     \end{subfigure}
     \begin{subfigure}[The potential demand of delivery customers under realistic case {(/min)}.]{
         \centering
         \includegraphics[width=0.47\textwidth]{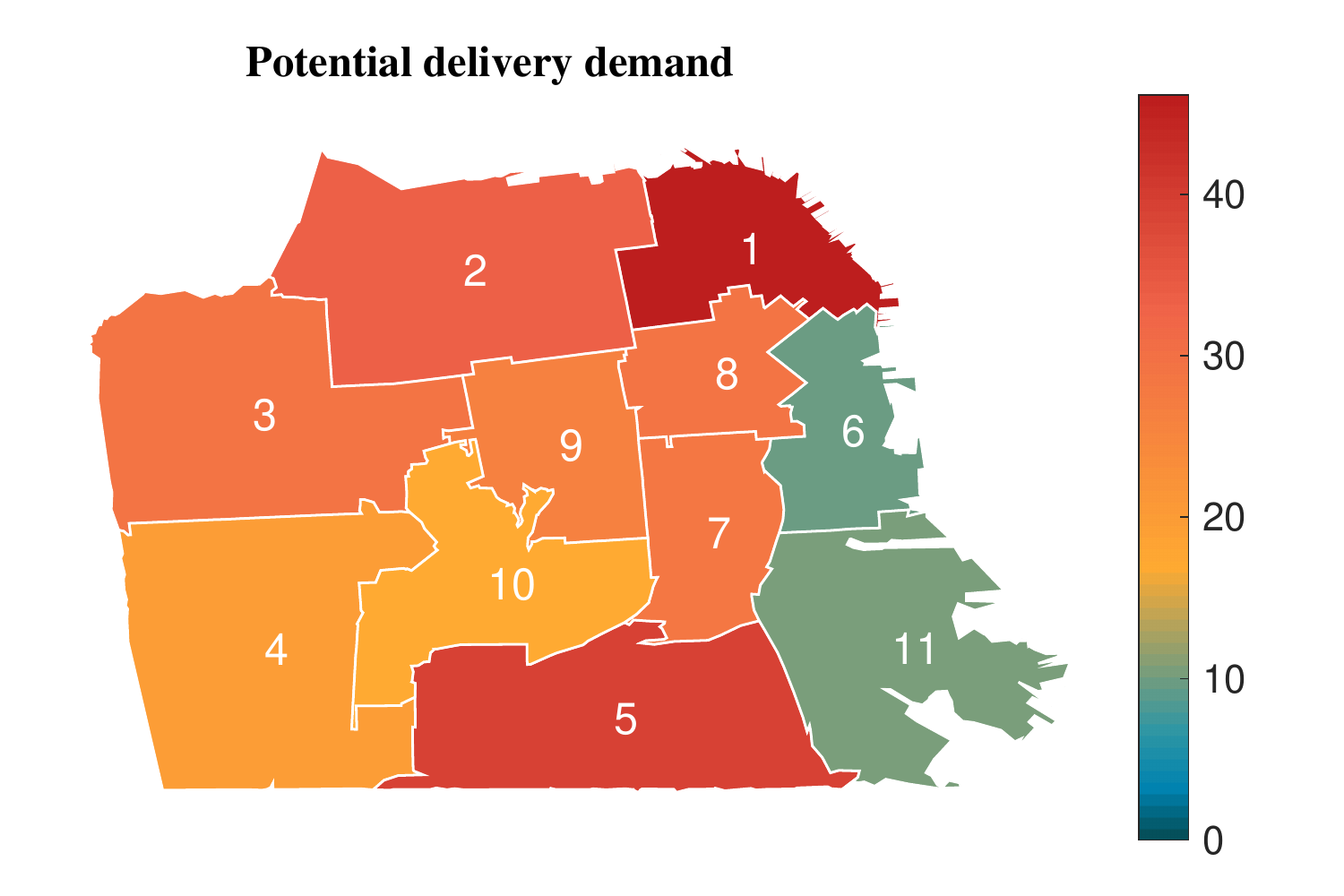}
         \label{fig:potential_delivery_demand_same}}
     \end{subfigure}
        \caption{Potential demand of ride-sourcing passengers and delivery customers under the same direction.}
        \label{fig:potential_delivery_demand}
\end{figure}

The impacts on the ride-sourcing market at the zonal level are demonstrated in Figure \ref{fig:idle_drivers_passengers_same}, where  Figure \ref{fig:idle_drivers_same} captures the change of idle drivers in each zone in percentage compared to the case of ride-sourcing services only, and Figure \ref{fig:passengers_same} captures the change of passenger trip production in each zone in percentage compared to the case with ride-sourcing services only.
The simulation results indicate that although the overall benefit to passengers is positive, its spatial distribution is highly asymmetric. In particular, the majority of increased idle drivers are concentrated in the central areas (e.g., zone 1-3 and 6-9), thus residents in these areas enjoy a reduced waiting time due to the integration of ride-sourcing and delivery service.  This is because the delivery services expand the demand of the central areas and bring more profits to the platform, who may find it profitable to dispatch more idle drivers to these areas.
However, for underserved zones (e.g., zone 10 and 11),  the ride-sourcing and delivery demand are small and the number of idle drivers remains almost unchanged. In this case, the change of the number of idle drivers in these zones are close to zero.
Since there is no increase in the number of idle drivers in the areas with lowest demand, the increased overall demand will motivate the platform to raise the prices, which may even increase the costs for passengers in these zones (e.g., zone 10 has negative change of passenger demand in Figure \ref{fig:passengers_same}). This highlights the uneven distribution of benefits among ride-sourcing passengers.

\begin{figure}[tb!]
     \centering
     \begin{subfigure}[The change of idle drivers for each zone as a ratio of the number of idle drivers at ride-only case.]{
         \centering
         \includegraphics[width=0.47\textwidth]{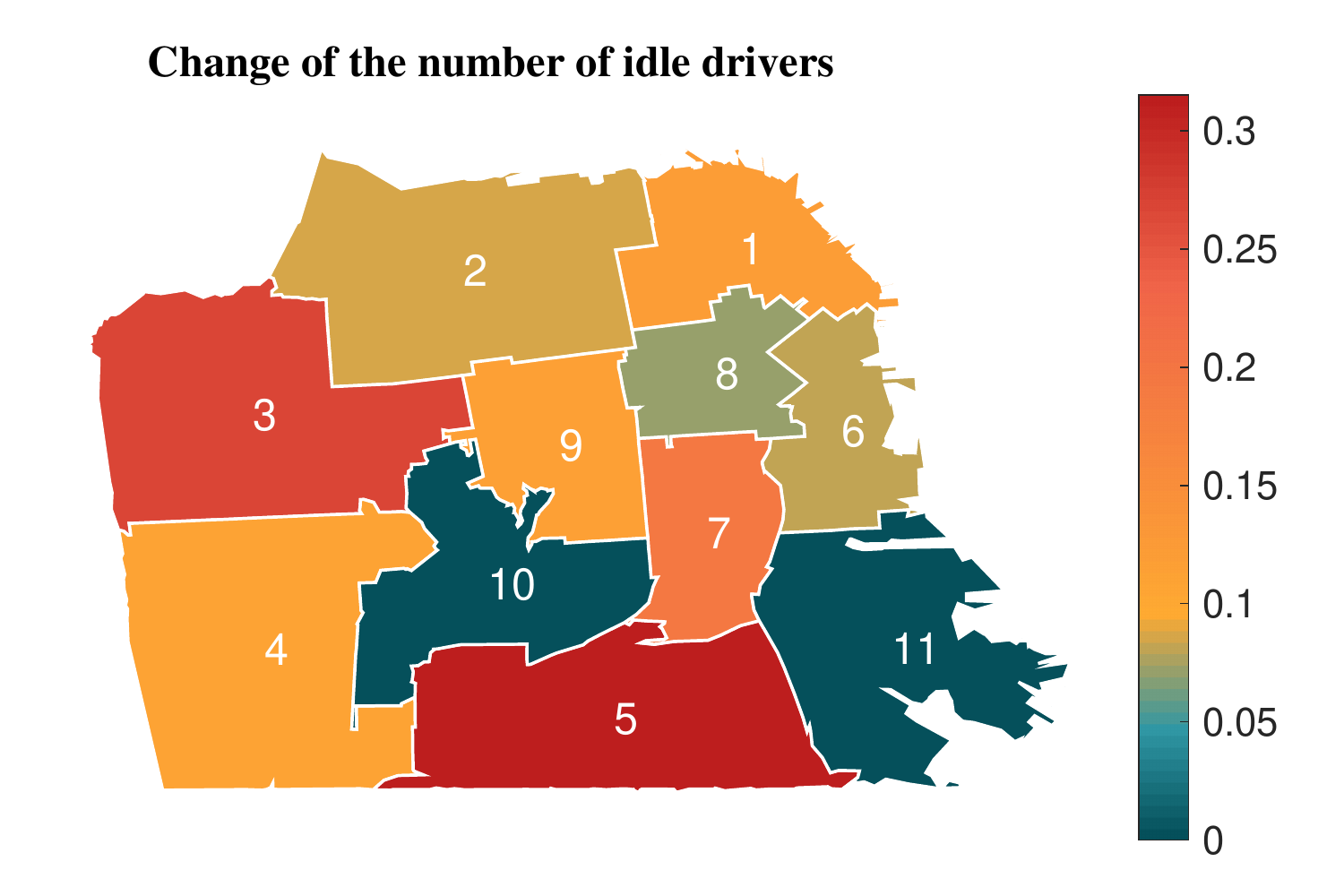}
         \label{fig:idle_drivers_same}}
     \end{subfigure}
     \begin{subfigure}[The change of passengers for each zone as a ratio of the passenger arrival rate at ride-only case.]{
         \centering
         \includegraphics[width=0.47\textwidth]{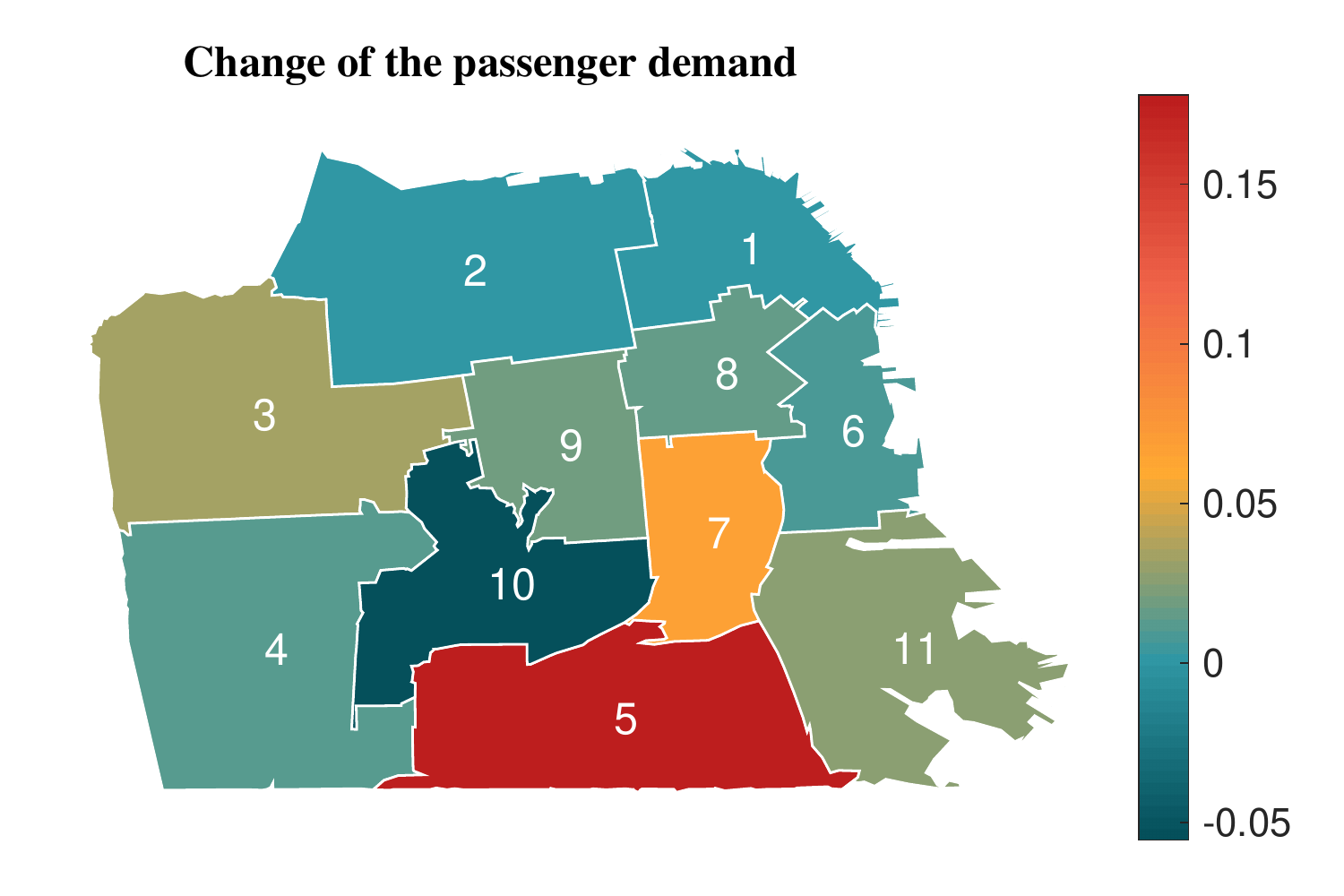}
         \label{fig:passengers_same}}
     \end{subfigure}
        \caption{The change of idle drivers and passengers for each zone under realistic case.}
        \label{fig:idle_drivers_passengers_same}
\end{figure}

Figure \ref{fig:demand_same}-\ref{fig:demand_same_origin} show the spatial equilibrium of the outcomes in the delivery sector by presenting the split of flexible and on-demand delivery orders over the transportation network.  Figure \ref{fig:demand_same} shows the total attraction\footnote{Total attraction for zone $i$ refers to the total amount of orders that arrive at zone $i$ as the destination. } of flexible/on-demand delivery orders in each zone of the city, and Figure \ref{fig:demand_same_origin} captures the total production\footnote{Total production for zone $i$ refers to the total amount of orders that originates from zone $i$. } of  flexible/on-demand delivery orders in each zone.
Interestingly, Figure \ref{fig:demand_same} indicates that when we look at the destinations of the delivery orders, the platform will give up flexible orders that are sent to the remote areas (i.e., flexible order in zone 4, 5, 10 and 11 is equal to 0 except the intra-zone flexible delivery orders.), thus customers who send orders to these zones will have no choice but to use on-demand services.  The underlying reason is straightforward: the delivery time of the flexible orders crucially depends on the probability that the drivers are dispatched to an on-demand ride-sourcing or delivery order with the same destination. Since remote areas have very low ride-sourcing and delivery demand, only a small proportion of drivers will be dispatched to these area. For this reason, the delivery time for flexible orders sent to remote areas is relatively long, which is inferior to the  on-demand delivery services even after considering the higher delivery prices.
We also note that for sending parcels to higher-demand zones (zone 1-3, 6-9), customers prefer to use flexible delivery services rather than on-demand services.
This is because  drivers have higher probability to visit the popular zones, which shortens the delivery time for flexible orders to be sent to these areas.
This attract a significant portion of customers to use flexible service, given that the delivery time is acceptable and the price is lower.
In contrast, Figure \ref{fig:demand_same_origin} indicates that when we look at the origin of the delivery orders, both flexible and on-demand delivery services are available to customers in all zones, and  flexible delivery services are more popular than on-demand delivery services. This is consistent with the intuition that most of the orders are sent to the popular areas with short delivery time, in which case the flexible delivery services outperforms on-demand delivery services because of the lower prices.

\begin{figure}[tb!]
     \centering
     \begin{subfigure}[The demand for flexible delivery (/min) under realistic case (as destination).]{
         \centering
         \includegraphics[width=0.47\textwidth]{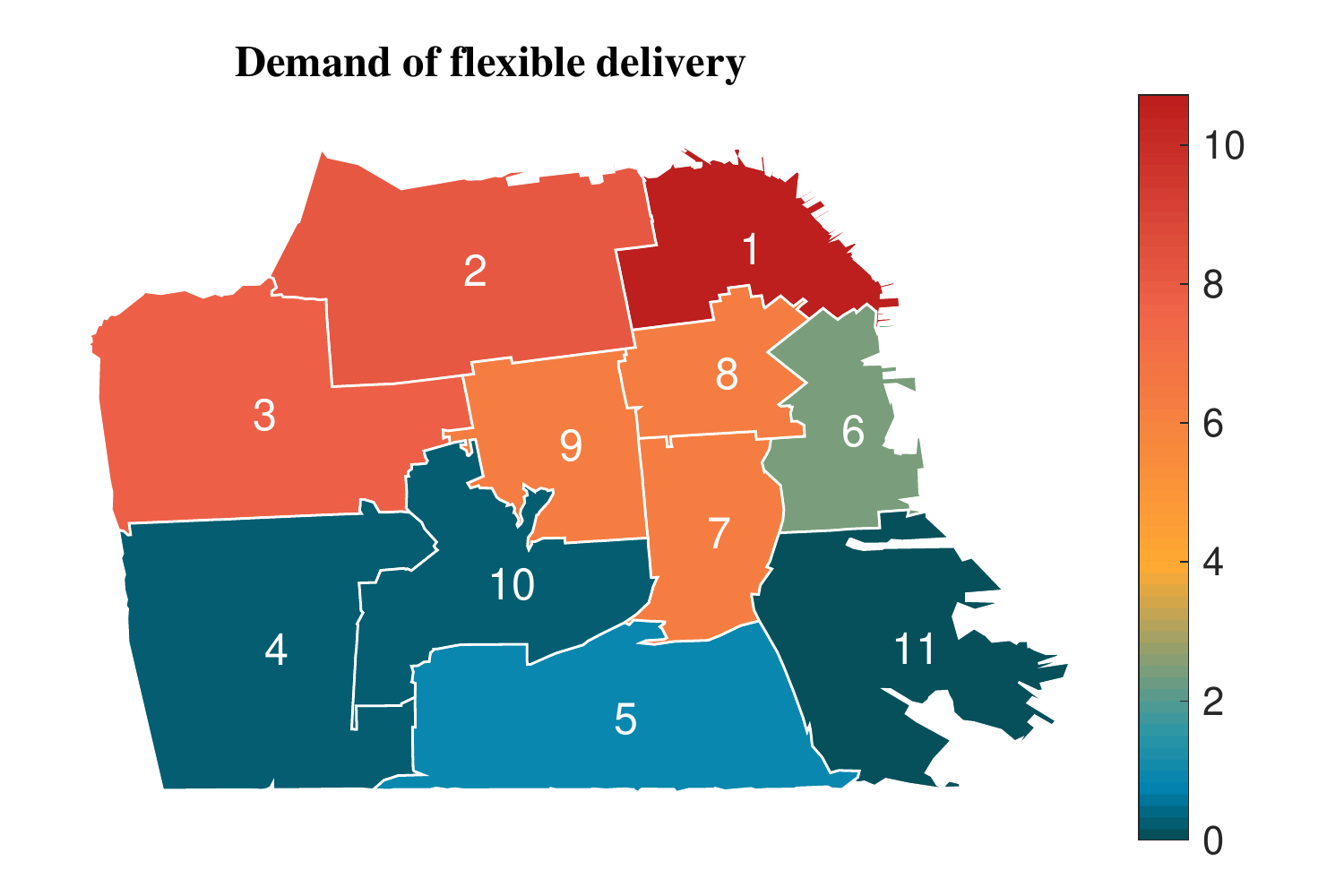}
         \label{fig:demand_flexible_same}}
     \end{subfigure}
     \begin{subfigure}[The demand for on-demand delivery {(/min)} under realistic case  (as destination).]{
         \centering
         \includegraphics[width=0.47\textwidth]{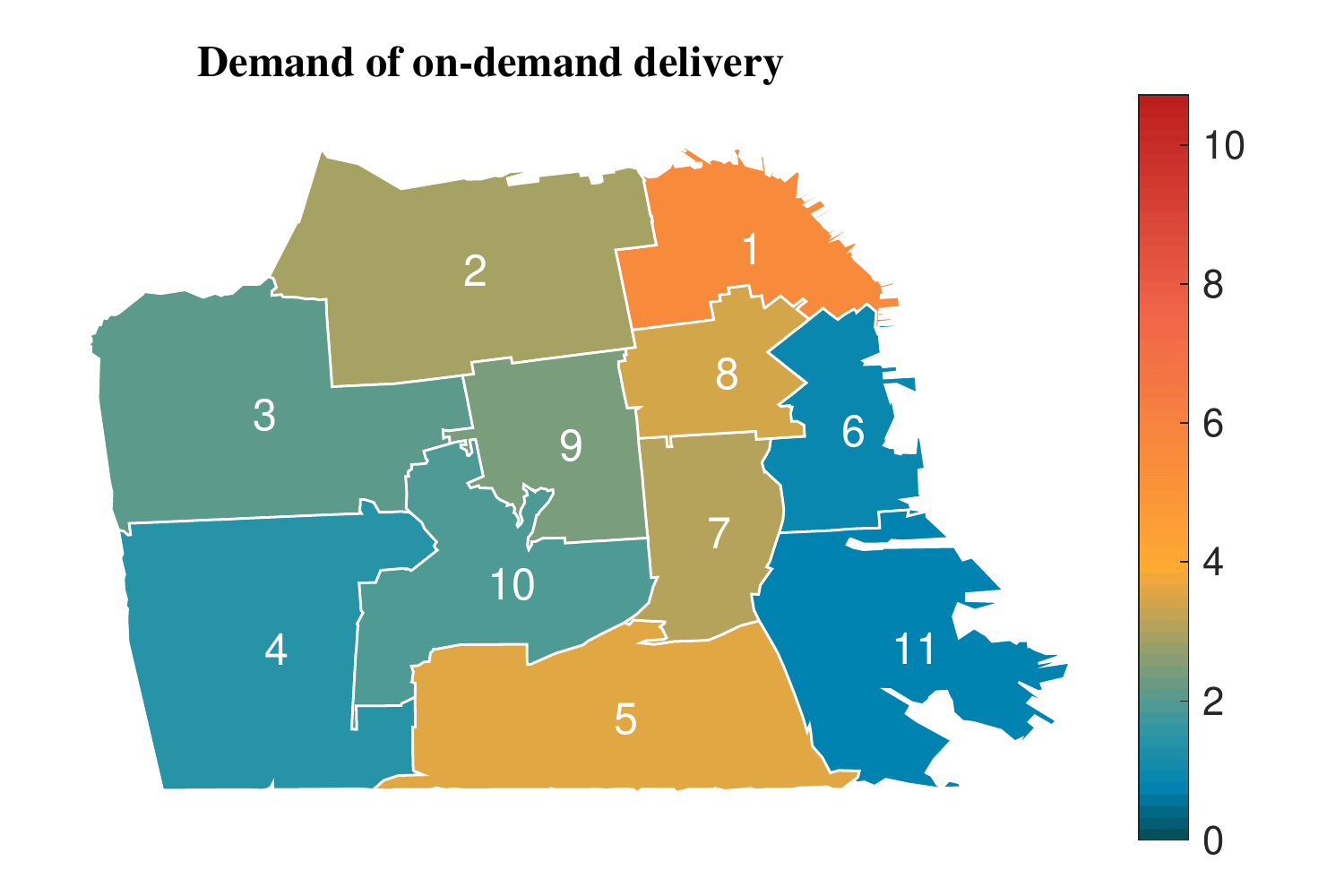}
         \label{fig:demand_fix_same}}
     \end{subfigure}
        \caption{Demand for delivery services  with destination to each zone under realistic case.}
        \label{fig:demand_same}
\end{figure}

\begin{figure}[tb!]
     \centering
     \begin{subfigure}[The demand for flexible delivery (/min) under realistic case (as origin).]{
         \centering
         \includegraphics[width=0.47\textwidth]{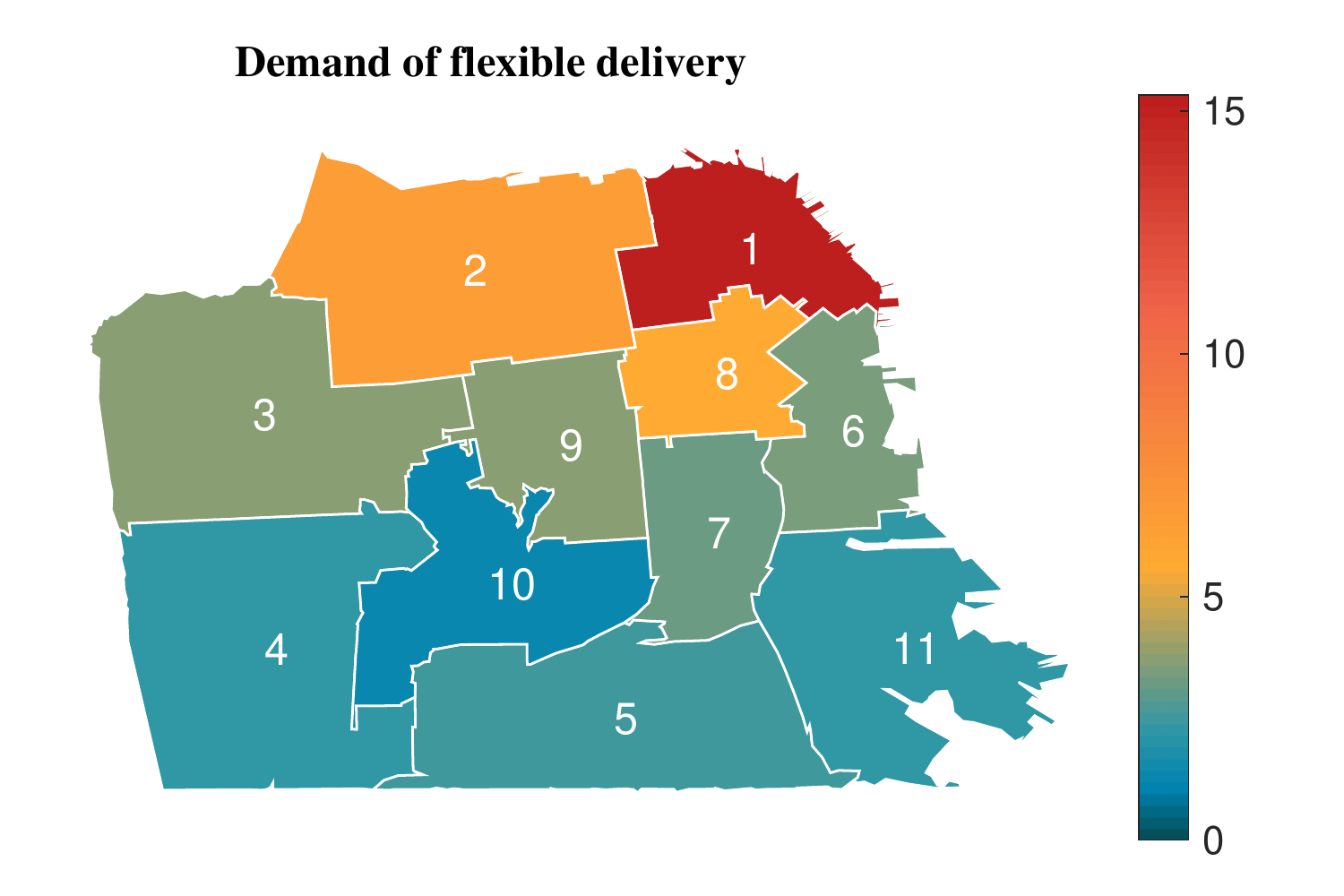}
         \label{fig:demand_flexible_origin}}
     \end{subfigure}
     \begin{subfigure}[The demand for on-demand delivery {(/min)} under realistic case (as origin).]{
         \centering
         \includegraphics[width=0.47\textwidth]{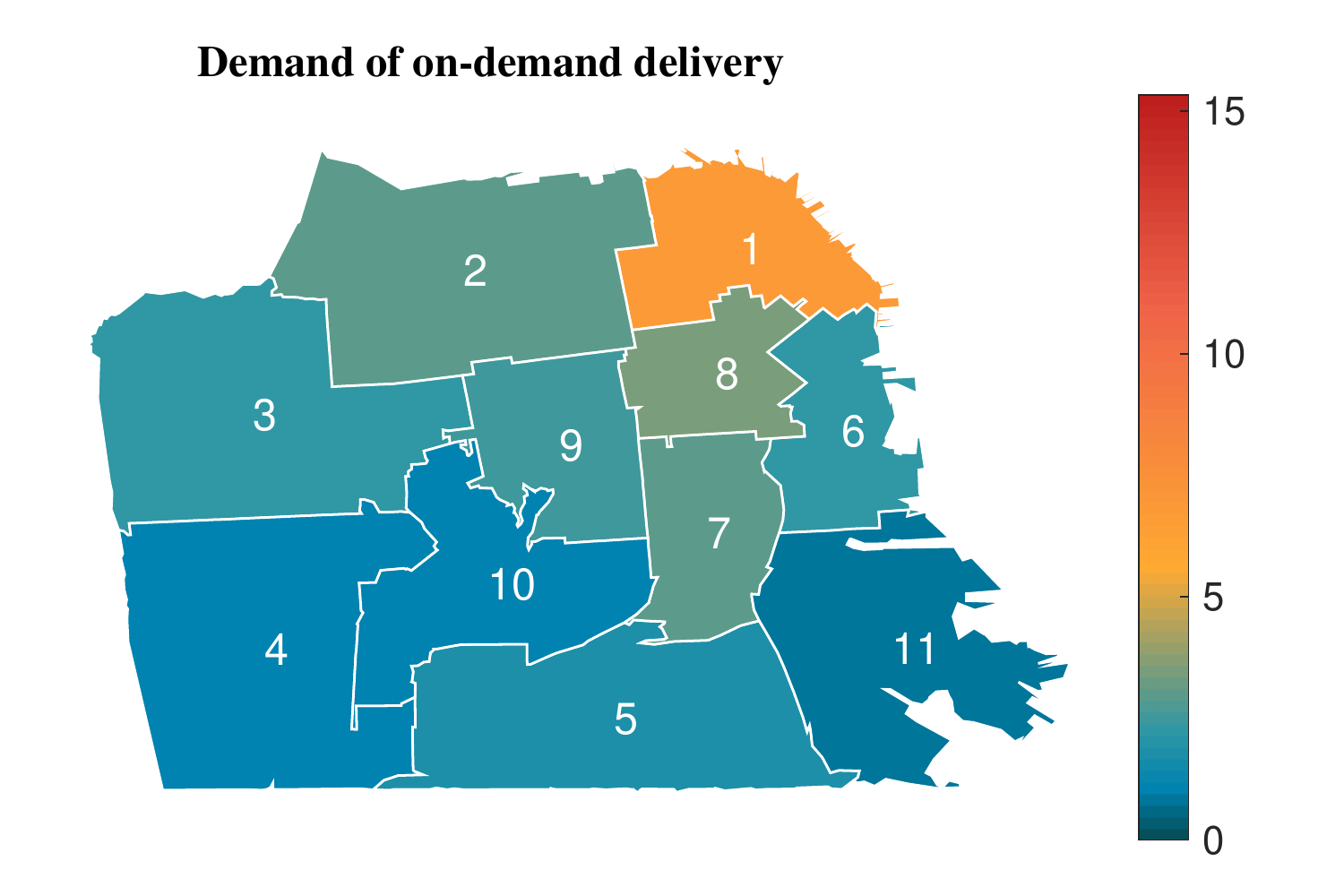}
         \label{fig:demand_fix_origin}}
     \end{subfigure}
        \caption{Demand for delivery services with origin from each zone under realistic case.}
        \label{fig:demand_same_origin}
\end{figure}

    \begin{figure}[t]
    \begin{minipage}[b]{0.48\textwidth}
    \centering
    % This file was created by matlab2tikz.
%
%The latest updates can be retrieved from
%  http://www.mathworks.com/matlabcentral/fileexchange/22022-matlab2tikz-matlab2tikz
%where you can also make suggestions and rate matlab2tikz.
%
\definecolor{mycolor1}{rgb}{0.00000,0.44700,0.74100}%
\definecolor{mycolor2}{rgb}{0.85000,0.32500,0.09800}%
\begin{tikzpicture}

\begin{axis}[%
width=2in,
height=1.3in,
at={(0.729in,0.583in)},
scale only axis,
xmin=0,
xmax=0.8,
xlabel style={font=\color{white!15!black}},
xlabel={Level of delivery demand},
ymin=160,
ymax=180,
ylabel style={font=\color{white!15!black}},
ylabel={Passengers arrival rate},
axis background/.style={fill=white},
legend style={at={(0.03,0.97)}, anchor=north west, legend cell align=left, align=left, draw=white!15!black, nodes={scale=0.8, transform shape}}
]
\addplot [color=mycolor1, line width=1.0pt]
  table[row sep=crcr]{%
0	167.311358207993\\
0.1	167.324026046956\\
0.2	167.820078181196\\
0.3	168.606695903133\\
0.4	169.78974642499\\
0.5	171.016057484186\\
0.6	172.001214132678\\
0.7	172.914697005952\\
0.8	173.495198531011\\
};
\addlegendentry{integrate}

\addplot [color=mycolor2, line width=1.0pt]
  table[row sep=crcr]{%
0	167.311358207993\\
0.1	166.705818079558\\
0.2	165.875544871757\\
0.3	165.193346459961\\
0.4	164.328330081382\\
0.5	163.665018291353\\
0.6	162.878609712397\\
0.7	162.198550107802\\
0.8	161.895020512476\\
};
\addlegendentry{separate}

\end{axis}
\end{tikzpicture}%
    \vspace*{-0.3in}
    \caption{The arrival rate of total ride-sourcing passengers (/min) under different level of delivery demand under integrate case and separate case.}
    \label{fig:passengers_compare_with_separate_same}
    \end{minipage}
    \begin{minipage}[b]{0.03\textwidth}
    \hfill
    \end{minipage}
    \begin{minipage}[b]{0.48\textwidth}
    \centering
    % This file was created by matlab2tikz.
%
%The latest updates can be retrieved from
%  http://www.mathworks.com/matlabcentral/fileexchange/22022-matlab2tikz-matlab2tikz
%where you can also make suggestions and rate matlab2tikz.
%
\definecolor{mycolor1}{rgb}{0.00000,0.44700,0.74100}%
\definecolor{mycolor2}{rgb}{0.85000,0.32500,0.09800}%
\begin{tikzpicture}

\begin{axis}[%
width=2in,
height=1.3in,
at={(0.729in,0.583in)},
scale only axis,
xmin=0,
xmax=0.8,
xlabel style={font=\color{white!15!black}},
xlabel={Level of delivery demand},
ymin=0,
ymax=170,
ylabel style={font=\color{white!15!black}},
ylabel={Customers arrival rate},
axis background/.style={fill=white},
legend style={at={(0.03,0.97)}, anchor=north west, legend cell align=left, align=left, draw=white!15!black, nodes={scale=0.8, transform shape}}
]
\addplot [color=mycolor1, line width=1.0pt]
  table[row sep=crcr]{%
0	0\\
0.1	19.6344532807271\\
0.2	39.9158412055956\\
0.3	59.6373124788435\\
0.4	78.9690031095784\\
0.5	98.0198817407213\\
0.6	116.737107699023\\
0.7	135.23394888834\\
0.8	153.941198548898\\
};
\addlegendentry{integrate}

\addplot [color=mycolor2, line width=1.0pt]
  table[row sep=crcr]{%
0	0\\
0.1	6.04234664388167\\
0.2	16.5007787508738\\
0.3	26.3531579381333\\
0.4	37.8873119240535\\
0.5	48.1404984275155\\
0.6	59.1457258962469\\
0.7	69.4948518886385\\
0.8	76.6650334014425\\
};
\addlegendentry{separate}

\end{axis}
\end{tikzpicture}%
    \vspace*{-0.3in}
    \caption{The arrival rate of total delivery customers (/min) under different level of delivery demand  under integrate case and separate case.}
    \label{fig:customers_compare_with_separate_same}
    \end{minipage}
    \begin{minipage}[b]{0.48\textwidth}
    \centering
    % This file was created by matlab2tikz.
%
%The latest updates can be retrieved from
%  http://www.mathworks.com/matlabcentral/fileexchange/22022-matlab2tikz-matlab2tikz
%where you can also make suggestions and rate matlab2tikz.
%
\definecolor{mycolor1}{rgb}{0.00000,0.44700,0.74100}%
\definecolor{mycolor2}{rgb}{0.85000,0.32500,0.09800}%
\begin{tikzpicture}

\begin{axis}[%
width=2in,
height=1.3in,
at={(0.729in,0.583in)},
scale only axis,
xmin=0,
xmax=0.8,
xlabel style={font=\color{white!15!black}},
xlabel={Level of delivery demand},
ymin=3400,
ymax=5400,
ylabel style={font=\color{white!15!black}},
ylabel={Number of drivers},
axis background/.style={fill=white},
legend style={at={(0.03,0.97)}, anchor=north west, legend cell align=left, align=left, draw=white!15!black, nodes={scale=0.8, transform shape}}
]
\addplot [color=mycolor1, line width=1.0pt]
  table[row sep=crcr]{%
0	3586.93052445673\\
0.1	3703.08378664974\\
0.2	3825.29699850254\\
0.3	3955.66050699846\\
0.4	4091.85399355337\\
0.5	4229.03834568042\\
0.6	4360.64125601154\\
0.7	4492.35712014471\\
0.8	4613.58051828568\\
};
\addlegendentry{integrate}

\addplot [color=mycolor2, line width=1.0pt]
  table[row sep=crcr]{%
0	3586.93052445673\\
0.1	3764.37023279577\\
0.2	4000.10462631258\\
0.3	4186.38676470055\\
0.4	4414.10815028336\\
0.5	4583.00747959509\\
0.6	4775.84297077406\\
0.7	4936.78976311515\\
0.8	5006.69558796825\\
};
\addlegendentry{separate}

\end{axis}
\end{tikzpicture}%
    \vspace*{-0.3in}
    \caption{The total number of drivers on the platform(s) under different level of delivery demand  under integrate case and separate case.}
    \label{fig:drivers_compare_with_separate_same}
    \end{minipage}
    \begin{minipage}[b]{0.03\textwidth}
    \hfill
    \end{minipage}
    \begin{minipage}[b]{0.48\textwidth}
    \centering
    % This file was created by matlab2tikz.
%
%The latest updates can be retrieved from
%  http://www.mathworks.com/matlabcentral/fileexchange/22022-matlab2tikz-matlab2tikz
%where you can also make suggestions and rate matlab2tikz.
%
\definecolor{mycolor1}{rgb}{0.00000,0.44700,0.74100}%
\definecolor{mycolor2}{rgb}{0.85000,0.32500,0.09800}%
\begin{tikzpicture}

\begin{axis}[%
width=2in,
height=1.3in,
at={(0.729in,0.583in)},
scale only axis,
xmin=0,
xmax=0.8,
xlabel style={font=\color{white!15!black}},
xlabel={Level of delivery demand},
ymin=1000,
ymax=2600,
ylabel style={font=\color{white!15!black}},
ylabel={Platform profits},
axis background/.style={fill=white},
legend style={at={(0.03,0.97)}, anchor=north west, legend cell align=left, align=left, draw=white!15!black, nodes={scale=0.8, transform shape}}
]
\addplot [color=mycolor1, line width=1.0pt]
  table[row sep=crcr]{%
0	1193.23021737378\\
0.1	1338.47399195162\\
0.2	1508.54474641935\\
0.3	1679.20752969288\\
0.4	1850.62289152593\\
0.5	2019.89753247542\\
0.6	2184.65607099236\\
0.7	2347.46010021115\\
0.8	2508.73059993365\\
};
\addlegendentry{integrate}

\addplot [color=mycolor2, line width=1.0pt]
  table[row sep=crcr]{%
0	1193.23021737378\\
0.1	1216.62911413014\\
0.2	1251.80739420554\\
0.3	1287.03784540467\\
0.4	1323.55962125329\\
0.5	1364.4706299093\\
0.6	1403.34558508427\\
0.7	1445.27500951391\\
0.8	1492.08550339008\\
};
\addlegendentry{separate}

\end{axis}
\end{tikzpicture}%
    \vspace*{-0.3in}
    \caption{The total platform profits from two services (\$/min) under different level of delivery demand under integrate case and separate case.}
    \label{fig:profits_compare_with_separate_same}
    \end{minipage}
    \end{figure}

\subsection{Compare with Benchmark Cases}

To better understand the benefits of the integrated business model in both ride-sourcing services and delivery services, as well as evaluating its environmental and social impacts, we compare the arrival rates of passengers, the arrival rate of delivery customers, the total number of drivers working on the platform, and the total platform profit to a benchmark case where the ride-sourcing platform and on-demand delivery platform are operated separately (referred to as separate case). Drivers choose to join which platform by comparing the perspective wages, which can be captured by a well-established multinomial logit model, while the objective of each platform is to maximize its own profits. We evaluate the total arrival rate of the ride-sourcing passengers, the total arrival rate of the delivery customers, and the total number of drivers working for ride-sourcing services and/or delivery services, and the total platform profit. The results are summarized in Figure \ref{fig:passengers_compare_with_separate_same}-\ref{fig:profits_compare_with_separate_same}. 
The findings suggest that integrating on-demand and flexible delivery services with ride-sourcing services attracts more passengers and delivery customers to the platform, resulting in increased overall platform profits. Simultaneously, this integration contributes to environmental sustainability by reducing the number of vehicles, thereby decreasing traffic congestion, vehicle miles traveled, and carbon emissions.

To examine the benefits of the flexible delivery services and  differentiated pricing strategies, we consider another benchmark case where an integrated platform provides both ride-sourcing services and on-demand delivery services, but these two services are charged at a non-differentiated price. We refer to the benchmark case as on-demand-only case, and refer to our proposed integrated platform with flexible delivery services as flexible and on-demand case. The comparison of average delivery fees in different scenarios is depicted in Figure \ref{fig:fare_same}, and the impacts on the platform, delivery customers, ride-sourcing passengers, and drivers are illustrated in Figures \ref{fig:profits_same_on_demand}-\ref{fig:drivers_same_on_demand}. Flexible delivery services yield a lower delivery fare compared to on-demand delivery services, as shown in Figure \ref{fig:fare_same}. Additionally, flexible delivery services significantly enhance platform profit (Figure \ref{fig:profits_same_on_demand}) and attract more delivery customers to the platform (Figure \ref{fig:delivery_same_on_demand}), while slightly increasing the fleet size (Figure \ref{fig:drivers_same_on_demand}) and causing negligible change in passenger arrival rates (Figure \ref{fig:passengerss_same_on_demand}). This indicates that by unlocking the value of flexibility of parcel demand, it can benefit the platform, the parcel delivery demand, slightly increase the earning of drivers, with negligible impacts on the ride-sourcing passengers.

\begin{figure}[htb!]
    \begin{minipage}[b]{\textwidth}
    \centering
    % This file was created by matlab2tikz.
%
%The latest updates can be retrieved from
%  http://www.mathworks.com/matlabcentral/fileexchange/22022-matlab2tikz-matlab2tikz
%where you can also make suggestions and rate matlab2tikz.
%
%
\definecolor{mycolor1}{rgb}{0.2118,0.7647,0.7882}%
\definecolor{mycolor2}{rgb}{ 0.9882,0.6667,0.4039}%
\definecolor{mycolor3}{rgb}{0.5765,0.4392,0.8588}%
\begin{tikzpicture}

\begin{axis}[%
width=5in,
height=2.2in,
at={(0.729in,0.583in)},
scale only axis,
bar shift auto,
xmin=0.511111111111111,
xmax=11.4888888888889,
xtick={ 1,  2,  3,  4,  5,  6,  7,  8,  9, 10, 11},
xlabel style={font=\color{white!15!black}},
xlabel={Zone},
ymin=0,
ymax=25,
ylabel style={font=\color{white!15!black}},
ylabel={delivery fare (\$)},
axis background/.style={fill=white},
legend style={legend cell align=left, align=left, draw=white!15!black}
]
\addplot[ybar, bar width=0.178, fill=mycolor1,  draw = white,area legend] table[row sep=crcr] {%
1	14.392889272439\\
2	16.2205325801597\\
3	14.4929601263129\\
4	15.393591578454\\
5	12.4650069068841\\
6	14.252030424069\\
7	12.0148152424449\\
8	13.8874914714862\\
9	14.0314580418791\\
10	12.0287443739743\\
11	15.5986357369495\\
};
\addplot[forget plot, color=white!15!black] table[row sep=crcr] {%
0.511111111111111	0\\
11.4888888888889	0\\
};
\addlegendentry{on-demand delivery fare in on-demand-only case}

\addplot[ybar, bar width=0.178, fill=mycolor2,  draw = white,area legend] table[row sep=crcr] {%
1	14.5508204181618\\
2	16.6861038215972\\
3	14.7693703928379\\
4	15.3630740999319\\
5	12.3353559298659\\
6	14.2510506160999\\
7	12.0189265478077\\
8	14.0923654461562\\
9	14.1140063516401\\
10	11.8993065079676\\
11	15.2234055518633\\
};
\addplot[forget plot, color=white!15!black] table[row sep=crcr] {%
0.511111111111111	0\\
11.4888888888889	0\\
};
\addlegendentry{on-demand delivery fare in flexible \& on-demand case}

\addplot[ybar, bar width=0.178, fill=mycolor3,  draw = white,area legend] table[row sep=crcr] {%
1	9.28008787880208\\
2	9.06242731835546\\
3	9.58578468675224\\
4	8.78900202232102\\
5	8.85180396431536\\
6	8.96510688111165\\
7	8.75087359591805\\
8	8.63299115787405\\
9	8.78628926953709\\
10	8.4030473970203\\
11	9.21725903023733\\
};
\addplot[forget plot, color=white!15!black] table[row sep=crcr] {%
0.511111111111111	0\\
11.4888888888889	0\\
};
\addlegendentry{flexible delivery fare in flexible \& on-demand case}

\end{axis}
\end{tikzpicture}%
    \vspace*{-0.3in}
    \caption{The average delivery fare for customers sending parcels from each zone using on-demand delivery services in the on-demand-only case, on-demand delivery services in the flexible \& on-demand case , and flexible delivery services in the flexible \& on-demand case.}
    \label{fig:fare_same}
    \end{minipage}
\end{figure}
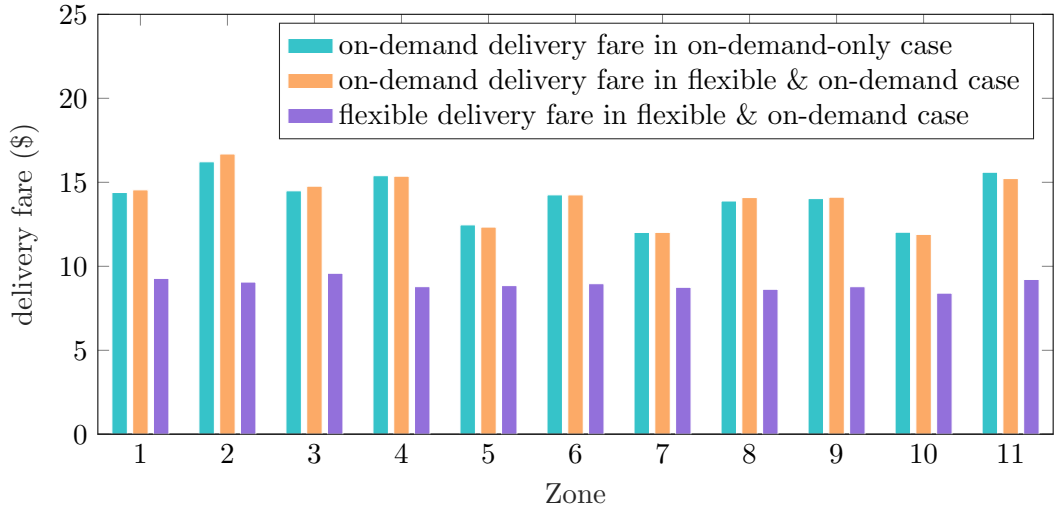

\begin{figure}[htb!]
    \begin{minipage}[b]{0.48\textwidth}
    \centering
    % This file was created by matlab2tikz.
%
%The latest updates can be retrieved from
%  http://www.mathworks.com/matlabcentral/fileexchange/22022-matlab2tikz-matlab2tikz
%where you can also make suggestions and rate matlab2tikz.
%
\definecolor{mycolor1}{rgb}{0.00000,0.44700,0.74100}%
\definecolor{mycolor2}{rgb}{0.85000,0.32500,0.09800}%
\begin{tikzpicture}

\begin{axis}[%
width=2in,
height=1.3in,
at={(0.729in,0.583in)},
scale only axis,
xmin=0,
xmax=0.8,
xlabel style={font=\color{white!15!black}},
xlabel={Level of delivery demand},
ymin=1000,
ymax=2700,
ylabel style={font=\color{white!15!black}},
ylabel={Platform profits},
axis background/.style={fill=white},
legend style={at={(0.03,0.97)}, anchor=north west, legend cell align=left, align=left, nodes={scale=0.8, transform shape},draw=white!15!black}
]
\addplot [color=mycolor1, line width=1.0pt]
  table[row sep=crcr]{%
0	1193.23021737378\\
0.1	1338.47399195162\\
0.2	1508.54474641935\\
0.3	1679.20752969288\\
0.4	1850.62289152593\\
0.5	2019.89753247542\\
0.6	2184.65607099236\\
0.7	2347.46010021115\\
0.8	2508.73059993365\\
};
\addlegendentry{flexible \& on-demand}

\addplot [color=mycolor2, line width=1.0pt]
  table[row sep=crcr]{%
0	1193.23021737378\\
0.1	1259.25304528238\\
0.2	1324.74969369085\\
0.3	1389.68532354321\\
0.4	1454.03209378168\\
0.5	1517.78588385838\\
0.6	1580.93796559167\\
0.7	1643.46926402087\\
0.8	1705.36467207539\\
};
\addlegendentry{on-demand only}

\end{axis}
\end{tikzpicture}%
    \vspace*{-0.3in}
    \caption{The profits of the platform (\$/min) under different level of delivery demand under on-demand-only case and flexible \& on-demand case.}
    \label{fig:profits_same_on_demand}
    \end{minipage}
    \begin{minipage}[b]{0.03\textwidth}
    \hfill
    \end{minipage}
    \begin{minipage}[b]{0.48\textwidth}
    \centering
    % This file was created by matlab2tikz.
%
%The latest updates can be retrieved from
%  http://www.mathworks.com/matlabcentral/fileexchange/22022-matlab2tikz-matlab2tikz
%where you can also make suggestions and rate matlab2tikz.
%
\definecolor{mycolor1}{rgb}{0.00000,0.44700,0.74100}%
\definecolor{mycolor2}{rgb}{0.85000,0.32500,0.09800}%
\begin{tikzpicture}

\begin{axis}[%
width=2in,
height=1.3in,
at={(0.729in,0.583in)},
scale only axis,
xmin=0,
xmax=0.8,
xlabel style={font=\color{white!15!black}},
xlabel={Level of delivery demand},
ymin=0,
ymax=180,
ylabel style={font=\color{white!15!black}},
ylabel={Customers arrival rate},
axis background/.style={fill=white},
legend style={at={(0.03,0.97)}, anchor=north west, legend cell align=left, align=left, draw=white!15!black, nodes={scale=0.8, transform shape}}
]
\addplot [color=mycolor1, line width=1.0pt]
  table[row sep=crcr]{%
0	0\\
0.1	19.6344532807271\\
0.2	39.9158412055956\\
0.3	59.6373124788435\\
0.4	78.9690031095784\\
0.5	98.0198817407213\\
0.6	116.737107699023\\
0.7	135.23394888834\\
0.8	153.941198548898\\
};
\addlegendentry{flexible \& on-demand}

\addplot [color=mycolor2, line width=1.0pt]
  table[row sep=crcr]{%
0	0\\
0.1	8.92433318054794\\
0.2	17.9157295232993\\
0.3	26.9442930422103\\
0.4	35.9849818289965\\
0.5	45.0278954972281\\
0.6	54.044620359041\\
0.7	63.0172515544058\\
0.8	71.9306104514998\\
};
\addlegendentry{on-demand only}

\end{axis}
\end{tikzpicture}%
    \vspace*{-0.3in}
    \caption{The arrival rate of total delivery customers (/min) under different level of delivery demand under on-demand-only case and flexible \& on-demand case.}
    \label{fig:delivery_same_on_demand}
    \end{minipage}
    \begin{minipage}[b]{0.48\textwidth}
    \centering
    % This file was created by matlab2tikz.
%
%The latest updates can be retrieved from
%  http://www.mathworks.com/matlabcentral/fileexchange/22022-matlab2tikz-matlab2tikz
%where you can also make suggestions and rate matlab2tikz.
%
\definecolor{mycolor1}{rgb}{0.00000,0.44700,0.74100}%
\definecolor{mycolor2}{rgb}{0.85000,0.32500,0.09800}%
\begin{tikzpicture}

\begin{axis}[%
width=2in,
height=1.3in,
at={(0.729in,0.583in)},
scale only axis,
xmin=0,
xmax=0.8,
xlabel style={font=\color{white!15!black}},
xlabel={Level of delivery demand},
ymin=150,
ymax=190,
ylabel style={font=\color{white!15!black}},
ylabel={Passengers arrival rate},
axis background/.style={fill=white},
legend style={at={(0.03,0.97)}, anchor=north west, legend cell align=left, align=left, draw=white!15!black, nodes={scale=0.8, transform shape}}
]
\addplot [color=mycolor1, line width=1.0pt]
  table[row sep=crcr]{%
0	167.311358207993\\
0.1	167.324026046956\\
0.2	167.820078181196\\
0.3	168.606695903133\\
0.4	169.78974642499\\
0.5	171.016057484186\\
0.6	172.001214132678\\
0.7	172.914697005952\\
0.8	173.495198531011\\
};
\addlegendentry{flexible \& on-demand}

\addplot [color=mycolor2, line width=1.0pt]
  table[row sep=crcr]{%
0	167.311358207993\\
0.1	168.277735298262\\
0.2	169.085297566662\\
0.3	169.747269431272\\
0.4	170.279785125616\\
0.5	170.725838330216\\
0.6	171.055715461038\\
0.7	171.278633035188\\
0.8	171.404115265038\\
};
\addlegendentry{on-demand only}

\end{axis}
\end{tikzpicture}%
    \vspace*{-0.3in}
    \caption{The arrival rate of ride-sourcing passengers (/min) under different level of delivery demand under on-demand-only case and flexible \& on-demand case.}
    \label{fig:passengerss_same_on_demand}
    \end{minipage}
    \begin{minipage}[b]{0.03\textwidth}
    \hfill
    \end{minipage}
    \begin{minipage}[b]{0.48\textwidth}
    \centering
    % This file was created by matlab2tikz.
%
%The latest updates can be retrieved from
%  http://www.mathworks.com/matlabcentral/fileexchange/22022-matlab2tikz-matlab2tikz
%where you can also make suggestions and rate matlab2tikz.
%
\definecolor{mycolor1}{rgb}{0.00000,0.44700,0.74100}%
\definecolor{mycolor2}{rgb}{0.85000,0.32500,0.09800}%
\begin{tikzpicture}

\begin{axis}[%
width=2in,
height=1.4in,
at={(0.729in,0.583in)},
scale only axis,
xmin=0,
xmax=0.8,
xlabel style={font=\color{white!15!black}},
xlabel={Level of delivery demand},
ymin=3400,
ymax=4900,
ylabel style={font=\color{white!15!black}},
ylabel={Total number of drivers},
axis background/.style={fill=white},
legend style={at={(0.03,0.97)}, anchor=north west, legend cell align=left, align=left, draw=white!15!black, nodes={scale=0.8, transform shape}}
]
\addplot [color=mycolor1, line width=1.0pt]
  table[row sep=crcr]{%
0	3586.93052445673\\
0.1	3703.08378664974\\
0.2	3825.29699850254\\
0.3	3955.66050699846\\
0.4	4091.85399355337\\
0.5	4229.03834568042\\
0.6	4360.64125601154\\
0.7	4492.35712014471\\
0.8	4613.58051828568\\
};
\addlegendentry{flexible \& on-demand}

\addplot [color=mycolor2, line width=1.0pt]
  table[row sep=crcr]{%
0	3586.93052445673\\
0.1	3705.57836132982\\
0.2	3822.80650014913\\
0.3	3938.39703546719\\
0.4	4052.36386883936\\
0.5	4166.24216227189\\
0.6	4277.87412810839\\
0.7	4387.16556235464\\
0.8	4494.0642443485\\
};
\addlegendentry{on-demand only}

\end{axis}
\end{tikzpicture}%
    \vspace*{-0.3in}
    \caption{The total number of drivers working on the platform(s) under different level of delivery demand under on-demand-only case and flexible \& on-demand case.}
    \label{fig:drivers_same_on_demand}
    \end{minipage}
    \end{figure}

\subsection{Sensitivity Analysis}
\subsubsection{Perturbing matching functions}

To test the robustness of our results under distinct parameters of the matching functions, we evaluate the change in platform profits, number of drivers on the platform, and passenger arrival rates with increasing potential delivery demand under three instances of the Cobb-Douglas matching function (\ref{eq:matching_ondemand}) with distinct elastic parameters:
    \begin{enumerate}[(i)]
        \item $p=1$, $q=0.5$ (increasing returns to scale): the average waiting time for on-demand customers is:
        \begin{align}\label{w_r_increase}
            w_i^r = \frac{L_i}{\sqrt{N_i^I}}.
        \end{align}
        \item $p=0.5$, $q=0.5$ (constant returns to scale): the average waiting time for on-demand customers is:
        \begin{align}\label{w_r_constant}
            w_i^r = \bar L_i\frac{\sum_{j=1}^M(\lambda_{ij}^r+\lambda_{ij}^{d_o})}{N_i^I}
        \end{align}
        \item $p=0.5$, $q=0.25$ (decreasing returns to scale): the average waiting time for on-demand customers is:
        \begin{align}\label{w_r_decrease}
            w_i^r = \tilde L_i\frac{\sum_{j=1}^M(\lambda_{ij}^r+\lambda_{ij}^{d_o})}{\sqrt{N_i^I}}.
        \end{align}
    \end{enumerate}
    The results are summarized in Figure \ref{fig:profits_p_q} - \ref{fig:passenger_p_q}.  It indicates that as potential delivery demand rises, so do platform profits and the number of drivers. This suggests the integrated business model benefits both the platform and drivers, irrespective of the matching function's elasticity parameters. However, the effects on passengers vary slightly with differing model parameters. Figure \ref{fig:passenger_p_q} shows that when the matching function exhibits increasing or constant returns to scale (i.e., cases (i) and (ii)), passenger costs decrease, thereby increasing ride-sourcing demand as delivery demand escalates. Conversely, when the matching function demonstrates decreasing returns to scale, the passenger arrival rate remains largely unaffected by an increase in delivery demand. We conjecture that this is because when the matching function is decreasing returns to scale, the impacts of the arrival rate on the customer waiting time is much more significant (as shown in case (iii), Equation (\ref{w_r_decrease})), then increased passenger demand can greatly increase the customer's waiting time, which adversely affects the passenger arrival rate. Hence, the passenger arrival rate does not increase as significantly as in the other two cases.

    \begin{figure}[tb!]
    \begin{minipage}[b]{0.3\textwidth}
    \centering
    % This file was created by matlab2tikz.
%
%The latest updates can be retrieved from
%  http://www.mathworks.com/matlabcentral/fileexchange/22022-matlab2tikz-matlab2tikz
%where you can also make suggestions and rate matlab2tikz.
%
\definecolor{mycolor1}{rgb}{0.00000,0.44700,0.74100}%
\definecolor{mycolor2}{rgb}{0.85000,0.32500,0.09800}%
\definecolor{mycolor3}{rgb}{0.92900,0.69400,0.12500}%
\begin{tikzpicture}

\begin{axis}[%
width=1.3in,
height=1.6in,
at={(0.729in,0.583in)},
scale only axis,
xmin=0,
xmax=0.8,
xlabel style={font=\color{white!15!black}},
xlabel={Level of delivery demand},
ymin=800,
ymax=2600,
ylabel style={font=\color{white!15!black}},
ylabel={Platform profits},
axis background/.style={fill=white},
legend style={at={(0.97,0.03)}, anchor=south east, legend cell align=left, align=left, draw=white!15!black, nodes={scale=0.6, transform shape}}
]
\addplot [color=mycolor1, line width=1.0pt]
  table[row sep=crcr]{%
0	1193.23021737378\\
0.1	1338.47399195162\\
0.2	1508.54474641935\\
0.3	1679.20752969288\\
0.4	1850.62289152593\\
0.5	2019.89753247542\\
0.6	2184.65607099236\\
0.7	2347.46010021115\\
0.8	2508.73059993365\\
};
\addlegendentry{p=1,q=0.5}

\addplot [color=mycolor2, line width=1.0pt]
  table[row sep=crcr]{%
0	1152.32982267924\\
0.1	1300.81951158684\\
0.2	1440.48760082047\\
0.3	1569.62258094221\\
0.4	1688.0273136384\\
0.5	1798.93602369735\\
0.6	1913.25159977867\\
0.7	2029.03906262223\\
0.8	2141.61500454081\\
};
\addlegendentry{p=0.5,q=0.5}

\addplot [color=mycolor3, line width=1.0pt]
  table[row sep=crcr]{%
0	1063.85933390805\\
0.1	1215.74947060293\\
0.2	1357.27010582111\\
0.3	1508.17613618847\\
0.4	1631.25700857979\\
0.5	1755.6721171329\\
0.6	1871.31331116083\\
0.7	1979.06337377263\\
0.8	2080.01302171024\\
};
\addlegendentry{p=0.5,q=0.25}

\end{axis}
\end{tikzpicture}%
    \vspace*{-0.3in}
    \caption{The profits of the platform (\$/min) under different level of delivery demand.}
    \label{fig:profits_p_q}
    \end{minipage}
    \begin{minipage}[b]{0.03\textwidth}
    \hfill
    \end{minipage}
    \begin{minipage}[b]{0.3\textwidth}
    \centering
    % This file was created by matlab2tikz.
%
%The latest updates can be retrieved from
%  http://www.mathworks.com/matlabcentral/fileexchange/22022-matlab2tikz-matlab2tikz
%where you can also make suggestions and rate matlab2tikz.
%
\definecolor{mycolor1}{rgb}{0.00000,0.44700,0.74100}%
\definecolor{mycolor2}{rgb}{0.85000,0.32500,0.09800}%
\definecolor{mycolor3}{rgb}{0.92900,0.69400,0.12500}%
\begin{tikzpicture}

\begin{axis}[%
width=1.3in,
height=1.6in,
at={(0.729in,0.583in)},
scale only axis,
xmin=0,
xmax=0.8,
xlabel style={font=\color{white!15!black}},
xlabel={Level of delivery demand},
ymin=600,
ymax=4800,
ylabel style={font=\color{white!15!black}},
ylabel={Number of drivers},
axis background/.style={fill=white},
legend style={at={(0.97,0.03)}, anchor=south east, legend cell align=left, align=left, draw=white!15!black, nodes={scale=0.6, transform shape}}
]
\addplot [color=mycolor1, line width=1.0pt]
  table[row sep=crcr]{%
0	3586.93052445673\\
0.1	3703.08378664974\\
0.2	3825.29699850254\\
0.3	3955.66050699846\\
0.4	4091.85399355337\\
0.5	4229.03834568042\\
0.6	4360.64125601154\\
0.7	4492.35712014471\\
0.8	4613.58051828568\\
};
\addlegendentry{p=1,q=0.5}

\addplot [color=mycolor2, line width=1.0pt]
  table[row sep=crcr]{%
0	2858.56733295984\\
0.1	2972.97191962651\\
0.2	3103.00369568455\\
0.3	3228.76445408287\\
0.4	3355.19997147429\\
0.5	3486.42495430238\\
0.6	3627.81246937046\\
0.7	3753.60973906537\\
0.8	3884.54350697233\\
};
\addlegendentry{p=0.5,q=0.5}

\addplot [color=mycolor3, line width=1.0pt]
  table[row sep=crcr]{%
0	1772.60967886192\\
0.1	1858.98256459651\\
0.2	1911.51480888152\\
0.3	2024.33995268074\\
0.4	2082.05312065028\\
0.5	2202.98386101754\\
0.6	2323.96572008508\\
0.7	2438.80382220689\\
0.8	2549.41365659861\\
};
\addlegendentry{p=0.5,q=0.25}

\end{axis}
\end{tikzpicture}%
    \vspace*{-0.3in}
    \caption{The total number of drivers under different level of delivery demand.}
    \label{fig:drivers_p_q}
    \end{minipage}
    \begin{minipage}[b]{0.03\textwidth}
    \hfill
    \end{minipage}
    \begin{minipage}[b]{0.3\textwidth}
    \centering
    % This file was created by matlab2tikz.
%
%The latest updates can be retrieved from
%  http://www.mathworks.com/matlabcentral/fileexchange/22022-matlab2tikz-matlab2tikz
%where you can also make suggestions and rate matlab2tikz.
%
\definecolor{mycolor1}{rgb}{0.00000,0.44700,0.74100}%
\definecolor{mycolor2}{rgb}{0.85000,0.32500,0.09800}%
\definecolor{mycolor3}{rgb}{0.92900,0.69400,0.12500}%
\begin{tikzpicture}

\begin{axis}[%
width=1.3in,
height=1.6in,
at={(0.729in,0.583in)},
scale only axis,
xmin=0,
xmax=0.8,
xlabel style={font=\color{white!15!black}},
xlabel={Level of delivery demand},
ymin=20,
ymax=190,
ylabel style={font=\color{white!15!black}},
ylabel={Passenger arrival rate},
axis background/.style={fill=white},
legend style={at={(0.97,0.03)}, anchor=south east, legend cell align=left, align=left, draw=white!15!black, nodes={scale=0.6, transform shape}}
]
\addplot [color=mycolor1, line width=1.0pt]
  table[row sep=crcr]{%
0	167.311358207993\\
0.1	167.324026046956\\
0.2	167.820078181196\\
0.3	168.606695903133\\
0.4	169.78974642499\\
0.5	171.016057484186\\
0.6	172.001214132678\\
0.7	172.914697005952\\
0.8	173.495198531011\\
};
\addlegendentry{p=1,q=0.5}

\addplot [color=mycolor2, line width=1.0pt]
  table[row sep=crcr]{%
0	108.754163094824\\
0.1	110.638366812742\\
0.2	112.863855007047\\
0.3	113.777888799065\\
0.4	114.356561599684\\
0.5	115.297083849742\\
0.6	116.957909360002\\
0.7	117.223692989524\\
0.8	117.403259710235\\
};
\addlegendentry{p=0.5,q=0.5}

\addplot [color=mycolor3, line width=1.0pt]
  table[row sep=crcr]{%
0	76.0614305870863\\
0.1	77.2457330960522\\
0.2	76.2738199346913\\
0.3	76.675588981775\\
0.4	75.94786197492\\
0.5	75.8711555842308\\
0.6	75.5903433992803\\
0.7	75.0981175991613\\
0.8	74.4979500892596\\
};
\addlegendentry{p=0.5,q=0.25}

\end{axis}
\end{tikzpicture}%
    \vspace*{-0.3in}
    \caption{The arrival rate of total passengers (/min) under different level of delivery demand.}
    \label{fig:passenger_p_q}
    \end{minipage}
    \end{figure}

\subsubsection{Parcel demand in the opposite direction}

To test the robustness of our proposed framework, we conduct the case study under another extreme demand pattern where the potential delivery demand and the ride-sourcing demand are in opposite directions (i.e., the potential delivery demand is inversely proportional to the potential ride-sourcing demand).
%{\color{blue}For example, we sort the potential ride-sourcing demand for each OD pair from high to low as: $\lambda_{m_1n_1}^{r,0}\ge \lambda_{m_2n_2}^{r,0}\ge\dots\ge\lambda_{m_{M^2}n_{M^2}}^{r,0}$. When the ratio of delivery demand to ride-sourcing demand is 0.4, the potential delivery demand pattern for opposite-direction case is defined by:
%    \begin{align}
%        \lambda_{m_1n_1}^{d,0} = 0.4\lambda_{m_{M^2}n_{M^2}}^{r,0},\quad \lambda_{m_2n_2}^{d,0} = 0.4\lambda_{m_{M^2-1}n_{M^2-1}}^{r,0},\quad\dots\quad,\lambda_{m_{M^2}n_{M^2}}^{d,0} = 0.4\lambda_{m_1n_1}^{r,0}.
%    \end{align}}
In this case, we rank all origin-destination (OD) pairs of ride-sourcing demand based on their magnitude  and then reshuffle their order to generate parcel delivery demand, so that the OD pair with the largest demand in ride-sourcing market after reshuffling becomes the OD pair with the smallest demand in the parcel delivery market.

To evaluate the impacts of the proposed model on the ride-sourcing market, we gradually increase the level of parcel delivery demand, which is defined to be the ratio between the delivery demand to the inverse of the ride-sourcing demand. We perturb the demand level from 0 to 0.8. The corresponding platform profits, total number of drivers on the platform, and total arrival rate of ride-sourcing passengers are presented in Figure \ref{fig:profits_oppo_onlyboth}-\ref{fig:passenger_oppo_onlyboth}.
The results indicate that the platform and drivers can benefit from the integrated business model under the opposite-direction demand pattern, but passengers will only receive benefits when the delivery demand is not large. We comment that the benefits to passengers, drivers, and the platform are intuitive because the proposed model brings more revenues to the platform, leads to better utilization of driver's idle time, which reduces the costs for drivers and potentially benefit ride-sourcing passengers. However, since the delivery order is not in the same direction with ride-sourcing orders, when the number of delivery order is large, the integration of delivery serves may change the spatial distribution of the drivers and motivate significant amount of drivers to spend time in remote areas where ride-sourcing demand is low but delivery demand is high. In this case, since the ride-sourcing passengers are mainly concentrated in the city center, the change of driver's spatial distribution may negatively affects the passengers.

\begin{figure}[t!]
\begin{minipage}[b]{0.27\textwidth}
\centering
% This file was created by matlab2tikz.
%
%The latest updates can be retrieved from
%  http://www.mathworks.com/matlabcentral/fileexchange/22022-matlab2tikz-matlab2tikz
%where you can also make suggestions and rate matlab2tikz.
%
\definecolor{mycolor1}{rgb}{0.00000,0.44700,0.74100}%
\definecolor{mycolor2}{rgb}{0.85000,0.32500,0.09800}%
\begin{tikzpicture}

\begin{axis}[%
width=1.3in,
height=1.6in,
at={(0.729in,0.583in)},
scale only axis,
xmin=0,
xmax=0.8,
xlabel style={font=\color{white!15!black}},
xlabel={Level of delivery demand},
ymin=1000,
ymax=3200,
ylabel style={font=\color{white!15!black}},
ylabel={Platform profits},
axis background/.style={fill=white},
legend style={at={(0.03,0.97)}, anchor=north west, legend cell align=left, align=left, draw=white!15!black}
]
\addplot [color=mycolor1, line width=1.0pt]
  table[row sep=crcr]{%
0	1193.23021737378\\
0.1	1309.16969034233\\
0.2	1473.92373928806\\
0.3	1788.45343075055\\
0.4	2054.51203468242\\
0.5	2322.65071759292\\
0.6	2592.05282655\\
0.7	2862.53744865801\\
0.8	3134.25979215423\\
};

\end{axis}
\end{tikzpicture}%
\vspace*{-0.3in}
\caption{The profits of the platform (\$/min) under different delivery demand size with opposite directions.}
\label{fig:profits_oppo_onlyboth}
\end{minipage}
\begin{minipage}[t]{0.03\textwidth}
\hfill
\end{minipage}
\begin{minipage}[b]{0.27\textwidth}
\centering
% This file was created by matlab2tikz.
%
%The latest updates can be retrieved from
%  http://www.mathworks.com/matlabcentral/fileexchange/22022-matlab2tikz-matlab2tikz
%where you can also make suggestions and rate matlab2tikz.
%
\definecolor{mycolor1}{rgb}{0.00000,0.44700,0.74100}%
\definecolor{mycolor2}{rgb}{0.85000,0.32500,0.09800}%
\begin{tikzpicture}

\begin{axis}[%
width=1.3in,
height=1.6in,
at={(0.729in,0.583in)},
scale only axis,
xmin=0,
xmax=0.8,
xlabel style={font=\color{white!15!black}},
xlabel={Level of delivery demand},
ymin=3400,
ymax=4700,
ylabel style={font=\color{white!15!black}},
ylabel={Number of drivers},
axis background/.style={fill=white},
legend style={at={(0.03,0.97)}, anchor=north west, legend cell align=left, align=left, draw=white!15!black}
]
\addplot [color=mycolor1, line width=1.0pt]
  table[row sep=crcr]{%
0	3586.93052445673\\
0.1	3678.92777461182\\
0.2	3841.55541556484\\
0.3	4036.20237338647\\
0.4	4178.97353379793\\
0.5	4308.4677347341\\
0.6	4432.68454882366\\
0.7	4548.29775216884\\
0.8	4662.83091648347\\
};

\end{axis}
\end{tikzpicture}%
\vspace*{-0.3in}
\caption{The total number of drivers under different delivery demand size with opposite directions.}
\label{fig:drivers_oppo_onlyboth}
\end{minipage}
\begin{minipage}[t]{0.03\textwidth}
\hfill
\end{minipage}
\begin{minipage}[b]{0.27\textwidth}
\centering
% This file was created by matlab2tikz.
%
%The latest updates can be retrieved from
%  http://www.mathworks.com/matlabcentral/fileexchange/22022-matlab2tikz-matlab2tikz
%where you can also make suggestions and rate matlab2tikz.
%
\definecolor{mycolor1}{rgb}{0.00000,0.44700,0.74100}%
\definecolor{mycolor2}{rgb}{0.85000,0.32500,0.09800}%
\begin{tikzpicture}

\begin{axis}[%
width=1.3in,
height=1.6in,
at={(0.729in,0.583in)},
scale only axis,
xmin=0,
xmax=0.8,
xlabel style={font=\color{white!15!black}},
xlabel={Level of delivery demand},
ymin=166,
ymax=180,
ylabel style={font=\color{white!15!black}},
ylabel={Passenger arrival rate},
axis background/.style={fill=white},
legend style={at={(0.03,0.97)}, anchor=north west, legend cell align=left, align=left, draw=white!15!black}
]
\addplot [color=mycolor1, line width=1.0pt]
  table[row sep=crcr]{%
0	167.311358207993\\
0.1	168.950090438665\\
0.2	173.822785482057\\
0.3	178.227097973819\\
0.4	178.139276067127\\
0.5	177.787650692898\\
0.6	177.12933146037\\
0.7	175.968717929863\\
0.8	174.653558230395\\
};

\end{axis}
\end{tikzpicture}%
\vspace*{-0.3in}
\caption{Arrival rate of passengers (/min) under different delivery demand size with opposite directions.}
\label{fig:passenger_oppo_onlyboth}
\end{minipage}

     \centering
     \begin{subfigure}[The potential demand (/min) of ride-sourcing passengers.]{
         \centering
         \includegraphics[width=0.47\textwidth]{Figure_V2/potential_passenger_demand.pdf}
         \label{fig:potential_passenger_demand_oppo}}
     \end{subfigure}
     \begin{subfigure}[The potential demand of delivery customers (/min) under opposite directions.]{
         \centering
         \includegraphics[width=0.47\textwidth]{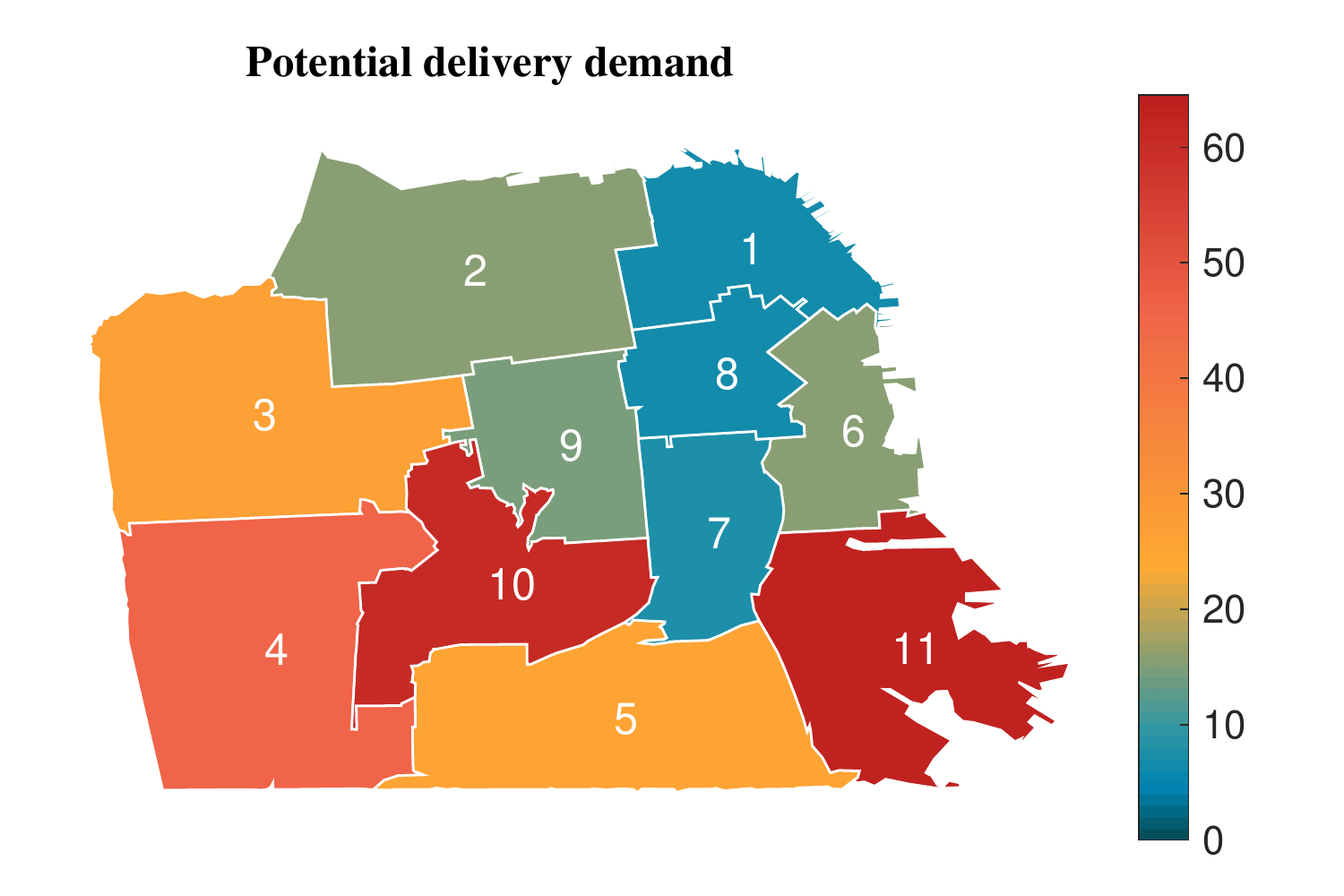}
         \label{fig:potential_delivery_demand_oppo}}
     \end{subfigure}
        \caption{Potential demand of ride-sourcing passengers and delivery customers under the opposite directions.}
        \label{fig:potential_demand_oppo}

     \centering
     \begin{subfigure}[The change of idle drivers for each zone as a ratio of the number of idle drivers at ride-only case]{
         \centering
         \includegraphics[width=0.47\textwidth]{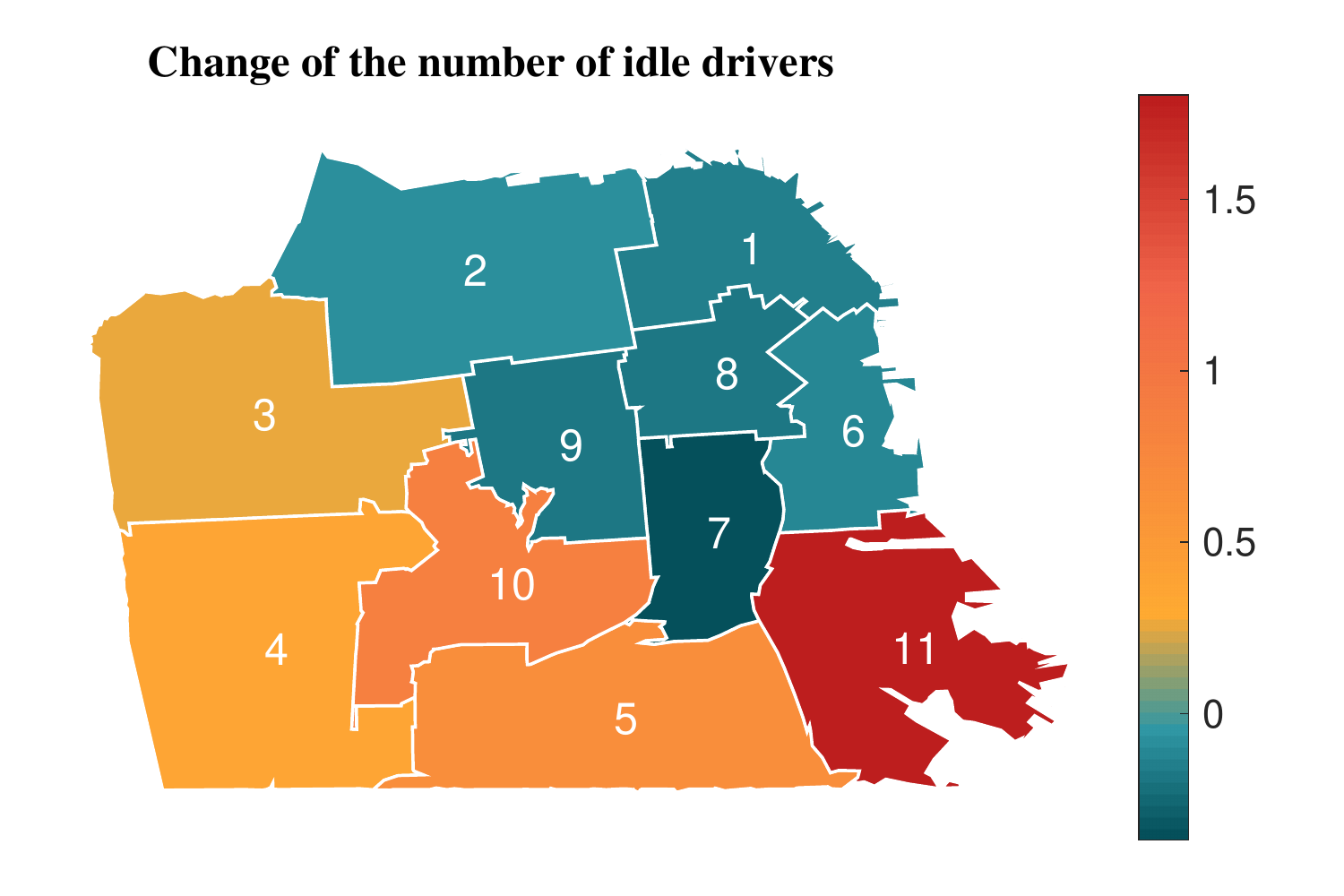}
         \label{fig:idle_drivers_oppo}}
     \end{subfigure}
     \begin{subfigure}[The change of passengers arrival rate for each zone as a ratio of the passenger arrival rate at ride-only case.]{
         \centering
         \includegraphics[width=0.47\textwidth]{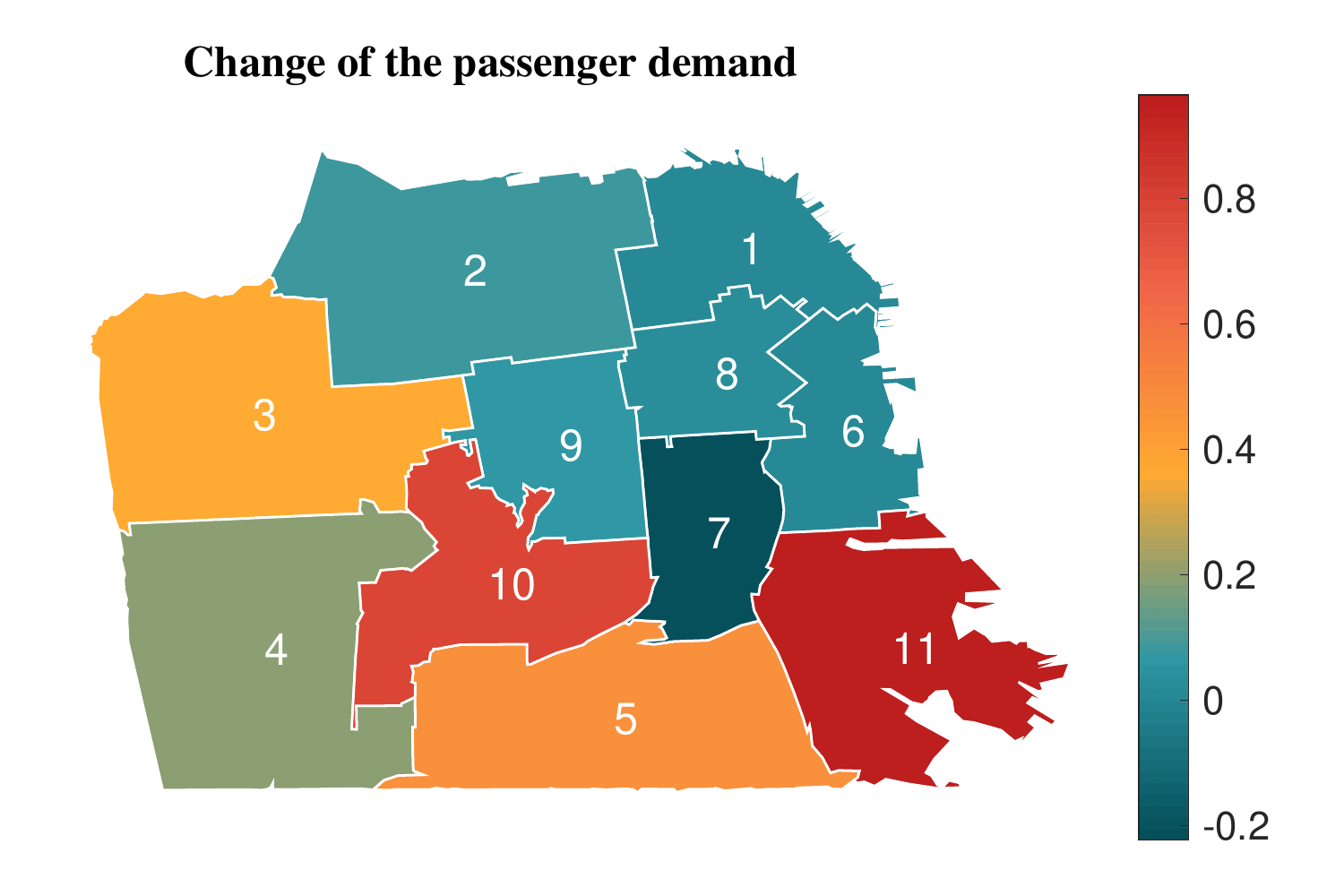}
         \label{fig:passengers_oppo}}
     \end{subfigure}
        \caption{The change of idle drivers and passenger demand for each zone in percentage.}
        \label{fig:idle_drivers_passengers_oppo}
\end{figure}

\begin{figure}[t]
     \centering
     \begin{subfigure}[The demand for flexible delivery (/min) under opposite direction (as destination).]{
         \centering
         \includegraphics[width=0.47\textwidth]{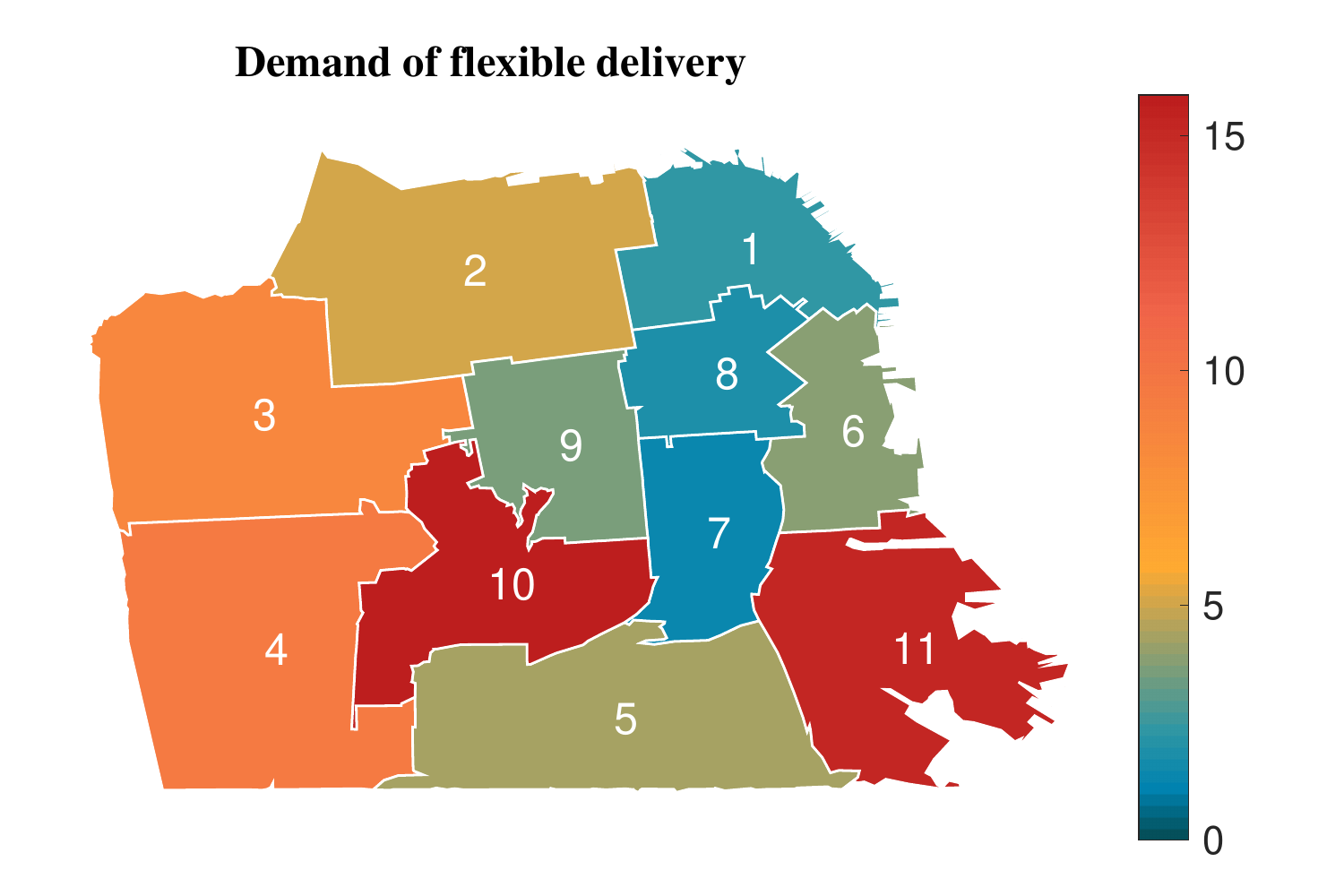}
         \label{fig:demand_flexible_oppo}}
     \end{subfigure}
     \begin{subfigure}[The demand for on-demand delivery (/min) under opposite direction (as destination).]{
         \centering
         \includegraphics[width=0.47\textwidth]{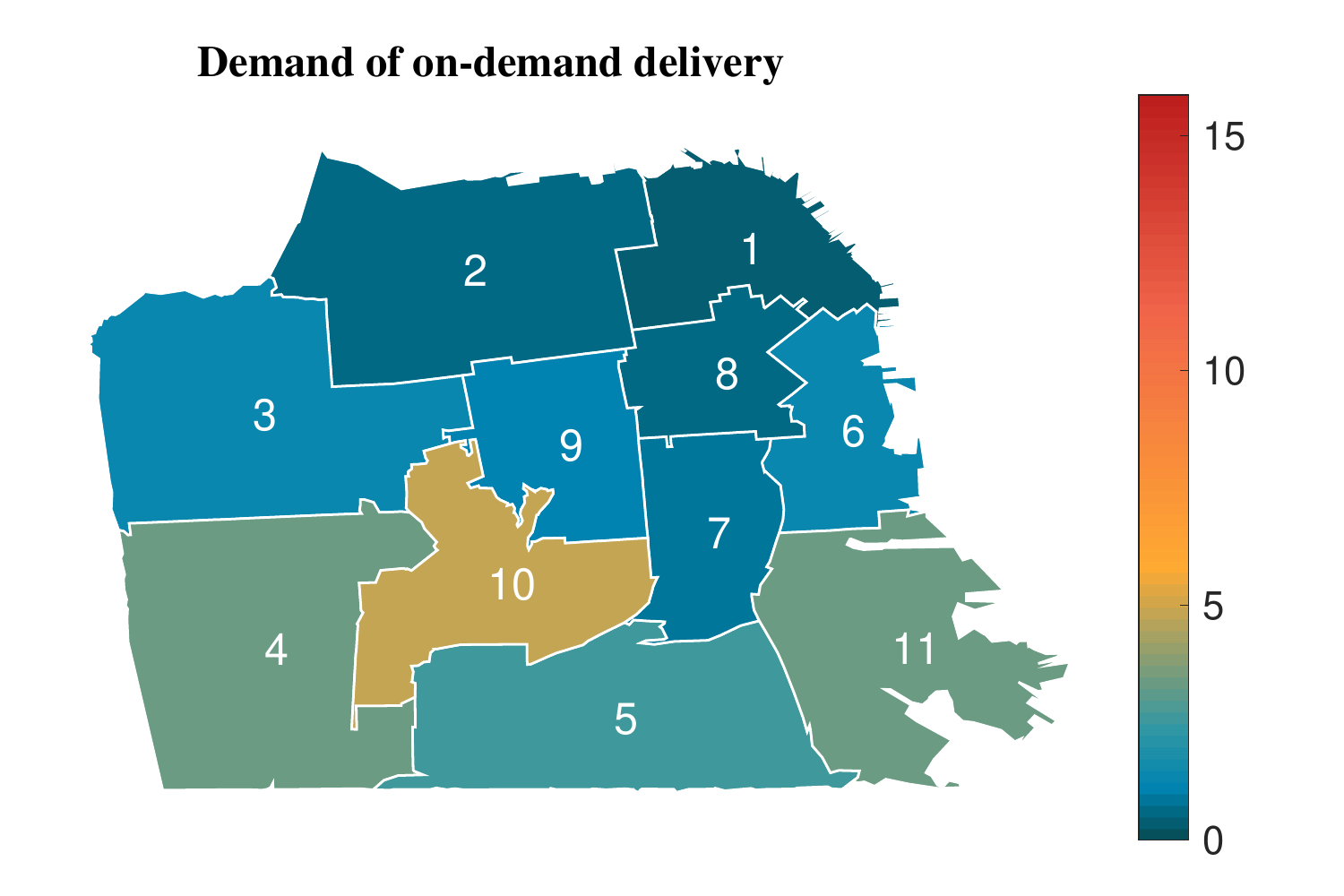}
         \label{fig:demand_fix_oppo}}
     \end{subfigure}
        \caption{Demand for delivery services with destination to each zone under opposite directions.}
        \label{fig:demand_oppo}
\end{figure}

\begin{figure}[t]
     \centering
     \begin{subfigure}[The demand for flexible delivery (/min) under opposite direction (as origin).]{
         \centering
         \includegraphics[width=0.47\textwidth]{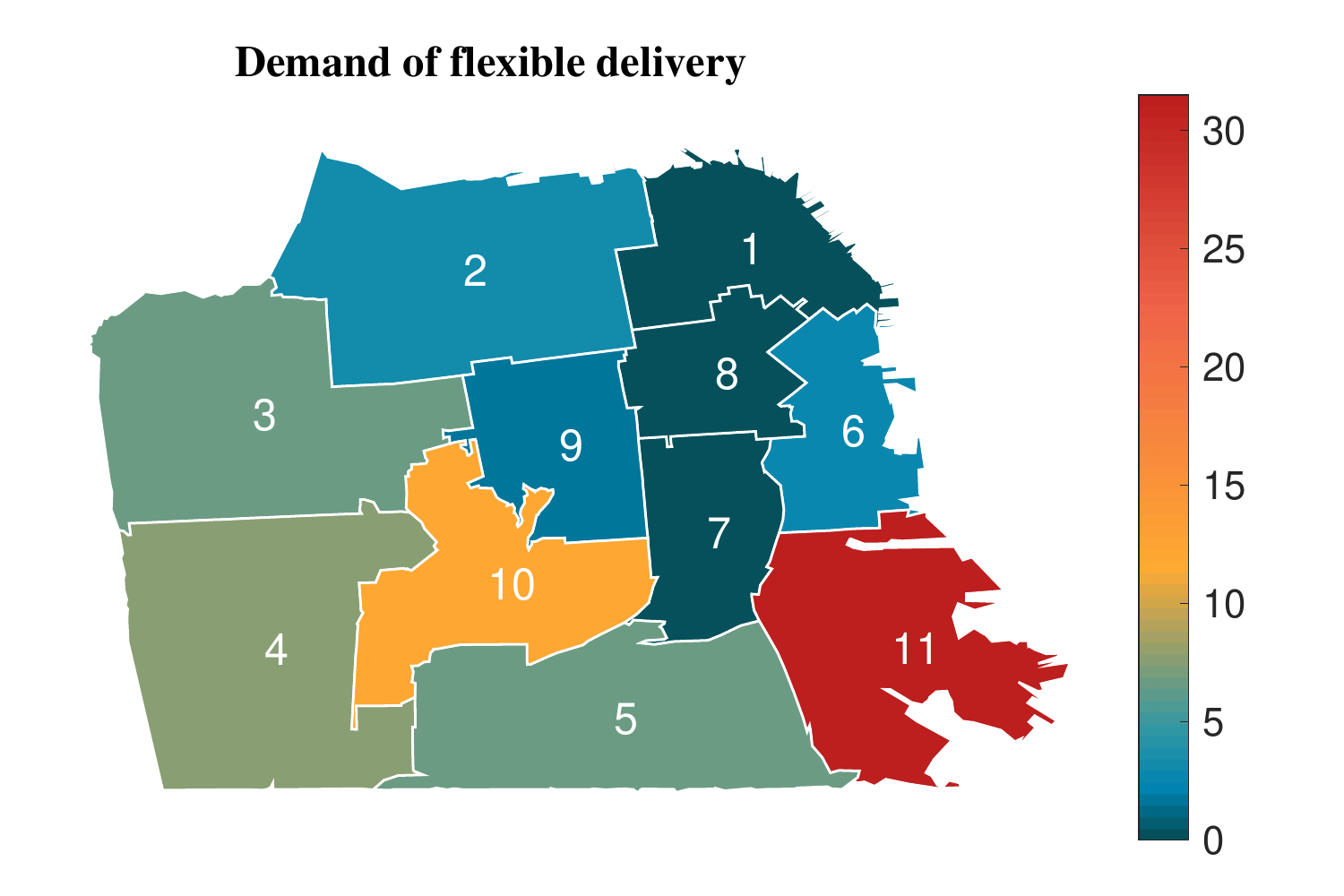}
         \label{fig:demand_flexible_oppo_origin}}
     \end{subfigure}
     \begin{subfigure}[The demand for on-demand delivery (/min) under opposite direction (as origin).]{
         \centering
         \includegraphics[width=0.47\textwidth]{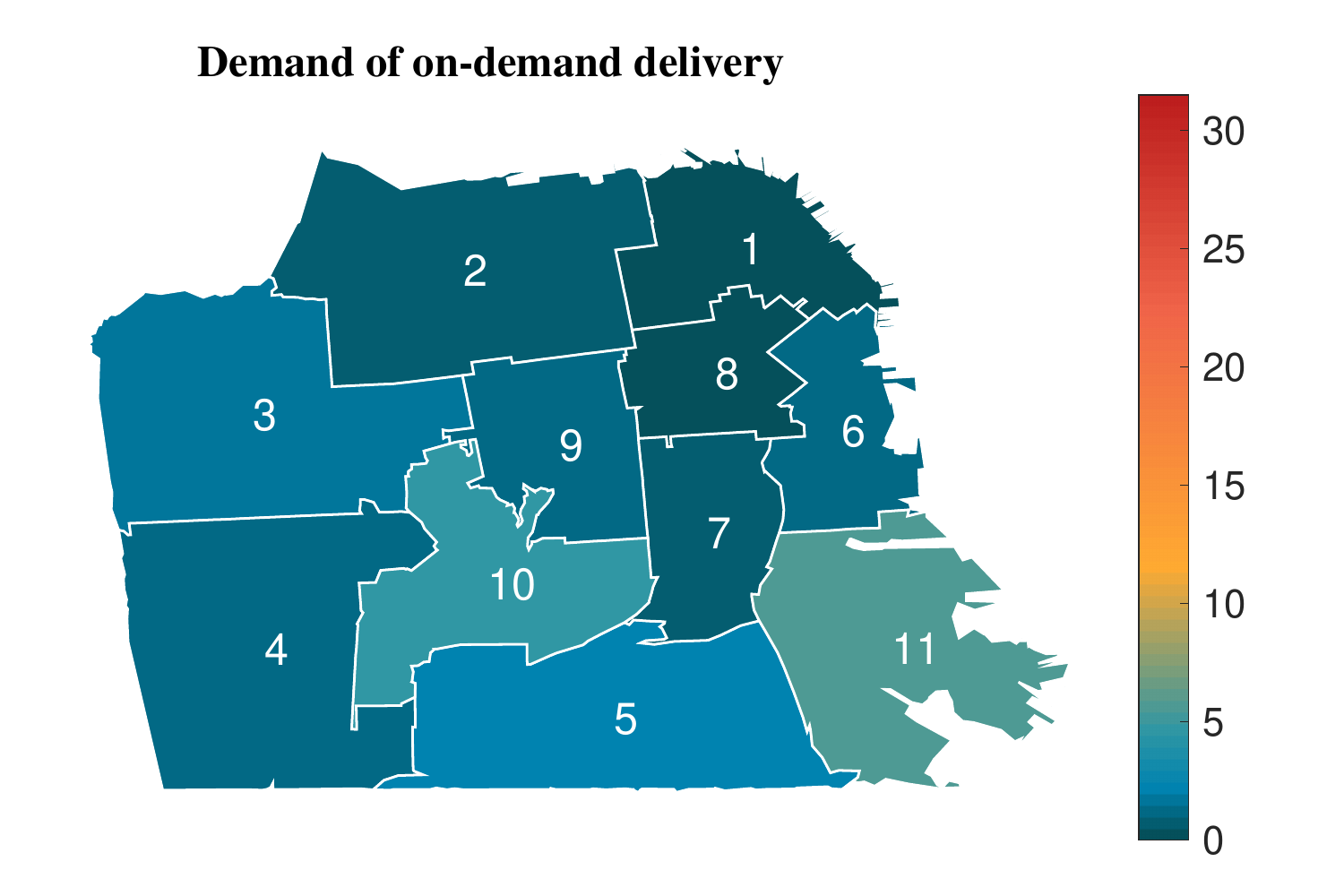}
         \label{fig:demand_fix_oppo_origin}}
     \end{subfigure}
        \caption{Demand for delivery services with origin from each zone under opposite directions.}
        \label{fig:demand_oppo_origin}
\end{figure}

To further understand the market outcomes over the transportation network, we fix the level of delivery demand, and demonstrate the results at the zonal level.  Similar to the case of same direction, we choose the ratio of delivery demand to ride-sourcing demand to be 0.4 (e.g. $\lambda_{ij}^{d,0} = 0.4\lambda_{ij}^{r,0}$) as an example.
The spatial distributions of potential ride-sourcing demand and potential delivery demand are shown in Figure \ref{fig:potential_demand_oppo}.
The value corresponding to each zone is its total delivery demand as destination (e.g. the value for zone $i$ is $\sum_{j=1}^M\lambda_{ji}^{d,0}$).
Figure \ref{fig:potential_demand_oppo} illustrates that the city center (i.e., the northeast of the city) has highest ride-sourcing demand, but lowest parcel-delivery demand.

Figure \ref{fig:idle_drivers_passengers_oppo} shows the impacts of the integrated model on the ride-sourcing market at the zonal level under the case of opposite directions.
Specifically, Figure \ref{fig:idle_drivers_oppo} and \ref{fig:passengers_oppo} demonstrate the change of idle drivers and arrival rate of ride-sourcing passengers in each zone in percentage compared to the case of ride-sourcing services only, respectively.
We observe that the number of idle drivers in high-demand areas (zone 1, 2, 6-9) decreases, while the number of idle drivers in lower-demand areas increases (Figure \ref{fig:idle_drivers_oppo}).
For the remote zones with less ride-sourcing passenger demand (e.g., zone 3, 4, 5, 10 and 11), the delivery customer demand are relatively high, thus the platform is motivated to dispatch idle drivers in high-demand areas to move to the these remote areas so that the drivers supply can meet the expansive delivery customer demand. The ride-sourcing demand in the remote areas also experience a large increase as shown in Figure \ref{fig:passengers_oppo}. This is consistent with the fact that the increase of idle drivers in these zones can reduce passengers' waiting time and attract more passengers to use ride-sourcing services. The above observation highlights that within certain regime, delivery demand in the opposite direction with the ride-sourcing demand may benefit all stakeholders  in lower demand areas by increased platform profits, idle drivers and arrival rate of passengers at the expense of reduced passenger demand in higher-demand areas. At the same time it could improve the spatial equity of ride-sourcing services over the transportation networks.

%\subsubsection{Comparison between flexible delivery and on-demand delivery under opposite directions}

We also evaluate the outcomes of the delivery market at the zonal level by presenting the split of orders between  on-demand and flexible delivery orders in each zone. 
Figure \ref{fig:demand_oppo} represents the total delivery order attractions to each zone, and  Figure \ref{fig:demand_oppo_origin} represents the total delivery order productions corresponding to each zone.
Interestingly, the results are significantly different from that in the realistic case. In particular, the results of Figure \ref{fig:demand_oppo} indicate that when looking at the destination of the delivery order, no matter where the destination is, both flexible and on-demand services are available  for all customers, and there are more customers choosing to use flexible delivery than on-demand delivery in all zones including the remote areas.
We comment that this is because of a large number of delivery orders was sent out from the remote areas, which increase the probability for drivers to arrive at remote zones, shortening the delivery time of flexible delivery orders. Under a lower price, it makes the flexible delivery services more attractable compared to the on-demand delivery services.
On the other hand, Figure \ref{fig:demand_oppo_origin} illustrates that when we look at origins of the delivery orders, the platform gives up the flexible orders sent from high-demand areas (zone 1, 2, 6, 7, 8, 9), thus customers who send parcels from these areas can only use on-demand services.This is because in those zones, the high ride-sourcing demand attracts a large number of idle drivers, but under the opposite-direction demand pattern, the delivery demand in those areas is relatively low. The waiting time for idle drivers to be matched with a flexible delivery order is too long such that the successful rate for a idle driver to pick up a flexible delivery parcel before be dispatched to a ride-souring order is very small. Therefore, the expensive cost of the long waiting time motivates customers to choose on-demand delivery services instead of flexible services.
This is contrary to the case of same direction, where the platform only offers on-demand services to remote areas when we look at destination, but the flexible delivery services are available everywhere when we look at the origin.

\section{Conclusions}

This paper investigates the optimal pricing and fleet management strategies for an integrated platform to provide integrated ride-sourcing and intracity package delivery services over a transportation network utilizing the idle time of ride-sourcing drivers. 
We consider an integration of ride-sourcing services and two modes of parcel delivery services on a single platform. In this model, parcel delivery services consist of (1) on-demand delivery, in which drivers must immediately pick up and deliver the goods when the delivery order is placed; and (2) flexible delivery, in which drivers can pick up (or drop off) the parcel only when drivers are idle and waiting for the next ride-sourcing request, and close to the origin or destination of the delivery order. A continuous-time Markov Chain (CTMC) model is proposed to capture the simultaneous movement of passengers and parcels across a transportation network with limited vehicle capacity. The model allows for the characterization of the service quality of ride-sourcing services, on-demand delivery services, and flexible delivery services, as well as the economic equilibrium arising from the incentives of ride-sourcing passengers, delivery customers, drivers, and the platform. The platform's profit maximization problem is formulated as a non-convex optimization. We utilize the structure of interdependence between distinct endogenous variables to prove the well-posedness of the model and develop a customized algorithm that can rapidly compute the optimal platform decisions. The proposed model is validated by a comprehensive numerical studied. We show  that the joint management of ride-sourcing services and intracity package delivery services can result in a Pareto improvement that is beneficial to all market participants under realistic ride-sourcing and parcel delivery demand patterns.

This work can be extended along several directions. First, the static model can be extended to capture temporal dynamics of the demand and supply, and provide distinct managerial strategies at different time  of a day. In addition, a more comprehensive matching and routing mechanisms for drivers can be examined in order to capture more details of the transportation network. Finally, this work can be extended to a more complex multi-modal transportation network taking public transit as well as other forms of shared mobility into account.

\section*{Acknowledgments}  {This research was supported by Hong Kong Research Grants Council under project 26200420.}

\bibliographystyle{unsrt}
\bibliography{references}

\begin{appendices}
%%%%%%%%%%%%%%%%%%%%%%%%%%%%%

\section{Notations}\label{append:notation}
\begin{table}[H]
  \centering
  \caption{Summary of Notations for parameters}
    \begin{tabular}{ll}
    \toprule
    Notation & Definition \\
    \midrule
    \bf{Parameters} \\
    $\alpha_r$ & Passengers' value of time  \\
    $\alpha_d$ & Delivery customers' value of time\\
    $\epsilon$ & Sensitivity parameter for the passenger's choice model \\
    $\eta$ & Sensitivity parameter for the delivery customer's choice model\\
    $\sigma$ & Sensitivity parameter for the driver's choice model \\
    $\lambda_{ij}^{r,0}$ & Potential arrival rate of ride-sourcing passengers \\
    $\lambda_{ij}^{d,0}$ & Potential arrival rate of delivery customers\\
    $L_i$ & Scaling parameter in the matching function\\
    $M$ & The number of zones\\
    $N_0$ & Total number of for-hire drivers\\
    $c_{ij}^{r,0}$ & Average generalized cost of outside options for ride-sourcing passengers\\
    $c_{ij}^{d,0}$ & Average generalized cost of outside options for delivery customers\\
    $t_{ij}$ & Average ride-sourcing travel time from zone $i$ to zone $j$ \\
    $t_i^g$ & Average package drop-off time in zone $j$ \\
    $q_0$ & Average wages of outside options for drivers\\
    $w_{max}$ & Upper bound of passenger waiting time\\
        \bottomrule
    \end{tabular}%
  \label{tab:notation}%
\end{table}%

\begin{table}[H]
  \centering
  \caption{Summary of Notations for variables}
    \begin{tabular}{ll}
    \toprule
    Notation & Definition \\
    \midrule
    \bf{Endogenous variables} \\
    $\lambda_{ij}^r$ & Arrival rate of ride-sourcing passengers\\
    $\lambda_{ij}^{d_f}$ & Arrival rate of flexible delivery customers \\
    $\lambda_{ij}^{d_o}$ & Arrival rate of on-demand delivery customers \\
    $\nu_{z,n}$ & The average spent time in state $(z,n)$ in the CTMC\\
    $N_i^I$ & The number of idle drivers in zone $i$\\
    $\bar N_i^I$  & The number of drivers available for flexible pickup\\
    $N_i^{Ig}$  & The number of drivers that can successfully pick up a flexible package\\
    $N_{(z,n)}^I$ & The number of idle drivers in zone $z$ carrying $n$ flexible packages\\
    $P_{ij}$ & The probability for a driver in zone $i$ to move to zone $j$ \\
    $P_{(z,n)(z',n')}^c$ & The transition probability for a driver in the CTMC \\
    $P_{(z,n)}^c$ & The limiting probability of the CTMC \\
    $S_{ij}$ & The transition time for a driver in zone $i$ to zone $j$\\
    $T_{ij}$ & The time for a driver from zone $i$ to first arrive at zone $j$\\
    $c_{ij}^r$  & The average generalized cost for the ride-sourcing passengers \\
    $c_{ij}^{d_f}$ & The average generalized cost for the flexible delivery customers \\
    $c_{ij}^{d_o}$ & The average generalized cost for the on-demand delivery customers \\
    $w_i^I$ & Average waiting time for drivers to receive an on-demand order \\
    $w_i^r$ & Average waiting time for ride-sourcing passengers/on-demand delivery customers\\
    $w_i^{d_f}$ & Average waiting time for flexible delivery customers \\ 
    $w_i^{dg}$ & Average waiting time for idle drivers to be matched with a flexible delivery order\\
    $p_{i,pick}^{succ}$/$p_{i,drop}^{succ}$ & Successful rate for flexible package pickup/drop-off\\
    $p_{i,pick}^n$/$p_{i,drop}^n$ & Probability for an idle driver in zone $i$ carrying $n$ flexible parcels to \\
     & successfully pick up/drop off a new flexible package\\
     $p_{flex,i}^n$ & Probability for a driver with $n$ carrying flexible packages to drop off at zone $i$\\
     $t_{ij}^{d_f}$ & Average delivery time for a flexible package from zone $i$ to zone $j$\\
     $\bar t_i^g$ & Average time for a driver in zone $i$ to pick up a flexible package\\
    \bf{Decision variables} \\
    $r_i^r$ & Average per-time trip fare for ride-sourcing passengers \\
    $r_{ij}^{d_f}$ & Average delivery fee for flexible delivery services \\
    $q$ & Average hourly wages paid to drivers\\
    \bottomrule
    \end{tabular}%
  \label{tab:notation_2}%
\end{table}%

\end{appendices}

%%%%%%%%%%%%%%%%%%%%%%%%%%%%%%%%%%%%

\end{document}